\theoremstyle{plain}
\newtheorem{thm}{Theorem}
\newtheorem{defi}{Definition}
\newtheorem{prop}{Proposition}
\def\theequation{\arabic{section}.\arabic{equation}}
\newcommand{\be}{\begin{eqnarray}}
\newcommand{\ee}{\end{eqnarray}}
\newcommand{\nn}{\nonumber \\}
\newcommand{\lb}{\label}
\newcommand{\p}[1]{(\ref{#1})}
\begin{document}

\begin{titlepage}

\vspace*{0.2cm}

\renewcommand{\thefootnote}{\star}
\begin{center}

{\LARGE\bf  Bi-HKT and bi-K\"ahler supersymmetric sigma  }\\

\vspace{0.5cm}

{\LARGE\bf models.}\\

\vspace{1.5cm}
\renewcommand{\thefootnote}{$\star$}

{\large\bf Sergey~Fedoruk} ${}^\star$,
\quad {\large\bf Andrei~Smilga} ${}^\ast$
 \vspace{0.5cm}

{${}^\star$ \it Bogoliubov Laboratory of Theoretical Physics, JINR,}\\
{\it 141980 Dubna, Moscow region, Russia} \\
{\it On leave of absence from V.N.\,Karazin Kharkov National University, Ukraine} \\
\vspace{0.1cm}

{\tt fedoruk@theor.jinr.ru}\\
\vspace{0.7cm}

{${}^\ast$\it SUBATECH, Universit\'e de Nantes,}\\
{\it 4 rue Alfred Kastler, BP 20722, Nantes 44307, France;}\\
\vspace{0.1cm}

{\tt smilga@subatech.in2p3.fr}\\

\end{center}
\vspace{0.2cm} \vskip 0.6truecm \nopagebreak

   \begin{abstract}
\noindent
We study CKT  (or  bi-HKT)  ${\cal N} = 4$
supersymmetric quantum mechanical sigma models.
They are characterized by the usual and the mirror sectors displaying each HKT geometry.
When the metric involves isometries, a Hamiltonian reduction is possible.
The most natural such reduction
with respect to a half of bosonic target space coordinates produces  an ${\cal N} = 4$ model,
 related
to the  twisted K\"ahler
model due to Gates, Hull and Rocek, but including certain extra $F$-terms in the superfield action.
  \end{abstract}

\vspace{1cm}
\bigskip
\noindent PACS: 11.30.Pb, 12.60.Jv, 03.65.-w, 03.70.+k, 04.65.+e

\smallskip
\noindent Keywords: sigma-model, supersymmetric mechanics, K\"ahler geometry, torsions \\
\phantom{Keywords: }

\newpage

\end{titlepage}

\setcounter{footnote}{0}

\setcounter{equation}0
\section{Introduction}

The HKT geometries
\footnote{HKT stands for ``hyper-K\"ahler with torsion''.
Though this terminology is probably not the best
(HKT manifolds are {\it not} hyper-K\"ahler and not even K\"ahler),
it is well established in the literature, and we will follow it. }
were first introduced
by physicists in the supersymmetric sigma model framework \cite{HKT}
(see also earlier papers
\cite{IvKrLev,SSTvP,DelVal,Papa95}
where some elements of the HKT structure were displayed)
and  were described
in pure mathematical terms in \cite{Gaud,Grant,Verb}.

For a mathematician, an HKT manifold is a complex manifold of a special kind.
It is endowed with {\it three} different integrable
complex structures satisfying
the quaternion algebra, which are covariantly constant with respect
to one and the same {\it Bismut} torsionful affine {\it connection} \cite{Mavra,Bismut}. When torsions vanish, we are dealing
with hyper-K\"ahler geometry.

One can show that the HKT manifolds are  characterized by a presence of
a closed holomorhic (with respect to any chosen complex structure)
(2,0)--form. This fact can also serve as an alternative definition.

More general geometries --- the so-called CKT (Clifford K\"ahler with torsion) and
OKT (Octonionic K\"ahler with torsion) manifolds were described by physicists in
 \cite{CKT,Hull,ILS,FIS2} and are still awaiting their appreciation by mathematicians.
 Our paper is devoted to the CKT models (or bi-HKT models;  we will see that this latter name
describes more adequately their specifics). We hope that the structures which we display here
may serve as a kind of scaffold to allow for their eventual mathematical description.

In the language used by  physicists, 
HKT supersymmetric quantum mechanical models are described by the
$(\bf{4}, \bf{4}, \bf{0})$
one-dimensional
superfields.
(We follow the notation of \cite{PT} such that the numerals count the numbers of the physical bosonic,
physical fermionic and auxiliary bosonic fields respectively.)
 Generically, one should use {\it nonlinear}  multiplets subject to complicated nonlinear
constraints \cite{DI}. In particular, the
latter are indispensible for describing the hyper-K\"ahlerian models and all the HKT models living on group manifolds. But a wide class of HKT models are described by the linear  $(\bf{4}, \bf{4}, \bf{0})$   multiplets \cite{Copa,IL,IKL}  where the constraints can be explicitly
resolved.

Alternatively, the supersymmetric quantum mechanics (SQM) models can be described in Hamiltonian language. This language is  close to the language used
by mathematicians. For example, in the SQM sigma model involving several
$(\bf{1}, \bf{2}, \bf{1})$ multiplets, the supercharge operator $Q$ is isomorphic to the well-known exterior derivative
operator of the de Rham complex.

The HKT models are characterized by the presence of four real supercharges
$Q_a$ satisfying the standard ${\cal N} = 4$ supersymmetry algebra,
  \be
\lb{SUSY-alg}
 \{Q_a, Q_b \} \ =\ \delta_{ab} H \, .
  \ee
 In recent \cite{QHKT, commentHKT}, an explicit construction of the supercharges of the HKT models was performed.
In \cite{QHKT}, this was done
in real notations, while in \cite{commentHKT}, we presented their complex expressions in terms of holomorphic coordinates.
The complex description involves the closed
holomorphic 2-form mentioned above. It is natural to call the associated antisymmetric tensor
 the {\it hypercomplex structure}.

 It was shown in \cite{taming} that different SQM sigma models are related to each other by Hamiltonian reduction and/or
similarity transformation of the holomorphic supercharges. For example, a Hamiltonian reduction of the Dolbeault model
gives a so-called {\it quasicomplex} de Rham model \cite{quasi}. We found in \cite{commentHKT} that a Hamiltonian reduction
of an HKT model gives a generalized  K\"ahler - de Rham model involving  holomorphic $F$-terms of a particular form
[see Eq.\p{act242-HKT} below]. This model
 represents a special kind of quasicomplex models as is clearly seen in its expression via ${\cal N} =2$ superfields.
In the following, we will call such models {\it quasicomplex K\"ahler models}.

A reader should be careful here. We will see later [Eq.\p{bez-A} and discussion thereafter] that, similar to the story of the HKT manifolds, the actual metric of the manifolds where
quasicomplex K\"ahler models live is {\it not} K\"ahler.   But we were not able to invent a better name. 

 Our new results are the following:

   \begin{enumerate}
   \item We analyzed generic CKT models that involve  three complex structures that are not necessarily quaternionic, but satisfy
the Clifford algebra
 \be
\label{Clifford}
\{I^p, I^q \} = -2\delta^{pq} \, .
  \ee
We have seen that such models involve generically two sectors
characterized each by HKT geometry.  This is related to the known mathematical fact that the enveloping
algebra of three real antisymmetric matrices  satisfying \p{Clifford} is a direct sum of two quaternion algebras
 \cite{Clif=H+H}. Therefore, this kind of models can also be called {\it bi-HKT} models, and that is what we do in the
following.
We have presented simple explicit expressions for two pairs of Hermitian conjugated supercharges.
The first pair represents the Dolbeault supercharges deformed by holomorphic torsions that bring about the nonconservation of the fermion charge \cite{Hull,FIS1}.
  The second pair is related to the first
one by the action of a certain discrete symmetry transformation, an automorphism of the ${\cal N} = 4$ supersymmetry algebra. In constrast to the ordinary HKT
models, the bi-HKT models involve two different hypercomplex structures.

\item The manifolds where the bi-HKT models live have an even complex dimension, their real dimension being an integral multiple of 4. In the case when the metric and the torsions do not depend  on a half of real coordinates (on a half of complex coordinates or on the imaginary parts of all complex coordinates), a Hamiltonian
reduction eliminating these degrees of freedom is possible. One can describe the model thus obtained in terms of the ordinary and mirror  chiral  ${\cal N} = 4$ $(\bf{2}, \bf{4}, \bf{2})$  multiplets. In this way, one obtains
the twisted K\"ahler (we prefer to call it {\it  bi-K\"ahler})
model of Ref. \cite{GHR} generalized by the inclusion  of certain holomorphic
terms in the superfield Lagrangian.

Alternatively, one can describe it in terms of the real ${\cal N} = 2$
({\bf 1}, {\bf 2}, {\bf 1}) superfields.  The reduced model has three distinguishable features:
 {\it i)} it belongs to the class
of  quasicomplex de Rham models with Hermitian (rather than just real) superfield metric,
 {\it ii)} it involves the holomorphic torsions \cite{Braden,torsion,FIS1},
 {\it iii)} it involves two different commuting complex structures whose Bismut torsions have an opposite sign. They take their origin in the two hypercomplex structures of the parent  bi-HKT model.

\end{enumerate}

The plan of the paper is the following. In the next section, we give a  detailed self-contained
 review of the relevant facts of the theory of HKT manifolds.
In Section 3, we discuss generic CKT $\equiv$ bi-HKT models, describe them in the superfield  and also in
the component language. We derive  the compact component expressions for the bi-HKT supercharges which reveal the geometric
structure of these models.
In Section 4, we present and discuss quasicomplex bi-K\"ahler models, derive their component Lagrangian and the supercharges.
We also describe in detail how the Hamiltonian reduction of a generic reducible HKT model and
a generic reducible bi-HKT model is performed. The last section is devoted, as usual, to some concluding remarks. In the Appendix, we present explicit expressions for components Lagrangians in the bi-HKT and bi-K\"ahler models.

\section{HKT sigma models and their Hamiltonian reduction.}

We start with reminding some basic facts of complex geometry.

\begin{defi}
A complex structure for an even-dimensional real manifold is an antisymmetric tensor
$I_{MN}$ satisfying the property  $I_{MN} I^{NK} = - \delta_M^K$ and also the integrability
condition (the vanishing of the Nijenhuis tensor), which can be written in the form
 \be
\lb{Nijen}
 \partial_{[M} I_{N]}{}^P \ =\ I_M{}^Q I_N{}^S \partial_{[Q} I_{S]}{}^P \, .
 \ee
\end{defi}

In the associalted supersymmetric sigma model, this condition allows one to construct
two real supercharges satisfying the algebra \p{SUSY-alg}.
For an integrable complex structure, one can introduce holomorphic coordinates, $x^M = \{z^{\cal M}, \bar z^{\bar {\cal M}} \}$ such that
the metric is Hermitian (one can always do it locally, but a nontrivial  property following from \p{Nijen}
is that the manifold can be divided
into a set of overlapping holomorphic charts with holomorphic glue functions),
 \be
\lb{def-h}
ds^2 = \ 2h_{{\cal M} \bar {\cal N}} dz^{\cal M}  d{\bar z}^{\bar {\cal N}}\, .
 \ee

In these coordinates,
the tensor $I_M{}^N$ has the following
nonzero components,
  \be
\lb{Icompl}
I_{\cal M}{}^{ {\cal N}} = -I^{\cal N}_{\ {\cal M}} = -i \delta_{\cal M}^{\cal N}, \qquad I_{\bar {\cal M}}{}^{\bar {\cal N}} = -I^{\bar {\cal N}}_{\ \bar {\cal M}} = i
\delta_{\bar {\cal M}}^{\bar {\cal N}} \, .
  \ee
It follows that $I_{{\cal M} {\bar {\cal N}}} = - I_{{\bar {\cal N}} {\cal M}} = -ih_{{\cal M}{\bar {\cal N}}}$.

\vspace{.2cm}

\begin{defi}
A K\"ahler manifold is a complex manifold  where $I_{MN}$ is covariantly constant,

  \be
\label{Lev-Civ}
\nabla_P I_{MN} = \ \partial_P I_{MN} - \Gamma^Q_{PM} I_{QN} -
\Gamma^Q_{PN} I_{MQ} \ =\  0 \, .
  \ee
\end{defi}

 It follows that the K\"ahler form $\Omega = I_{MN} \, dx^M \wedge dx^N$ is closed,
$d\Omega = 0$.

\vspace{.2cm}

For a generic complex manifold, the complex structure is not covariantly constant with respect to the usual Levi-Civita
connection defined in \p{Lev-Civ}, but {\it torsionful} connections
(where the Christoffel symbols are modified to $  \Gamma^Q_{PM} =  \Gamma^Q_{PM} + \frac 12\, g^{QS} C_{S[PM]}$)
satisfying the conditions
$\tilde {\nabla}_P g_{MN} = \tilde {\nabla}_P I_{MN} \ = 0$ exist.
If we require for the torsion tensor $C_{SPM}$  to be totally antisymmetric, there is only one such connection, called  Bismut
 connection \cite{Mavra,Bismut}.

Explicitly,
\be
\lb{CBismut}
 C^{(Bismut)}_{MNK}(I) \ =\  I_M{}^P I_N{}^Q I_K{}^R \Big(\nabla_P I_{QR} +  \nabla_Q I_{RP} +
\nabla_R I_{PQ}\Big) \, .
\ee

Another distinguished connection is the  Obata connection \cite{Obata}
(see \cite{Soldat} for a pedagogical discussion).
It is torsionless. The Obata covariant derivative
of the complex structure vanishes, but the metric in this case is not covariantly constant.
In all the models considered in this paper and based on the linear $(\bf{4}, \bf{4}, \bf{0})$ multiplets, the Obata curvature vanishes. One can  {\it conjecture} that the inverse is true and
 that all Obata-flat HKT manifolds can be described with linear $(\bf{4}, \bf{4}, \bf{0})$ multiplets.

\begin{defi}
 A hyper-K\"ahler manifold is a manifold with three different integrable
complex structures $I^p$ that satisfy the quaternion algebra
   \be
  \label{quatern}
   I^p I^q = -\delta^{pq} + i \epsilon^{pqr} I^r
    \ee
and are covariantly constant in a usual way \p{Lev-Civ}.
\end{defi}

\begin{prop}
The real antisymmetric matrices satisfying \p{quatern}
should have dimension $4n^*$ with integer $n^*$. Locally, one can always choose a basis where they
acquire the canonical form,
 \be
\label{Ia_canon_HKT}
(I^1)_M{}^N \ =\ {\rm diag} (\EuScript{I},\ldots, \EuScript{I}), \ \ \ \
(I^2)_M{}^N \ =\ {\rm diag} (\EuScript{J},\ldots, \EuScript{J}), \ \ \ \
(I^3)_M{}^N \ =\ {\rm diag} (\EuScript{K},\ldots, \EuScript{K})  ,
 \ee
where
 \be
\lb{IJKcanon}
\!\!
\EuScript{I} \  = \left( \begin{array}{cccc} 0&-1&0&0 \\ 1&0&0&0 \\ 0&0&0&-1 \\ 0&0&1&0 \end{array} \right),\quad
\EuScript{J} \  =  \left( \begin{array}{cccc} 0&0&1&0 \\ 0&0&0&-1 \\ -1&0&0&0 \\ 0&1&0&0 \end{array} \right)
 ,\quad
\EuScript{K} \  = \left( \begin{array}{cccc} 0&0&0&1 \\ 0&0&1&0 \\ 0&-1&0&0 \\ -1&0&0&0 \end{array} \right)  .
\ee
\end{prop}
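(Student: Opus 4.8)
The plan is to combine the algebraic content of the quaternion relations \eqref{quatern} with the metric content of antisymmetry, carrying out a single ``quaternionic Gram--Schmidt'' construction that yields at once the divisibility of the dimension by $4$ and the canonical form \eqref{Ia_canon_HKT}--\eqref{IJKcanon}. The key preliminary observation is that, for the real matrices, \eqref{quatern} gives $(I^p)^2 = -\mathbf 1$, so that antisymmetry $(I^p)^T = -I^p$ makes each $I^p$ orthogonal: $(I^p)^T I^p = -(I^p)^2 = \mathbf 1$. Moreover, for $p \neq q$ the relations force $I^p$ and $I^q$ to anticommute, whence $(I^pI^q)^T = (I^q)^T(I^p)^T = I^qI^p = -I^pI^q$, i.e. $I^pI^q$ is itself antisymmetric. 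Finally the four matrices $\mathbf 1, I^1, I^2, I^3$ are linearly independent (they are pairwise orthogonal in the trace form, $\mathrm{tr}(I^pI^q) = -N\delta^{pq}$ and $\mathrm{tr}\,I^p = 0$), so they span a copy of the division algebra $\mathbb{H}$.

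First I would build an adapted orthonormal basis. Pick any unit vector $v$ and consider the quadruple $\{v,\, I^1 v,\, I^2 v,\, I^3 v\}$. Antisymmetry gives $\langle v, I^p v\rangle = 0$; orthogonality of $I^p$ gives $|I^p v| = 1$; and the antisymmetry of $I^pI^q$ for $p\neq q$ gives $\langle I^p v, I^q v\rangle = -\langle v, I^pI^q v\rangle = 0$. Hence the quadruple is orthonormal and spans a four-dimensional ``quaternionic line'' $L = \mathbb{H}v$. Its orthogonal complement $L^\perp$ is invariant under every $I^p$: for $w \in L^\perp$ one has $\langle I^p w, v\rangle = -\langle w, I^pv\rangle = 0$ and $\langle I^p w, I^q v\rangle = -\langle w, I^pI^q v\rangle$, and since $I^pI^q$ lies in $\mathbb{H}$ the vector $I^pI^q v$ lies in $L$, so the right-hand side vanishes. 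Iterating on $L^\perp$ terminates after $n^*$ steps and exhausts the space, so its real dimension is exactly $4n^*$; equivalently, the $I^p$ turn the space into a free $\mathbb{H}$-module, and freeness over the four-dimensional division algebra forces the rank to be a multiple of $4$.

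It then remains to read off the matrices in a suitably oriented block. Within one quaternionic line I would take the ordered orthonormal basis $(g_1,g_2,g_3,g_4) = (v,\, I^1 v,\, -I^2 v,\, -I^3 v)$. Using $(I^p)^2 = -\mathbf 1$ together with the products $I^1I^2 = I^3$, $I^2I^3 = I^1$, $I^3I^1 = I^2$ read off from \eqref{quatern}, a short direct computation shows that $I^1, I^2, I^3$ act on $(g_1,g_2,g_3,g_4)$ precisely as $\EuScript{I}, \EuScript{J}, \EuScript{K}$ of \eqref{IJKcanon} (the specific signs in the basis above are what convert the raw orbit, which produces $\EuScript{I}, -\EuScript{J}, -\EuScript{K}$, into the stated canonical triple). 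Assembling the $n^*$ blocks then gives \eqref{Ia_canon_HKT}.

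The genuinely delicate point is not the existence of a block decomposition --- that is guaranteed the moment one recognizes the $\mathbb{H}$-module structure --- but its compatibility with the metric and the matching of the precise signs in \eqref{IJKcanon}. Antisymmetry is exactly what lets the orbit $\{v, I^pv\}$ be orthonormal and $L^\perp$ be $I^p$-invariant, so that the reducing basis can be taken orthogonal and the canonical matrices come out antisymmetric rather than merely conjugate to \eqref{IJKcanon}; the residual sign freedom is absorbed either by the choice of $g_3,g_4$ above or, equivalently, by an orthogonal conjugation by $\EuScript{I}$ within each block, which fixes $I^1$ and reverses $I^2, I^3$. I expect this bookkeeping of orientations to be the only real obstacle, everything else being the standard representation theory of $\mathbb{H}$.
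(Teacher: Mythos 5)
Your proof is sound, but there is nothing in the paper to compare it against: Proposition 1 is stated without proof, the text remarking only that the blocks $\EuScript{I},\EuScript{J},\EuScript{K}$ are self-dual and related to the 't~Hooft symbols. Your quaternionic Gram--Schmidt argument is the standard way to fill this gap, and every step checks out: antisymmetry together with $(I^p)^2=-\mathbf 1$ gives orthogonality of each $I^p$; anticommutativity for $p\neq q$ makes $I^pI^q$ antisymmetric, so the orbit $\{v,I^1v,I^2v,I^3v\}$ is orthonormal and its span $L$ has an $I^p$-invariant orthogonal complement (the invariance uses precisely that $I^pI^qv\in L$, which your observation that $\mathbf 1,I^1,I^2,I^3$ span a copy of $\mathbb H$ supplies); induction then yields the orthogonal decomposition into quaternionic lines, hence dimension $4n^*$. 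I also verified your sign bookkeeping: in the basis $(v,\,I^1v,\,-I^2v,\,-I^3v)$ one finds, e.g., $I^2(-I^2v)=v$ and $I^3(-I^2v)=I^1v$, reproducing column by column the matrices of \p{IJKcanon}, while the unmodified orbit indeed produces $(\EuScript{I},-\EuScript{J},-\EuScript{K})$, conjugate to the stated triple by the orthogonal matrix $\EuScript{I}$ exactly as you claim. Two minor points deserve explicit mention. First, you silently dropped the factor $i$ in \p{quatern}: taken literally, $I^1I^2=iI^3$ cannot hold for nonzero real matrices, and the canonical blocks obey $\EuScript{I}\EuScript{J}=\EuScript{K}$, so the intended algebra is the real cyclic one you used --- since your computation of the block form hinges on these products, you should state this correction rather than saying the relations are ``read off'' from \p{quatern}. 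Second, the proposition asserts a \emph{local} canonical frame on a neighborhood, not merely a pointwise basis; your construction extends verbatim by taking $v$ to be a smooth local unit vector field and iterating on the orthogonal complement, which is a smooth subbundle, but a sentence to that effect would make the proof complete as stated.
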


The sign choice corresponds to the convention
  \be
\lb{sign-z}
z^{\cal M} = x^{{\cal M}1} + i x^{{\cal M}2}\, , \ \ \ \ \ \ \ \ \
\partial_{\cal M} = \frac 12  \left( \frac \partial{\partial x^{{\cal M}1}} - i \frac \partial{\partial x^{{\cal M}2}}  \right) \, ,
   \ee
 which we follow in the most though not in all the cases.

The $4 \times 4$ matrices $\EuScript{I},\EuScript{J},\EuScript{K}$ are self-dual. They are related to the `t Hooft symbols.

\begin{defi}
An HKT manifold is a manifold with three quaternionic complex structures which  are covariantly constant with {\it one and the same}
 torsionful Bismut affine connection,
 \be
\lb{CICJCK}
  C^{(B)}_{MNK}(I^1) \ =\  C^{(B)}_{MNK}(I^2) \ =\  C^{(B)}_{MNK}(I^3) \, .
 \ee
\end{defi}

Both for the hyper-K\"ahler manifolds and for the HKT manifolds, one can construct ${\cal N}=4$ supersymmetric sigma models involving the
$(\bf{4}, \bf{4}, \bf{0})$ multiplets. For hyper-K\"ahler models one can also construct ${\cal N}=8$ supersymmetric sigma models involving the
$(\bf{4}, \bf{8}, \bf{4})$ multiplets \cite{HK}.

 Consider the forms $\Omega_p$ associated with each complex structure.

 \begin{prop}
For an HKT manifold, the form
  \be
\label{Ical}
{\cal I} = \Omega_2 + i\Omega_3
 \ee
 has the type (2,0) with respect to the complex structure $I^1$. Its exterior holomorphic derivative
vanishes, $\partial_1 {\cal I} = 0$.
\end{prop}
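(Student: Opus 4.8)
The statement has two parts: first that $\mathcal I = \Omega_2 + i\Omega_3$ is of pure type $(2,0)$ with respect to $I^1$, and second that its holomorphic differential $\partial_1\mathcal I$ vanishes. The plan is to treat them in this order, since the second only makes sense once $\mathcal I$ has been identified as a $(2,0)$-form: by integrability of $I^1$ the decomposition $d\mathcal I = \partial_1\mathcal I + \bar\partial_1\mathcal I$ then has $\partial_1\mathcal I$ equal to the $(3,0)$-component of $d\mathcal I$ and $\bar\partial_1\mathcal I$ the $(2,1)$-component, with no other types present.

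For the first part I would use the index-wise action of $I^1$ on $2$-forms, $(I^1\!\cdot\beta)_{MN} = (I^1)_M{}^P\beta_{PN} + (I^1)_N{}^P\beta_{MP}$, which preserves antisymmetry and, in the holomorphic frame \p{Icompl}, is diagonal by type: it acts as $-2i$ on $(2,0)$-forms, as $+2i$ on $(0,2)$-forms, and as $0$ on $(1,1)$-forms. Using the metric compatibility of the complex structures, i.e.\ the antisymmetry of $(I^p)_{MN} = (I^p)_M{}^Q g_{QN}$ (cf.\ \p{def-h}), together with the quaternion multiplication \p{quatern} in the normalization fixed by \p{IJKcanon} (so that $(I^1I^2)_M{}^N = (I^3)_M{}^N$ and $(I^1I^3)_M{}^N = -(I^2)_M{}^N$), a short computation gives $I^1\!\cdot\Omega_2 = 2\Omega_3$ and $I^1\!\cdot\Omega_3 = -2\Omega_2$, where $(\Omega_p)_{MN} = (I^p)_{MN}$. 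Combining these, $I^1\!\cdot\mathcal I = 2\Omega_3 - 2i\Omega_2 = -2i\,\mathcal I$, so $\mathcal I$ lies in the $-2i$ eigenspace and is therefore of type $(2,0)$. (One could equally insert the canonical matrices \p{IJKcanon} and check directly that $\Omega_2 + i\Omega_3$ contains only $dz\wedge dz$ terms, but the operator argument is basis-free.)

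For the second part the essential input is the HKT condition \p{CICJCK}, which I would feed into the Bismut formula \p{CBismut}. Writing the common Bismut torsion as $C$ and noting that the cyclic sum in \p{CBismut} is precisely $(d\Omega_p)_{PQR}$, that formula reads $C = (I^p)^{\otimes 3}\, d\Omega_p$ for each $p$, where $(I^p)^{\otimes 3}$ denotes the action of $I^p$ on all three indices. Two consequences are central. (i) Since $\Omega_1$ is of type $(1,1)$ with respect to $I^1$, the form $d\Omega_1$ is of type $(2,1)+(1,2)$, and hence so is $C = (I^1)^{\otimes 3}d\Omega_1$; in particular the $(3,0)_{I^1}$ and $(0,3)_{I^1}$ components of $C$ vanish. (ii) Because $(I^p)^{\otimes 3}$ squares to $-\mathrm{id}$ on $3$-forms, the relation inverts to $d\Omega_2 = -(I^2)^{\otimes 3}C$ and $d\Omega_3 = -(I^3)^{\otimes 3}C$, whence
\[
d\mathcal I \ =\ d\Omega_2 + i\,d\Omega_3 \ =\ -\big[(I^2)^{\otimes 3} + i\,(I^3)^{\otimes 3}\big]\,C\,.
\]
It remains to extract the $(3,0)_{I^1}$ part. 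The key observation is that, since $I^2$ and $I^3$ anticommute with $I^1$ in \p{quatern}, their pointwise action on covectors interchanges the $I^1$-holomorphic and $I^1$-antiholomorphic subspaces; hence $(I^2)^{\otimes 3}$ and $(I^3)^{\otimes 3}$ reverse the bidegree, mapping $(0,3)_{I^1}\to(3,0)_{I^1}$ and sending every other bidegree-$3$ type away from $(3,0)_{I^1}$. Therefore the $(3,0)_{I^1}$ component of $d\mathcal I$ is produced solely from the $(0,3)_{I^1}$ component of $C$, which vanishes by (i). Thus $\partial_1\mathcal I = \Pi^{(3,0)}_{I^1}d\mathcal I = 0$.

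The main obstacle, or at least the step needing the most care, is the bidegree bookkeeping in the second part: establishing cleanly that $I^2,I^3$ swap the $I^1$-eigenspaces of the complexified cotangent bundle (which rests on the anticommutation $I^1I^2 = -I^2I^1$ read off from \p{quatern}) and tracking how $(I^p)^{\otimes 3}$ permutes the four type spaces $\Lambda^{3,0},\Lambda^{2,1},\Lambda^{1,2},\Lambda^{0,3}$. One must also keep the factors of $i$ and the signs in \p{quatern} consistent throughout, the normalization being fixed by the canonical forms \p{IJKcanon} for which $I^1I^2=I^3$, and invoke the integrability of $I^1$ to guarantee that $d\mathcal I$ carries no type components other than $(3,0)$ and $(2,1)$. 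Everything else is routine tensor algebra.
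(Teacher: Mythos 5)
Your proof is correct, and a comparison with ``the paper's own proof'' is somewhat moot: the paper does not prove this Proposition in the text, but defers to \cite{Gaud,Grant}, with pedagogical details in \cite{commentHKT}. Measured against that standard argument, your route is essentially the canonical one, written out self-containedly. Part 1 (the eigenvalue decomposition of the $I^1$-action on 2-forms, with $I^1\cdot\Omega_2=2\Omega_3$, $I^1\cdot\Omega_3=-2\Omega_2$, hence $I^1\cdot{\cal I}=-2i{\cal I}$) checks out against the conventions \p{Icompl}, \p{IJKcanon}. Part 2 correctly assembles the three needed ingredients: that the cyclic sum in \p{CBismut} is $d\Omega_p$ (Christoffels cancel for the torsion-free Levi-Civita connection, which legitimizes both $C=(I^p)^{\otimes 3}d\Omega_p$ and its inversion $d\Omega_p=-(I^p)^{\otimes 3}C$); that integrability plus Hermiticity force $C$ to be of type $(2,1)+(1,2)$ with respect to $I^1$; and that the anticommutation $I^1I^2=-I^2I^1$ makes $(I^2)^{\otimes 3},(I^3)^{\otimes 3}$ swap bidegrees $(p,q)\leftrightarrow(q,p)$, so the $(3,0)$ part of $d{\cal I}$ can only come from the vanishing $(0,3)$ part of $C$. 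This is exactly the mechanism behind the $d^c$-formulation used in the cited references, so your proof is the standard one rather than a genuinely new route. One detail you handled well: you silently corrected the spurious factor $i$ in \p{quatern} --- for the real antisymmetric canonical matrices \p{IJKcanon} one indeed has $I^1I^2=I^3$, and taking the $i$ literally would give $I^1\cdot{\cal I}=2{\cal I}$, an impossible eigenvalue for the type decomposition, so fixing the normalization by \p{IJKcanon} as you did is the right call.
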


It follows that the complex conjugated form $\bar {\cal I} = \Omega_2 + i\Omega_3 $ has the type (0,2) with respect to $I^1$.
By symmetry, it is also true, of course,  that the forms $\Omega_3 \pm i\Omega_1$ have the types (2,0)  and (0,2) with respect to the complex structure $I^2$
and the forms $\Omega_1 \pm i\Omega_2$ have the types (2,0) and (0,2) with respect to $I^3$.

The {\it proof} of this important statement is given in \cite{Gaud,Grant} (see \cite{commentHKT} for  pedagogical explanations).
One can show that it works both ways such that the existence of a closed holomorphic form can be chosen as an alternative
definition of an HKT geometry. As follows from {\bf Proposition 1},
a complex basis can be chosen such that
 \be
\lb{I-canon}
{\cal I}_{\cal M}{}^{\bar {\cal N}} \ = h^{\bar {\cal N}  {\cal P} }{\cal I}_{{\cal M} {\cal P}}   \ = {\rm diag}  (\epsilon, \ldots, \epsilon) \,, \quad \quad \quad \quad
\epsilon = \left( \begin{array}{cc} 0 & 1 \\ -1 & 0 \end{array} \right)  \, ,
 \ee
and the same for $\bar {\cal I}_{\bar {\cal M}}{}^{{\cal N}} $.

Consider the following two pairs of complex conjugated supercharges,
\begin{equation}\label{SR-HKT-s}
\begin{array}{rcl}
S^{\,{}^{\rm HKT}}&=& \sqrt{2} \, \psi^{\cal M}\left[\Pi_{\cal M}-
{ \frac{i}{2}}\left(\partial_{{\cal M}}h_{{\cal K}\bar {\cal L}}\right) \psi^{\cal K}\bar\psi^{\bar {\cal L}}\right],\\[8pt]
\bar S^{\,{}^{\rm HKT}}&=& \sqrt{2} \, \bar\psi^{\bar {\cal M}}
\left[\bar \Pi_{\bar {\cal M}}+{ \frac{i}{2}}\left(\bar\partial_{\bar {\cal M}}h_{{\cal K}\bar {\cal L}}\right)
\psi^{\cal K}\bar\psi^{\bar {\cal L}}\right], \\[8pt]
R^{\,{}^{\rm HKT}}&=& \sqrt{2} \, \psi^{\cal N}\, {\cal {\cal I}}_{\,{\cal N}}{}^{\bar {\cal M}}\left[\bar \Pi_{\bar {\cal M}}-
{ \frac{i}{2}}\left(\bar\partial_{\bar {\cal M}}h_{{\cal K}\bar {\cal L}}\right)
\psi^{\cal K}\bar\psi^{\bar {\cal L}}\right],\\[8pt]
\bar R^{\,{}^{\rm HKT}}&=& \sqrt{2} \, \bar\psi^{\bar {\cal N}}\, \bar {\cal I}_{\,\bar {\cal N}}{}^{{\cal M}}\left[\Pi_{\cal M}+
{ \frac{i}{2}}\left(\partial_{{\cal M}}h_{{\cal K}\bar {\cal L}}\right)\psi^{\cal K}\bar\psi^{\bar {\cal L}}\right] \, ,
\end{array}
\end{equation}
where  $\psi^{\cal M}, \bar \psi^{\bar {\cal N}}$ are fermionic Grassmann variables
 carrying {\it world} indices and $\Pi_{\cal M}, \bar\Pi_{\bar {\cal M}}$ are the canonical momenta,
obtained from the component Lagrangian by variation over $\dot{z}^{\cal M}, \ \dot{\bar z}^{\bar {\cal M}}$
while keeping  $\psi^{\cal N}$ and       $ \bar \psi^{\bar {\cal N}}$  fixed. 
 Their Poisson brackets are nontrivial,
 \be
\label{Poisferm}
\{z^{\cal M}, \Pi_{\cal N} \} \ =\ \delta_{\cal N}^{\cal M} \, ,  \ \ \
\{\bar \psi^{\bar {\cal N}},  \psi^{{\cal M}} \} = \ -i h^{\bar {\cal N} {\cal M}} \, ,\ \ \ \{ \Pi_{\cal M}, \psi^{\cal N} \} = - \frac 12 \partial_{\cal M}
h^{\bar {\cal K}  {\cal N}} \psi_{\bar {\cal K}} \, , \ \ \ {\rm etc.}
 \ee
The matrix ${\cal I}$ was defined in \p{Ical}.

\begin{prop}
For a HKT manifold, the supercharges \p{SR-HKT-s} satisfy the standard ${\cal N} = 4$ supersymmetry algebra,
\be
\label{algQR}
\begin{array}{l}
\{Q,Q\} =  \{R,R\} = \{ Q, R\} = \{ Q, \bar R\} = 0 \ \ \ \ \ \ \ \ \ \ \ {\rm and \ c.c.}\ ,  \\ [7pt]
\{Q,\bar Q\} = \{R, \bar R\} = H \, .
\end{array}
\ee
  \end{prop}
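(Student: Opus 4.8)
The plan is to verify \p{algQR} directly as a set of graded Poisson-bracket identities, built from the canonical brackets \p{Poisferm} and the Hermiticity of the metric \p{def-h}. It helps to introduce the two oppositely ``dressed'' momenta
\[
\mathcal P^\pm_{\mathcal M} = \Pi_{\mathcal M} \mp \tfrac i2 (\partial_{\mathcal M} h_{\mathcal K\bar{\mathcal L}})\psi^{\mathcal K}\bar\psi^{\bar{\mathcal L}}, \qquad \bar{\mathcal P}^\pm_{\bar{\mathcal M}} = \bar\Pi_{\bar{\mathcal M}} \pm \tfrac i2 (\bar\partial_{\bar{\mathcal M}} h_{\mathcal K\bar{\mathcal L}})\psi^{\mathcal K}\bar\psi^{\bar{\mathcal L}},
\]
in terms of which $Q\equiv S^{\,{}^{\rm HKT}} = \sqrt2\,\psi^{\mathcal M}\mathcal P^+_{\mathcal M}$ and $\bar Q = \sqrt2\,\bar\psi^{\bar{\mathcal M}}\bar{\mathcal P}^+_{\bar{\mathcal M}}$, while the second doublet carries the \emph{opposite} dressing, $R = \sqrt2\,\psi^{\mathcal N}\mathcal I_{\mathcal N}{}^{\bar{\mathcal M}}\bar{\mathcal P}^-_{\bar{\mathcal M}}$ and $\bar R = \sqrt2\,\bar\psi^{\bar{\mathcal N}}\bar{\mathcal I}_{\bar{\mathcal N}}{}^{\mathcal M}\mathcal P^-_{\mathcal M}$. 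Since every charge is linear in these momenta, each bracket reduces to evaluating $\{\mathcal P,\mathcal P\}$, $\{\mathcal P,\bar{\mathcal P}\}$ and the mixed momentum--fermion brackets of \p{Poisferm}, then contracting with the $\psi$'s and with the constant matrix $\mathcal I$ of \p{I-canon}. Complex conjugation relates the barred identities to the unbarred ones, so it is enough to analyse $\{Q,Q\}$, $\{Q,\bar Q\}$, $\{R,R\}$, $\{R,\bar R\}$, $\{Q,R\}$ and $\{Q,\bar R\}$.

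First I would establish the Dolbeault subalgebra $\{Q,Q\}=0$ and $\{Q,\bar Q\}=H$, which defines the Hamiltonian and reproduces the $\mathcal N=2$ construction of \cite{QHKT,commentHKT}. Nilpotency holds because the bosonic part of $\{\mathcal P^+_{\mathcal M},\mathcal P^+_{\mathcal N}\}$ is symmetric in $\mathcal M\leftrightarrow\mathcal N$ and is annihilated by the antisymmetric product $\psi^{\mathcal M}\psi^{\mathcal N}$; the $\psi\psi\bar\psi$ dressing was chosen precisely so that the leftover $\{\Pi,\psi\}$ terms assemble into the symmetric second derivative $\partial_{\mathcal M}\partial_{\mathcal N}h_{\mathcal K\bar{\mathcal L}}$, again killed by the antisymmetrization. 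Integrability of $I^1$ needs no separate check, being automatic for a Hermitian metric in holomorphic coordinates \p{Icompl}. The bracket $\{Q,\bar Q\}$ then yields the covariant sigma-model Hamiltonian $H$ together with its fermionic curvature terms, fixing $H$ once and for all.

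Next I would treat the $R$-sector, where the essential geometric input enters. The matrix $\mathcal I$ is of type $(2,0)$, holomorphic and \emph{closed}, $\partial_1\mathcal I=0$, and in the basis \p{I-canon} $\mathcal I_{\mathcal N}{}^{\bar{\mathcal M}}$ is a constant matrix with $\mathcal I\bar{\mathcal I}=\mathbf 1$. The relation $\{R,R\}=0$ follows by the same antisymmetry mechanism as for $Q$, now with the closedness $\partial_1\mathcal I=0$ absorbing the terms in which the derivative hits $\mathcal I$. The crucial identity is $\{R,\bar R\}=H$ with the \emph{same} $H$ as in the $Q$-sector: here the opposite sign of the dressing in $\bar{\mathcal P}^-$ is exactly compensated by the rotation through $\mathcal I$ and $\bar{\mathcal I}$, and the fact that the two doublets share one Hamiltonian is nothing but the HKT condition \p{CICJCK} that the Bismut torsions of $I^1,I^2,I^3$ coincide. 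I would verify this by showing that the curvature and torsion contributions produced by $\{\mathcal P^-,\bar{\mathcal P}^-\}$, once sandwiched between $\mathcal I$ and $\bar{\mathcal I}$, collapse onto those already found in $\{Q,\bar Q\}$.

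The main obstacle will be the two cross-brackets $\{Q,R\}=0$ and $\{Q,\bar R\}=0$, since these are precisely the relations that promote the two commuting $\mathcal N=2$ algebras to a single $\mathcal N=4$ algebra and that fail on a generic Hermitian manifold. In $\{Q,R\}=2\{\psi^{\mathcal M}\mathcal P^+_{\mathcal M},\,\psi^{\mathcal N}\mathcal I_{\mathcal N}{}^{\bar{\mathcal L}}\bar{\mathcal P}^-_{\bar{\mathcal L}}\}$ the mixed bracket $\{\mathcal P^+_{\mathcal M},\bar{\mathcal P}^-_{\bar{\mathcal L}}\}$ and the $\{\Pi,\psi\}$ terms of \p{Poisferm} produce, after contraction with the antisymmetric $\psi^{\mathcal M}\psi^{\mathcal N}$, an expression proportional to the antisymmetrized covariant derivative of $\mathcal I$; I expect it to organize into the closedness condition $\partial_1\mathcal I=0$ and thus vanish. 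The bracket $\{Q,\bar R\}$ additionally involves the contraction $\{\bar\psi,\psi\}\sim h^{-1}$, but the holomorphy of $\mathcal I$ (its purely $(2,0)$ type) guarantees that no fermion-number-violating residue survives. Checking that every connection, curvature and torsion term in these two brackets cancels against the derivatives of $\mathcal I$ is the computational heart of the proof; the remaining relations follow either by complex conjugation or from the steps above.
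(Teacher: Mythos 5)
Your plan is exactly the route the paper itself takes: its proof of this proposition consists of pointing to the explicit Poisson-bracket verification (carried out in detail in Ref.~\cite{commentHKT}, with the N\"other derivation from the Lagrangian as an alternative), and your organization --- the Dolbeault subalgebra $\{Q,Q\}=0$, $\{Q,\bar Q\}=H$ holding for any Hermitian metric, with the HKT data entering only through the closedness and $(2,0)$-type of ${\cal I}$ in the $R$-sector and the cross-brackets --- correctly isolates where the geometry is used. One minor correction to track: in the canonical basis \p{I-canon} one has ${\cal I}_{\cal M}{}^{\bar {\cal N}}\,\bar{\cal I}_{\bar {\cal N}}{}^{\cal K} = -\,\delta_{\cal M}^{\cal K}$ (since $\epsilon^2=-\mathbf{1}$), not ${\cal I}\bar{\cal I}=\mathbf{1}$, a sign that must be carried through to land on $+H$ in $\{R,\bar R\}$.
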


{\it Proof} (see Ref. \cite{commentHKT}).
One can verify the validity of the algebra \p{algQR} for the supercharges
 \p{SR-HKT-s} explicitly. Alternatively,
the latter  can be derived as N\"other integrals of motion,
using the Lagrangian formalism.

\vspace{.2cm}

Let us make now two remarks.
 \begin{enumerate}
\item
The expressions \p{SR-HKT-s} are very much similar
to the expressions of the supercharges in the standard K\"ahler - de Rham model
 (the SQM version of the  K\"ahler sigma model in two spacetime dimensions introduced in
\cite{Zumino}). The latter have the form  \cite{taming,commentHKT},
\begin{equation} \label{QdeRham}
\begin{array}{rcl}
Q&=&\psi^M\left(\Pi_M-\frac i2\, \partial_M g_{NP} \, \psi^N\bar\psi^P\right),
\\[8pt]
\bar Q&=&\bar\psi^M\left(\Pi_M+\frac i2\, \partial_M g_{NP} \,
\psi^P\bar\psi^N\right),
\\[8pt]
R &=&\psi^S I_S^{\ M} \left(\Pi_M-\frac i2\, \partial_M g_{NP} \, \psi^N\bar\psi^P\right),
\\[8pt]
\bar R &=&\bar\psi^S I_S^{\ M} \left(\Pi_M+\frac i2\, \partial_M g_{NP} \,
\psi^P\bar\psi^N\right),
\end{array}
\end{equation}
where $I$ is the complex structure matrix. Bearing this in mind, it is natural to call the matrix ${\cal I}$
entering \p{SR-HKT-s} the matrix
of {\it hypercomplex structure}.

\item The supercharges written above represent classical functions defined on the phase space of the system.
One can also construct quantum supercharges, which are
 operators acting in the Hilbert space of the SQM system that is isomorphic
to the space of forms. When going from classical function to quantum operators,
one should resolve ordering ambiguities. The general
recipe for this was given in \cite{howto} --- the {\it covariant} quantum  supercharges
(acting on  the Hilbert space where the inner product
is defined with the covariant integration measure) are obtained from their classical counterparts by Weyl
ordering and a subsequent conjugation,
 \be
\label{conjug}
Q^{\rm cov} \ =\ (\det g)^{-1/4} Q^{\rm Weyl\ ordered}  (\det g)^{1/4}
 \ee
 Bearing in mind  the regularity of this procedure, we will  talk very little
in this paper about the quantum supercharges and mostly discusss the classical ones.
The symbol $\{ \cdot, \cdot \}$  will thus mean the Poisson brackets rather than anticommutators.

\end{enumerate}

 We concentrate on the  HKT models described by the
superfield Lagrangian  expressed  into several linear ({\bf 4}, {\bf 4}, {\bf 0}) multiplets, ${\cal V}^{i \alpha}_a$,
$a = 1, \ldots , n^*$ being the flavor index.
A ({\bf 4},\,{\bf 4},\,{\bf 0}) multiplet lives in the ${\cal N} = 4$ superspace having the coordinates
$(t,\theta^{i{k}^\prime})$,
with $\theta^{i{k}^\prime}$ satisfying the pseudoreality condition,
  \be
\lb{pseudo}
(\overline{\theta^{i{k}^\prime}})= -\epsilon_{ij}\epsilon_{{k}^\prime{l}^\prime}\theta^{j{l}^\prime}
\equiv -\theta_{ik'} \, .
  \ee
The indices $i=1,2$ and $k^\prime=1,2$ are doublet indices of the ${\rm SU}_L(2)$ and ${\rm SU}_R(2)$ groups, respectively. The latter
form together the full automorphism group ${\rm SO}(4)={\rm SU}_L(2)\times{\rm SU}_R(2)$ of
the ${\cal N}{=}\,4$ superalgebra. Each multiplet carries two spinor indices; the Latin index is
transformed by the group ${\rm SU}_L(2)$, and the Greek one by the extra internal  symmetry
$SU(2)$ group (usually called  {\it Pauli-G\"
ursey group} \cite{PG}) that commutes with the supersymmetry
transformations.

Like $\theta^{i k'}$, $ {\cal V}^{i \alpha}$ is pseudoreal,
$$(\overline{{\cal V}^{i\alpha}})=
-\epsilon_{ij}\epsilon_{\alpha \beta}{\cal V}^{j\beta}  $$
and satisfies the constraints
    \begin{equation}\label{const1}
D^{(i{j}^\prime}{\cal V}^{k)\alpha} = 0\, ,
     \end{equation}
  where
\begin{equation}\label{D}
D^{i{k}^\prime} = \frac{\partial}{\partial\theta_{i{k}^\prime }}+i\theta^{i{k}^\prime}\partial_t
\end{equation}
are the covariant derivatives.

It is convenient then to introduce
two complex superfields ${\cal V}^m$ such that
  \be
\lb{V-compl}
{\cal V}^{i\alpha} \ =\  \left( \begin{array}{cc}  {\cal V}^1 &   {\cal V}^2 \\  \bar {\cal V}^2
&  -\bar {\cal V}^1
\end{array} \right)
 \ee
and also
\be
\label{thetet-def}
\theta^{i k'} \ =\ \left( \begin{array}{rr}  \eta &  \theta \\ \bar \theta &
-\bar \eta
\end{array} \right) \, , \ \ \ \ \ \ \ \ \ \ D^{ik'}
 \ =\ \left( \begin{array}{rr} \bar D_\eta &  \bar D_\theta \\
-D_\theta & D_\eta
\end{array} \right) \, ,
 \ee
with the convention
  \be
\label{D_thetet}
\begin{array}{c}
D_\theta = \partial_\theta - i \bar \theta \partial_t \,, \qquad
  \bar D_\theta = -\partial_{\bar \theta} + i  \theta \partial_t \, ,  \\ [7pt]
D_\eta = \partial_\eta - i \bar \eta \partial_t \,, \qquad
\bar D_\eta = -\partial_{\bar \eta} + i  \eta \partial_t \, .
\end{array}
\ee
The constraints \p{const1} can now be rewritten in a nice form

 \be
\label{dop-eq2}
\begin{array}{c}
\bar D_\theta {\cal V}^m = \bar D_\eta {\cal V}^m = 0\,, \qquad
 D_\theta \bar {\cal V}^{\bar m} =  D_\eta \bar {\cal V}^{\bar m} = 0\,,
\\ [7pt]
D_\theta {\cal V}^{m}+\epsilon^{mn}\bar D_\eta \bar{\cal V}^{\bar{n}}=0 \,,\qquad
D_\eta {\cal V}^{m}-\epsilon^{mn}\bar D_\theta \bar{\cal V}^{\bar{n}}=0\, .
\end{array}
 \ee
(with $\epsilon^{12} = -\epsilon_{12} = -1$).

Alternatively, one can represent  $ {\cal V}^{i \alpha}$ via real 4-vector  components as
 \be
\label{Vvect}
{\cal V}^{i\alpha} \ =\   {\cal V}^{M = 1,2,3,4} (\sigma_M^\dagger)^{i\alpha}
 \ =\
\left( \begin{array}{cc}  {\cal V}^3- i{\cal V}^4 &  {\cal V}^1 -i {\cal V}^2 \\   {\cal V}^1 + i {\cal V}^2
&  -{\cal V}^3- i{\cal V}^4
\end{array} \right)
\ee
with $\sigma_M^\dagger = (\vec{\sigma}, -i)$. (Obviously, it is one of many possible choices.) Then
 \be
\lb{v12}
 {\cal V}^{m=1} \ =\ {\cal V}^{M=3}- i{\cal V}^{M=4}, \ \ \ \ \ \ \ \ \ \ {\cal V}^{m=2} \ =\
{\cal V}^{M=1}- i{\cal V}^{M=2} \, .
 \ee

The constraints \p{dop-eq2} can be easily resolved such that ${\cal V}^m$ is expressed via a couple of standard ${\cal N} =2$ chiral superfields
$V^m = \ v^m + \sqrt{2} \theta \psi^m - i \theta \bar \theta \dot{v}^m$,
   \begin{eqnarray}\label{V-4-2exp}
&{\cal V}^{m}=V^{m}+\eta\, \epsilon^{mn}\bar D\bar V^{\bar n} -i\eta\bar\eta \,\dot{V}^{m} \,,\quad &
\bar{\cal V}^{\bar{m}}=\left({\cal V}^{m}\right)^\dagger= \bar V^{\bar{m}}-\bar \eta\, \epsilon^{{\bar m}\bar{n}} D V^{n}
+i\eta\bar\eta \,\dot{\bar V}^{\bar{m}}
\end{eqnarray}
(with $D \equiv D_\theta, \  \bar D \equiv \bar D_\theta$).

\vspace{1mm}

A generic  HKT action is expressed in these terms as
   \be
\label{act1}
 S \ =\ { \frac 14} \int dt \, d \theta d\bar \theta d\eta d\bar \eta\,
 {\cal L}({\cal V}^m_a, \bar {\cal V}^{\bar m}_a) \, .
   \ee
After integrating over $d\eta d\bar \eta$, it acquires the form ,
   \be
\lb{act1-N2}
S={ \frac14}\displaystyle{\int}dt\, d\theta d\bar\theta\,  \Delta^{ab}_{m\bar n}{\cal L}(V, \bar V) \, D V^{m}_{a}\bar D
\bar V^{\bar n}_{b} \, ,
   \ee
where
\be
\lb{metr-N4}
 \Delta_{m\bar n}^{ab}{\cal L} \ =\  \Big(\partial^a_m \bar \partial_{\bar n}^{b}    +
\epsilon_{m k} \epsilon_{\bar n \bar l} \bar \partial^a_{\bar k}  \partial^b_l \Big) {\cal L}
 \ee
is a Hermitian metric $h_{m\bar n}$.
The Lagrangian \p{act1-N2} belongs to the wider class of ${\cal N} =2$ SQM sigma models describing the Dolbeault complex
\cite{IS}.

 \begin{defi}
We will call the Dolbeault SQM model {\it reducible} if its metric  $h_{{\cal M}\bar {\cal N}}$    does not depend on a half of real coordinates.
\end{defi}

Indeed,
the corresponding canonical
momenta commute in this case with the Hamiltonian, and one can perform the Hamiltonian reduction.
Many different schemes of the Hamiltonian reduction are possible \cite{taming},
but in this paper we only consider the reduction  killing a half of bosonic dynamical variables.

In our problem, it will sometimes be convenient to consider the reduction with respect to a half of {\it complex} coordinates $z^{\cal M}$  (${\cal M} \equiv \{m,a\}$)
 and sometimes with respect to the imaginary parts
Im($z^{\cal M}$). In the latter case,  the Hermitian metric
$h_{{\cal M} \bar {\cal N}}$ goes
over to the sum
 \be
\label{metr-red}
h_{{\cal M} \bar {\cal N}} \ \to \ \frac 12\, \Big( g_{(MN)} + i b_{[MN]} \Big) \, ,
 \ee
involving a real symmetric and an imaginary antisymmetric part. The factor $1/2$ in \p{metr-red} corresponds to the factor 2 in \p{def-h}. The convenience of this convention will be further clarified in Sect. 4 when discussing the reduction of the HKT and bi-HKT models.

  \begin{prop}
A generic reducible Dolbeault SQM model gives after  reduction a  {\it quasicomplex} de Rham model
where the metric tensor is replaced in many formulas of the ordinary de Rham model by the Hermitian sum \p{metr-red}. The actual real
metric tensor of the reduced
model is given by
  \be
G_{MN} \ \to \  g_{MN} + b_{MK} (g^{-1})^{KL} b_{LN} \, .
 \ee
  \end{prop}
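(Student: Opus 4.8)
The plan is to pass from the complex to the real coordinates, reduce the cyclic (imaginary-part) directions by fixing their conserved momenta, and then read off both the effective real metric and the quasicomplex structure of the surviving supercharges. First I would write $z^{\cal M} = x^M + i y^M$ as in \p{sign-z} and split the Hermitian metric exactly as in \p{metr-red}, $h_{M\bar N} = \frac12(g_{MN} + i b_{MN})$ with $g_{MN}$ real symmetric and $b_{MN}$ real antisymmetric; by reducibility $g$ and $b$ depend on $x$ only. Inserting this into the kinetic term $2 h_{{\cal M}\bar{\cal N}}\dot z^{\cal M}\dot{\bar z}^{\bar{\cal N}}$ dictated by \p{def-h} and using that a symmetric tensor contracts to zero with an antisymmetric one, the bosonic Lagrangian collapses to
\be
{\cal L}_{\rm kin} = g_{MN}\big(\dot x^M\dot x^N + \dot y^M\dot y^N\big) - 2 b_{MN}\,\dot y^M\dot x^N \, .
\ee
The $y^M$ are cyclic, so the conjugate momenta $p^{(y)}_M = 2 g_{MN}\dot y^N - 2 b_{MN}\dot x^N$ Poisson-commute with $H$ and generate the isometry.

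I would then carry out the reduction killing half the bosonic variables by fixing $p^{(y)}_M = 0$, solving $\dot y^M = (g^{-1})^{MN} b_{NK}\dot x^K$ and substituting back (equivalently, forming the Routhian at vanishing momentum). Routine algebra, using the antisymmetry of $b$ and the symmetry of $g^{-1}$, collapses the three terms into $G_{MN}\dot x^M\dot x^N$ with
\be
G_{MN} = g_{MN} + b_{MK}(g^{-1})^{KL} b_{LN} \, ,
\ee
which is the asserted real metric.

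For the supersymmetric content I would track what this reduction does to the Dolbeault supercharges, i.e. the pair $S,\bar S$ of \p{SR-HKT-s}, structurally identical to the first two lines of \p{QdeRham} but carrying complex world indices. On the reduced surface $\Pi_{\cal M} = \frac12(p^{(x)}_M - i p^{(y)}_M) \to \frac12 p_M$, while $y$-independence of $h$ gives $\partial_{\cal M} h = \frac12 \partial_{x^M} h$. Feeding these two substitutions in turns $S$ into $\psi^M[\,p_M - \frac{i}{2}(\partial_M h_{K\bar L})\psi^K\bar\psi^L\,]$ and its conjugate, up to an overall constant and the standard rescaling of the fermions, i.e. precisely the de Rham supercharges \p{QdeRham} with the real metric $g_{NP}$ replaced by the Hermitian sum $h_{K\bar L}$ of \p{metr-red}. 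The world indices on $\psi,\bar\psi$ become real, and the survival of \emph{both} $g$ and $b$ in the fermion bilinear is exactly the Hermitian-metric, quasicomplex feature distinguishing this model from the genuine de Rham one. Closing $\{S,\bar S\}$ with the brackets \p{Poisferm} should then reproduce a Hamiltonian whose bosonic part carries $(G^{-1})^{MN}$, consistent with the previous paragraph.

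The hard part will not be the bosonic algebra, which is immediate, but checking that the reduction of the \emph{graded} phase space is consistent: one must verify that $p^{(y)}_M$ still Poisson-commutes with the fermionic sector under the nontrivial brackets \p{Poisferm}, so that the level set $p^{(y)}=0$ is preserved by the flow and descends to a well-defined reduced bracket. The conceptually delicate point is to confirm that the reduction leaves the antisymmetric part $b$ \emph{inside} the fermion term rather than projecting it out --- this is what promotes the reduced model to a quasicomplex de Rham model with Hermitian rather than merely real superfield metric --- and to check its compatibility with the effective real metric $G = g + b\,g^{-1}b$ that governs the kinetic term.
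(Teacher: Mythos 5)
Your argument is correct, but it is not the route the paper itself takes: the paper offers no in-text proof of this Proposition at all, deferring it entirely to Ref.~\cite{quasi}, and the mechanism used there (and echoed in Sect.~4.1--4.2 of this paper) is Lagrangian/superfield-based rather than Hamiltonian. In that approach one rewrites the Dolbeault model via real $({\bf 1},{\bf 2},{\bf 1})$ superfields, observes that under reduction the time derivatives of the eliminated coordinates turn into the \emph{auxiliary fields} of the reduced multiplets (the rule $\dot\xi^a \to A^a$ stated in Sect.~4.1), so that the reduced superfield Lagrangian is manifestly the quasicomplex de Rham one with Hermitian superfield metric $\frac 12 (g_{MN}+ib_{MN})$; the formula $G_{MN}=g_{MN}+b_{MK}(g^{-1})^{KL}b_{LN}$ then drops out of the elementary elimination of the auxiliary fields, exactly as in the computation leading to \p{bez-A} with $\kappa \to g$, ${\cal F}\to b$. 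Your Routhian-at-zero-momentum computation is the phase-space counterpart of the same algebra --- integrating out $A^a$ is literally the same quadratic completion as eliminating $\dot y^M$ at $p^{(y)}=0$ --- and your bosonic manipulations check out, including the remark that the symmetric part of $(g+ib)^{-1}$ is $(g+bg^{-1}b)^{-1}$, so that $\{S,\bar S\}$ indeed produces a kinetic term $\propto (G^{-1})^{MN}p_Mp_N$. What the superfield route buys is that the ${\cal N}=2$ supersymmetry of the reduced model and its characterization as ``de Rham with Hermitian superfield metric'' are manifest, with no need to track supercharges by hand; what your route buys is self-containedness at the component level and a transparent identification of the conserved momenta generating the isometry. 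One last point: the step you flag as ``the hard part'' is in fact immediate. Writing $p^{(y)}_M = i\bigl(\Pi_{\cal M}-\bar\Pi_{\bar{\cal M}}\bigr)$ and using the brackets \p{Poisferm}, one finds $\{p^{(y)}_M,\psi^{\cal N}\}\propto (\partial_{\cal M}-\bar\partial_{\bar{\cal M}})h^{\bar{\cal K}{\cal N}} \propto \partial_{y^M}h^{\bar{\cal K}{\cal N}}=0$ by reducibility, so $p^{(y)}$ Poisson-commutes with the whole graded sector, the level set $p^{(y)}=0$ is preserved, and the reduced bracket is automatically well defined; likewise the survival of $b$ inside the fermion bilinear is guaranteed because the substitutions $\Pi_{\cal M}\to \frac12 p_M$, $\partial_{\cal M}\to\frac12\partial_{x^M}$ act only on momenta and derivatives, never on the metric itself --- precisely the substitution rules the paper states in Sect.~4.3.
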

See \cite{quasi} for the {\it proof} and discussion.

\begin{prop}
A reducible HKT SQM model gives after reduction
 a quasicomplex model of a special form.
It is expressed into $n^*$ ({\bf 2}, {\bf 4}, {\bf 2}) superfields ${\cal Z}_a (t; \theta, \eta;  \bar\theta, \bar\eta)$
satisfying the linear chiral constraints $ \bar D_\theta {\cal Z}^a  =  \bar D_\eta {\cal Z}^a\ =\ 0 \, $.
 The superfield Lagrangian describes a quasicomplex K\"ahler  model: a K\"ahler model generalized by including
 certain holomorphic $F$-terms,
     \be
\lb{act242-HKT}
 S =  { \frac14}\displaystyle{\int}dt\, d^2\theta \,
 d^2 \eta\, {\cal K}({\cal Z},\bar{\cal Z}) -
  { \frac12}  \left[ \displaystyle{\int}dt\, d\theta d\eta \,
{\cal A}_{a} ({\cal Z}) \dot{\cal Z}^{a}  + {\rm c.c.} \right] \, .
    \ee
\end{prop}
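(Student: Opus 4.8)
\emph{Proof proposal.} The plan is to carry out the Hamiltonian reduction directly on the ${\cal N}=4$ superfield action \p{act1}, using the observation that the linear $({\bf 4},{\bf 4},{\bf 0})$ constraints \p{dop-eq2} fall into two groups. The first group, the chirality conditions $\bar D_\theta{\cal V}^m_a=\bar D_\eta{\cal V}^m_a=0$, is \emph{identical} to the constraints that define the $({\bf 2},{\bf 4},{\bf 2})$ multiplet ${\cal Z}_a$; the second group, the ``linking'' conditions relating $D_\theta{\cal V}^m$ to $\bar D_\eta\bar{\cal V}^{\bar n}$, is precisely what forces the auxiliary slot of ${\cal V}^m_a$ to be occupied by a bosonic velocity rather than by an independent auxiliary field. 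First I would state the reducibility hypothesis explicitly: choosing the reduction with respect to a half of the complex coordinates, ${\cal L}({\cal V},\bar{\cal V})$ is independent of one complex boson per flavor (say $v^2_a$), so that the conjugate momentum Poisson-commutes with $H$ and may be held fixed.

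The core of the argument is the one-dimensional physical-boson $\leftrightarrow$ auxiliary-field duality. Since the eliminated coordinate $v^2_a$ enters \p{act1} only through its time derivative, I would pass to the Routhian, trading $\dot v^2_a$ for its conserved momentum. In superfield terms this replaces the constrained velocity $\dot{\bar v}^{\bar 2}_a$ sitting inside the $\eta$-component $\epsilon^{mn}\bar D\bar V^{\bar n}$ of \p{V-4-2exp} by an independent auxiliary field. This is exactly the move that relaxes the linking constraints while leaving the chirality constraints untouched, so the multiplet ${\cal V}^1_a$, whose lowest component is the retained coordinate $v^1_a$, collapses to a single ${\cal Z}_a$ obeying $\bar D_\theta{\cal Z}_a=\bar D_\eta{\cal Z}_a=0$ and carrying the content $({\bf 2},{\bf 4},{\bf 2})$ --- the complex boson $z_a=v^1_a$ and a genuine complex auxiliary coming from the dualized momentum. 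This settles the first assertion of the Proposition.

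For the shape of the reduced action I would proceed in two stages. Since \p{act1-N2} is a special Dolbeault model, the already-established reduction of a generic reducible Dolbeault model (the Proposition on reducible Dolbeault SQM models) guarantees that a \emph{quasicomplex} model with the Hermitian metric \p{metr-red} emerges; the manifest survival of the four supercharges \p{SR-HKT-s} under the chosen (holomorphic) isometry upgrades this to an ${\cal N}=4$ model written through the ${\cal Z}_a$. Organizing the output of the $\eta,\bar\eta$ integration followed by the Legendre step, the symmetric part of the metric \p{metr-N4} reassembles into a real potential and produces the $D$-term $\tfrac14\int dt\,d^2\theta\,d^2\eta\,{\cal K}({\cal Z},\bar{\cal Z})$, with ${\cal K}$ the Legendre transform of ${\cal L}$ in the eliminated direction. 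The terms of \p{act1} linear in the eliminated velocity survive the Legendre step as a contribution linear in $\dot{\cal Z}^a$; I would show this is exactly the holomorphic piece $-\tfrac12\big[\int dt\,d\theta d\eta\,{\cal A}_a({\cal Z})\dot{\cal Z}^a+\mathrm{c.c.}\big]$ of \p{act242-HKT}. Geometrically, the one-form ${\cal A}_a\,dz^a$ is the descendant of the closed holomorphic $(2,0)$-form ${\cal I}=\Omega_2+i\Omega_3$ of \p{Ical}: contracting ${\cal I}$ along the reduced complex direction yields ${\cal A}_a$, and closedness $\partial_1{\cal I}=0$ is what makes ${\cal A}_a$ holomorphic, i.e. renders the $F$-term a genuine chiral integral.

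The step I expect to be the main obstacle is the control of this $F$-term. The K\"ahler piece follows almost mechanically from Hermiticity of \p{metr-N4} and the Dolbeault reduction result. What needs genuine care is, first, to disentangle the velocity-linear terms from the kinetic ones and from total time derivatives after the Legendre transform --- they are easily confused with the $b$-field/torsion already hidden in the quasicomplex metric \p{metr-red} --- and, second, to prove that the resulting coefficient ${\cal A}_a$ depends on ${\cal Z}$ alone and not on $\bar{\cal Z}$. This holomorphy is where the HKT data enters essentially; it should follow from the covariant constancy of the complex structures under the common Bismut connection \p{CICJCK}, equivalently from the holomorphy and closedness of ${\cal I}$. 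I would verify it either directly on the reduced component Lagrangian or, more invariantly, by tracking ${\cal I}$ through the reduction.
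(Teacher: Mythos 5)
Your proposal captures the paper's basic mechanism correctly --- the chirality constraints of the $({\bf 4},{\bf 4},{\bf 0})$ multiplet survive as the $({\bf 2},{\bf 4},{\bf 2})$ constraints while the linking constraints are relaxed, the velocity $\dot\xi^a$ of the eliminated coordinate turning into the auxiliary field $A^a$ --- but your reducibility hypothesis is wrong, and it destroys exactly the content of the Proposition. You assume that the prepotential ${\cal L}$ is independent of the eliminated complex boson $v^2_a$. Under that assumption the mixed metric component ${\cal F}_{ab}=h^{ab}_{1\bar 2}$ of Eq.~\p{h-red-hkt-F} vanishes identically, the bosonic Lagrangian \p{Lb-HKT-xi} contains the eliminated velocities only through the quadratic piece $\kappa_{a\bar b}\,\dot\xi^a\dot{\bar\xi}^{\bar b}$, and the reduction yields the plain K\"ahler model with ${\cal A}_a=0$: no holomorphic $F$-term at all. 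Your later step --- that ``the terms of \p{act1} linear in the eliminated velocity survive the Legendre step as a contribution linear in $\dot{\cal Z}^a$'' --- is therefore internally inconsistent: under your own hypothesis there are no such terms. The correct condition ({\bf Definition 5}) is that the \emph{metric}, not the prepotential, is independent of half the coordinates, and the heart of the paper's proof, which your plan skips, is the classification of prepotentials with this property: modulo ``gauge'' additions that drop out of the component Lagrangian, ${\cal L}={\cal K}({\cal Z},\bar{\cal Z})+{\cal A}_a({\cal Z})\,\Xi^a+\bar{\cal A}_{\bar a}(\bar{\cal Z})\,\bar\Xi^{\bar a}$ with $\Xi^a={\cal V}^2_a$ [Eq.~\p{h-red-hkt2}]. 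The term linear in $\Xi^a$ \emph{does} depend on the eliminated sector (only its second derivatives, i.e.\ the metric, do not), and it is precisely the source of ${\cal F}_{ab}=\partial_a{\cal A}_b-\partial_b{\cal A}_a$ and of the $F$-term in \p{act242-HKT}.

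Two further inaccuracies. First, ${\cal K}$ is not ``the Legendre transform of ${\cal L}$ in the eliminated direction''; it is simply the $\Xi$-independent part of the reducible prepotential. The reduction replaces $\dot\xi^a\to A^a$ and leaves ${\cal K}$ untouched; the Legendre-type deformation of the metric in \p{bez-A} appears only if one additionally integrates out $A^a$, which the superfield action \p{act242-HKT} does not do. Second, the holomorphy of ${\cal A}_a$, which you single out as the main obstacle and propose to extract from the closedness $\partial_1{\cal I}=0$ of the $(2,0)$-form \p{Ical}, requires no such machinery in the paper's route: once the relevant prepotential is brought to the form \p{h-red-hkt2} --- any antiholomorphic admixture ${\cal B}_a({\cal Z})\bar\Xi^a$ or $\Xi\bar\Xi$-bilinear piece being removable by the gauge freedom spelled out after \p{h-red-hkt2} --- holomorphy of ${\cal A}_a({\cal Z})$ is automatic, and the fermionic terms are then restored by ${\cal N}=4$ supersymmetry. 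Your invariant route via ${\cal I}$ might conceivably be made to work, but as written it is a declaration of intent rather than an argument, and it cannot compensate for the missing classification step, which is where the actual work of the proof lies.
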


This statement was made in  \cite{commentHKT}. We will give its rigourous proof in
Sect. 4.1 before discussing the reduction of the bi-HKT models.

\section{Bi-HKT manifolds. }
\setcounter{equation}0

 \begin{defi}
CKT manifolds are manifolds involving three integrable complex structures
that satisfy the Clifford
algebra \p{Clifford}, but not the quaternion algebra  \p{quatern}. They also  involve a totally antisymmetric
torsion tensor $C_{MNL}$
that satisfies the following condition: for each complex structure $I^p$, the tensor  $C_{MNL}$  represents
a sum $C_{MNL} = B^p_{MNL} + H^p_{MNL}$, where  $ B^p_{MNL}$ is the torsion
\p{CBismut} of the Bismut
connection for the complex structure $I^p$ (these connections are {\it not} necessarily the same),
  and  the form corresponding to the tensor $H^p_{MNL}$
is the sum of an exact holomorphic [of the type (3,0) ] and the conjugated
antiholomorphic [of the type (0,3) ] forms with respect
to $I^p$.

\end{defi}

This definition given in \cite{FIS2} is equivalent to the original definition of \cite{CKT} (formulated in a somewhat
more complicated way).

In contrast to the HKT case with three quaternionic complex structures, the algebra of the matrices $A + B^p I^p$ is not
closed under
multiplication.

  \begin{prop}
 The closure of this algebra represents a direct sum of two quaternion algebras ${\mathcal H}_+ +
{\mathcal H}_-$ \cite{Clif=H+H}.
\end{prop}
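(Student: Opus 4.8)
The plan is to recognize that the closure in question is nothing but the real associative algebra $\mathcal{A}$ generated by $1, I^1, I^2, I^3$ subject to the Clifford relations \p{Clifford}, i.e.\ (a representation of) the Clifford algebra $\mathrm{Cl}_{0,3}(\mathbb{R})$, and then to split it explicitly by means of its distinguished central element. First I would fix a spanning set: repeatedly multiplying the generators and using $(I^p)^2=-1$ together with $I^pI^q=-I^qI^p$ for $p\neq q$, every product reduces to a real combination of the eight elements $1,\ I^1,I^2,I^3,\ I^1I^2,I^2I^3,I^3I^1,\ \omega$, where $\omega:=I^1I^2I^3$. Thus $\mathcal{A}$ is at most eight-dimensional, and the goal is to exhibit it as a sum of two four-dimensional two-sided ideals, each a copy of $\mathbb{H}$.

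The key object is $\omega$. A short computation using only \p{Clifford} shows that it is central --- for instance $\omega I^1=I^1I^2I^3I^1=-I^2I^3=I^1\omega$, and cyclically for $I^2,I^3$ --- and that $\omega^2=1$. Hence
\[
P_\pm:=\tfrac12\,(1\pm\omega),\qquad P_\pm^2=P_\pm,\qquad P_+P_-=0,\qquad P_++P_-=1 ,
\]
are orthogonal central idempotents, and $\mathcal{A}$ decomposes as a direct sum of two-sided ideals $\mathcal{A}=P_+\mathcal{A}\oplus P_-\mathcal{A}$. This splitting is the whole point: since the $P_\pm$ are central, the product respects the decomposition and the two summands multiply to zero, so the sum is a sum of algebras, not merely of vector spaces.

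It then remains to identify each summand with a quaternion algebra. I would set $J^p_\pm:=P_\pm I^p$ and check the inherited relations $(J^p_\pm)^2=-P_\pm$ and $J^p_\pm J^q_\pm=-J^q_\pm J^p_\pm$ for $p\neq q$, while $J^1_\pm J^2_\pm J^3_\pm=P_\pm\omega=\pm P_\pm$. Multiplying the last identity by $(J^3_\pm)^{-1}=-J^3_\pm$ (the inverse taken inside the ideal, whose unit is $P_\pm$) gives $J^1_\pm J^2_\pm=\mp J^3_\pm$, and cyclically. Together with $(J^p_\pm)^2=-P_\pm$ these are exactly the quaternion relations, so each nonzero ideal is a quotient of $\mathbb{H}$; as $\mathbb{H}$ is simple, $P_+\mathcal{A}\cong\mathcal{H}_+$ and $P_-\mathcal{A}\cong\mathcal{H}_-$. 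Note that the two factors carry \emph{opposite} signs of the structure constant, $J^1J^2=\mp J^3$; this is the algebraic shadow of the two hypercomplex structures with opposite Bismut torsion described in the Introduction.

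The one point that must be argued carefully --- the ``hard part'' --- is that \emph{both} projectors are genuinely nonzero, so that one really obtains a sum of two quaternion algebras and not a single one. Since $\omega^2=1$, the matrix $\omega$ has eigenvalues $\pm1$ and $P_\pm$ project onto the corresponding eigenspaces, both of which are nontrivial precisely when $\omega\neq\pm\mathbb{1}$. I would close the argument by observing that $\omega=\epsilon\,\mathbb{1}$ with $\epsilon=\pm1$ is equivalent to the quaternion algebra: indeed $\omega=\epsilon\,\mathbb{1}$ forces $I^1I^2=-\epsilon I^3$ (and cyclically) upon multiplying by $(I^3)^{-1}=-I^3$, which is the content of \p{quatern}, and conversely \p{quatern} gives $\omega=\mp\mathbb{1}$. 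Hence the defining CKT hypothesis that the $I^p$ satisfy \p{Clifford} but \emph{not} \p{quatern} is exactly $\omega\neq\pm\mathbb{1}$, both eigenspaces are present, and $\mathcal{A}=\mathcal{H}_++\mathcal{H}_-$. This also makes transparent why an HKT manifold, for which $\omega=\pm\mathbb{1}$, yields a single quaternion algebra, whereas a genuine bi-HKT manifold yields two.
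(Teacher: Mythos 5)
Your proof is correct and takes essentially the same route as the paper: your central involution $\omega = I^1I^2I^3$ is (up to sign) the paper's $\Delta = -I^1J^1$ with $J^p = \frac12\,\epsilon^{pqr}I^qI^r$, your idempotents $P_\pm$ are its $\Delta_\mp$, and your projected generators $P_\pm I^p$ reproduce its $I^p_\pm = \frac12(I^p \pm J^p)$, which satisfy the same quaternion relations in each two-sided ideal. Your careful closing argument that both projectors are nonzero exactly when the $I^p$ are Clifford but not quaternionic is not spelled out in the paper's proof, but it matches the paper's subsequent remark that a manifold is called bi-HKT only when both sectors $\mathcal{H}_+$ and $\mathcal{H}_-$ are nonempty.
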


       \begin{proof}
 The products  of the original generators $I^p$ give 4 extra generators,
\be
 J^p = \frac 12\, \epsilon^{pqr} I^q I^r, \ \ \ \ \ \Delta = - I^1 J^1 = -I^2 J^2 = -I^3 J^3 \, .
 \ee
The algebra now closes,
 \be
\lb{alg-CKT}
\begin{array}{ccc}
&& I^p J^q = J^p I^q = \ -\delta^{pq} \Delta +  \epsilon^{pqr} I^r, \qquad I^p I^q = J^p J^q  = \ -\delta^{pq} +  \epsilon^{pqr} J^r, \\ [8pt]
&&
\Delta I^p = I^p \Delta = J^p,  \qquad \Delta J^p = J^p \Delta = I^p,
\qquad \Delta^2 = 1 \, .
\end{array}
\ee
Note that the matrices $J^p$ are also the complex structures --- they satisfy all the conditions of {\bf Definition 1}. In constrast to $I^p$, they in addition satisfy the quaternion algebra \p{quatern}.

Consider the matrices $I^p_{\pm} = \frac 12 (I^p \pm J^p) $ and $\Delta_\pm = \frac 12 (1 \pm \Delta)$.
It is clearly seen that the two subalgebras generated by $\{\Delta_+, I^p_+ \}$ and $\{\Delta_-, I^p_- \}$ are closed,
each of them being isomorphic to the quaternion algebra, and the products of the generators from two different
sets  vanish:
\be
I^p_\pm I^q_\pm =  -\delta^{pq} \Delta_\pm +  \epsilon^{pqr} I^r_\pm, \qquad I^p_\pm \Delta_\pm =
\Delta_\pm   I^p_\pm = I^p_\pm, \qquad
\Delta_\pm I^p_\mp = I^p_\mp \Delta_\pm = I^p_\pm I^q_\mp = 0 \, .
\ee

  \end{proof}

This simple mathematical fact gives a primary justification for calling the sigma models, based on the Clifford algebra \p{Clifford},  bi-HKT models. Further justifications will follow.

It makes sense to emphasize here that, according to {\bf Definitions 4, 6}, bi-HKT geometries and
HKT geometries are different. If writing instead in {\bf Definition 6} ``... not {\it necessarily} the quaternion
algebra...,'' the HKT geometry would be a particular case of the more general bi-HKT geometry. But
we prefer to call a manifold bi-HKT if both sectors ${\mathcal H}_+$ and ${\mathcal H}_-$ are not empty.

 \vspace{1mm}

{\bf Corollary}. Bearing in mind {\bf Proposition 1}, it follows that a basis can localy
 be chosen where
\be
\label{IandJ}
\begin{array}{cll}
&I^1_{bi-HKT} \ = \ {\rm diag} (\underbrace{\EuScript{I},\ldots, \EuScript{I}}_{n^*},\ \underbrace{-\EuScript{I}, \ldots, -\EuScript{I}}_{m^*}), \ \ \ \ \ \ \ \
&J^1_{bi-HKT} \ = \ {\rm diag} (\underbrace{\EuScript{I},\ldots,
\EuScript{I}}_{n^*+m^*}),  \\ [7pt]
&I^2_{bi-HKT} \ = \ {\rm diag} (\underbrace{\EuScript{J},\ldots, \EuScript{J}}_{n^*}, \  \underbrace{-\EuScript{J}, \ldots, -\EuScript{J}}_{m^*}), \ \ \ \ \ \ \ \
&J^2_{bi-HKT} \ = \ {\rm diag} (\underbrace{\EuScript{J},\ldots,
\EuScript{J}}_{n^*+m^*}),  \\ [7pt]
&I^3_{bi-HKT} \ = \ {\rm diag} (\underbrace{\EuScript{K},\ldots, \EuScript{K}}_{n^*}, \underbrace{ -\EuScript{K} , \ldots, -\EuScript{K}}_{m^*}),  \ \ \ \ \ \ \ \
& J^3_{bi-HKT} \ = \ {\rm diag} (\underbrace{\EuScript{K},\ldots,
\EuScript{K}}_{n^*+m^*}) \, .
\end{array}
\ee
That means in particular that the  Nijenhuis concomitants,
    \be
\frac 12\, N^\lambda_{\mu\nu} (p,q) \ =\ \Big\{ (I^p)_{[\mu}{}^\sigma \partial_\sigma  (I^q)_{\nu ]}{}^\lambda -
\partial_{[\mu}  (I^q)_{\nu ]}{}^\sigma (I^p)_\sigma{}^\lambda \Big\} + (p \leftrightarrow q) \, ,
    \ee
vanish for any pair of complex structures $\{I^p, I^q\}$ (and also  $\{J^p, J^q\}, \ \{I^p, J^q\}$).

{\bf Remark 1}. As was noted, in contrast to the complex structures $I^p$, the complex structures $J^p$ are quaternionic. The existence of a triple of quaternionic complex structures means that a bi-HKT manifold belongs to the class of {\it hypercomplex} manifolds as defined in \cite{Boyer}.  Bearing this in mind, one can {\it conjecture}  that any hypercomplex manifold is either HKT or bi-HKT.

We  now introduce besides the ordinary linear  ({\bf 4},\,{\bf 4},\,{\bf 0}) multiplet ${\cal V}^{i\alpha}$ , a {\it mirror} pseudoreal
multiplet ${\cal W}^{i' \alpha'}$ satisfying the constraints
   \begin{equation}\label{const2}
D^{i ({j}^\prime}{\cal W}^{k')\alpha'} = 0\, .
     \end{equation}
The mirror multiplet carries the index $i'$ of the right automorphic group $SU_R(2)$ and  a new global Pauli-G\"ursey  index $\alpha'$  having nothing to do with $\alpha$.

We choose the complex superfields ${\cal W}^\mu$ as follows,
  \be
\lb{W-compl}
{\cal W}^{i'\alpha'} \ =\  \left( \begin{array}{cc}  \bar {\cal W}^1 &   \bar{\cal W}^2 \\  {\cal W}^2
&  - {\cal W}^1
\end{array} \right)
 \ee
  Then the constraints \p{const2} acquire the form
\be
\label{cons_thetet_W}
\begin{array}{c}
\bar D_\theta {\cal W}^\mu =  D_\eta {\cal W}^\mu = 0 \,, \qquad
 D_\theta \bar {\cal W}^{\bar \mu} =  \bar
D_\eta \bar {\cal W}^{\bar \mu} = 0 \,,
\\ [7pt]
D_\theta {\cal W}^{\mu}-\epsilon^{\mu\nu}D_\eta\bar{\cal W}^{\bar{\nu}}=0 \,,\qquad
\bar D_\eta {\cal W}^{\mu}+\epsilon^{\mu\nu}\bar D_\theta\bar{\cal W}^{\bar{\nu}}=0  \, .
\end{array}
\ee
Their solution reads
 \be
\lb{W-mu-expan}
{\cal W}^{\mu}=W^{\mu} + \bar\eta\, \epsilon^{\mu\nu}\bar D\bar W^{\bar {\nu}} +i\eta\bar\eta\, \dot{W}^{\mu} \, .
 \ee
with
  \be
\lb{chir-comp}
W^\mu = w^\mu + \sqrt{2} \theta \chi^\mu - i \theta \bar \theta \dot{w}^\mu \, .
   \ee

${\cal W}^\mu$ has exactly the same form as ${\cal V}^m$ up to the interchange $\eta
\leftrightarrow \bar \eta$. This explains our convention \p{W-compl} with the bars in the first line and not
in the second line as in \p{V-compl}. We have chosen it for the  ${\cal N} = 4$ superfields ${\cal W}^\mu$ to involve usual chiral rather than antichiral
${\cal N} = 2$ superfields $W^\mu$. Note that, if introducing the real 4-vector fields ${\cal W}^M$ as in
  \p{Vvect}, their relationship with  ${\cal W}^\mu$ is the same as in \p{v12} with $i \to -i$,
  \be
\lb{w12}
 {\cal W}^{\mu=1} \ =\ {\cal W}^{M=3} +  i{\cal W}^{M=4}, \ \ \ \ \ \ \ \ \ \ {\cal W}^{\mu=2} \ =\
{\cal W}^{M=1} + i{\cal W}^{M=2} \, .
 \ee

It is clear that the Lagrangian like \p{act1},
but with the superpotential depending on $n^*$ mirror multiplets
${\cal W}_a$  has exactly the same component expression.
It is an HKT SQM model.
Bi-HKT models follow from the superpotential ${\cal L}$
depending on {\it both}  ordinary and mirror multiplets. To obtain the model  with the complex structures \p{IandJ}, we have
to take $n^*$ ordinary and $m^*$ mirror multiplets.

In \cite{FIS2}, we gave detailed formulas for the Lagrangian and the supercharges in the simplest nontrivial case
with one ordinary and one mirror multiplets. In what follows, we repeat this analysis for generic nonvanishing $n^*,m^*$.
 This will allow us to write simple generic expression for the supercharges.

\subsection{Lagrangian}

Consider the superfield action
\be
\lb{act4}
S={ \frac14}\displaystyle{\int}dt\, d\theta d\bar\theta\, d\eta d\bar\eta\,
{\cal L}({\cal V}_a^n,\bar{\cal V}_a^{\bar n},
{\cal W}_\alpha^\mu,\bar{\cal W}_\alpha^{\bar \mu})\, ,
\ee
$a = 1,\ldots, n^*; \ \alpha = 1, \ldots, m^*$;  $n,\mu = 1,2$.
(Hopefully, the reader will not confuse the flavor index $\alpha$ just introduced with the Pauli-G\"ursey index $\alpha$ in \p{const1}. Sorry, but the Latin and also the Greek alphabets have finite length.)

Integrating it over  $d\eta d\bar\eta$, we may express the action in ${\cal N} =2$ superspace as
  \be
\lb{act4-2}
S={ \frac14}\displaystyle{\int}dt\, d\theta d\bar\theta\, {\cal L}^\prime(V, \bar V, W, \bar W) \, ,
  \ee
with
\be
\lb{lagr4-2}
\begin{array}{rcl}
{\cal L}^\prime&=&( \Delta^{ab}_{m\bar n}{\cal L})\,DV^{m}_{a}\bar D\bar V^{\bar n}_{b}
- ( \Delta^{\alpha\beta}_{\mu\bar \nu}{\cal L})\,DW^{\mu}_{\alpha}\bar D\bar W^{\bar \nu}_{\beta} \\  [11pt]
&&
- \,\epsilon_{mn}\epsilon_{\mu\nu}( \bar\partial^{a}_{\bar n}\bar\partial^{\alpha}_{\bar \nu}
{\cal L})\,DV^{m}_{a}DW^{\mu}_{\alpha}
+ \epsilon_{\bar m\bar n}\epsilon_{\bar \mu\bar \nu}( \partial^{a}_{n}
\partial^{\alpha}_{\nu}{\cal L})\,\bar D\bar V^{\bar m}_{a}\bar D\bar W^{\bar \mu}_{\alpha}\,.
\end{array}
\ee
Here $\partial^{\,a}_m= {\partial}/{\partial { V}_a^m}$,
$\bar\partial^{\,a}_{\bar m}= {\partial}/{\partial \bar { V}_a^{\bar m}}$,
$\partial^{\,\alpha}_\mu= {\partial}/{\partial { W}_\alpha^\mu}$,
$\bar\partial^{\,\alpha}_{\bar \mu}= {\partial}/{\partial \bar { W}_\alpha^{\bar \mu}}$ and the structures
     \begin{equation} \lb{Del2a-VW}
\Delta_{m\bar n}^{ab}{\cal L} \ =\  \partial^a_m \bar \partial_{\bar n}^b {\cal L}   +
\epsilon_{m k} \epsilon_{\bar n \bar l} \bar \partial^a_{\bar k}  \partial^b_l  {\cal L}\,, \ \ \ \ \ \ \
   \Delta_{\mu\bar \nu}^{\alpha\beta}{\cal L}  \ =\ \partial^\alpha_\mu \partial^\beta_{\bar \nu}
 {\cal L}  +
\epsilon_{\mu\lambda} \epsilon_{\bar\nu \bar\rho}  \bar \partial^\alpha_{\bar \lambda}  \partial^\beta_\rho
{\cal L}
   \end{equation}
depend on $2n^* + 2 m^*$ chiral ${\cal N} = 2$
superfields $V^m_a,\bar V^{\bar m}_a, W^\mu_\alpha, \bar W^{\bar \mu}_\alpha$.
The covariant derivatives $D \equiv D_\theta$ and  $\bar D \equiv \bar D_\theta $ were defined in \p{D_thetet}.

 The Lagrangian \p{lagr4-2} involves the terms $\sim D \bar D$ and also the terms $\sim DD$ and $\sim\bar D \bar D$.
It belongs to the class of Dolbeault SQM models twisted by holomorphic torsions \cite{Hull,FIS1}. A half of the
supersymmetries of the original action \p{act4} is realized manifestly in \p{lagr4-2}.
Another half is ``hidden'' in this formulation. It is implemented as the invariance with respect to the following transformations of ${\cal N} = 2$
superfields,
\be
\lb{SUSYtran-N=2hid-VW}
\begin{array}{cll}
&\delta_\eta V_a^{m} = -\varepsilon_\eta\, \epsilon^{mn}\bar D\bar V_a^{\bar n}\,, \qquad
&\delta_\eta \bar V_a^{\bar{m}} = \bar\varepsilon_\eta\, \epsilon^{{\bar m}\bar{n}} D V_a^{n}\,,
\\ [7pt]
&\delta_\eta W_\alpha^{\mu} = -\bar\varepsilon_\eta\, \epsilon^{\mu\nu}\bar D\bar W_\alpha^{\bar {\nu}}\,, \qquad
&\delta_\eta \bar W_\alpha^{\bar{\mu}} = \varepsilon_\eta\, \epsilon^{{\bar \mu}\bar{\nu}} D W_\alpha^{\nu}\,.
\end{array}
\ee

With the Lagrangian \p{lagr4-2} at hand, we can go down to components. The second derivatives \p{Del2a-VW} of the prepotential give the metric; the bosonic part of the  component Lagrangian reads
\be
\lb{lagr4-b-app}
L_{b} =
h^{ab}_{m\bar n}\, \dot v_a^m \dot {\bar v}_b^{\bar n} + h^{\alpha\beta}_{\mu\bar \nu}\,
\dot w_\alpha^\mu \dot {\bar w}_\beta^{\bar \nu} \, ,
\ee
with
\begin{equation} \lb{metr}
h^{ab}_{m\bar n}=\Delta^{ab}_{m\bar n} {\cal L} \,,
\qquad h^{\alpha\beta}_{\mu\bar \nu}=-\Delta^{\alpha\beta}_{\mu\bar \nu} {\cal L}\,,
\end{equation}
and the derivatives are taken now with respect to the bosonic fields $v^m_a, w_\alpha^\mu$.

One can notice that
\begin{equation} \lb{prop-metr}
\begin{array}{c}
\epsilon_{nk}h^{ab}_{m\bar k}=-\epsilon_{mk}h^{ba}_{n\bar k}\,,\qquad
\epsilon_{\bar n\bar k}h^{ab}_{k\bar m}=-\epsilon_{\bar m\bar k}h^{ba}_{k\bar n}\,, \\ [8pt]
\epsilon_{\nu\lambda}h^{\alpha\beta}_{\mu\bar\lambda}=-\epsilon_{\mu\lambda}h^{\beta\alpha}_{\nu\bar\lambda}\,,\qquad
\epsilon_{\bar\nu\bar\lambda}h^{\alpha\beta}_{\lambda\bar\mu} = -\epsilon_{\bar\mu\bar\lambda}h^{\beta\alpha}_{\lambda\bar\nu}\,.
\end{array}
\end{equation}

 The  Lagrangian involves also the 2-fermion and 4-fermion terms depending on the variables
 $\psi^n_a$  (the fermion superpartners of $v_a^n$) and $\chi^\mu_\alpha$  (the fermion superpartners of $w_\alpha^\mu$).
They are spelled out in the
Appendix.

\subsection{Supercharges}

The supercharges can be found by the standard N\"other method.
A pair of complex conjugated supercharges that correspond to the manifest ${\cal N} = 2$ supersymmetry can be taken from Ref.\cite{FIS1}. They can be represented in the following form,
  \begin{equation}\label{S-CKT}
\begin{array}{rcl}
S&=&\sqrt{2}\,\psi_a^m\left(\Pi^a_m-{ \frac{i}{2}}\,\partial^a_{m}h^{bc}_{k\bar l}\, \psi_b^k\bar\psi_{c}^{\bar l}
-{ \frac{i}{2}}\,\partial^a_{m} h^{\beta\gamma}_{\nu\bar \lambda}\,
\chi_\beta^\nu\bar\chi_{\gamma}^{\bar \lambda} +
{ \frac{i}{2}}\,\epsilon_{mn}  \epsilon_{\nu\lambda}   \, \bar\partial^a_{\bar n} h^{\alpha\beta}_{\mu\bar \lambda}\,
\chi_\alpha^\mu\chi_{\beta}^{\nu}\right)
\\[9pt]
&&+\sqrt{2}\,\chi_\alpha^\mu\left(\Pi^\alpha_\mu-{ \frac{i}{2}}\,\partial^\alpha_{\mu}h^{\beta\gamma}_{\nu\bar \lambda}\,
\chi_\beta^\nu\bar\chi_{\gamma}^{\bar \lambda}-{ \frac{i}{2}}\,\partial^\alpha_{\mu}h^{bc}_{k\bar l}\, \psi_b^k\bar\psi_{c}^{\bar l}
+ { \frac{i}{2}}\,\epsilon_{\mu\nu}
\epsilon_{nk} \, \bar\partial^\alpha_{\bar\nu} h^{ab}_{m\bar k}\, \psi_a^m\psi_{b}^{n}\right),\\ [12pt]
    \bar S&=&\sqrt{2}\,\bar\psi_a^{\bar m}\left(\bar\Pi^a_{\bar m}+{ \frac{i}{2}}\,\bar\partial^a_{\bar m}h^{bc}_{k\bar l}\,
\psi_b^k\bar\psi_{c}^{\bar l}
+{ \frac{i}{2}}\,\bar\partial^a_{\bar m}h^{\beta\gamma}_{\nu\bar \lambda}\,
\chi_\beta^\nu\bar\chi_{\gamma}^{\bar \lambda}
- { \frac{i}{2}}\,\epsilon_{\bar m\bar n}\epsilon_{\bar\mu\bar\lambda}  \,  \partial^a_{n} h^{\alpha\beta}_{\lambda\bar \nu}\,
\bar\chi_\alpha^{\bar\mu}\bar\chi_{\beta}^{\bar\nu}\right)\\[9pt]
&&+\sqrt{2}\,\bar\chi_\alpha^{\bar\mu}\left(\bar\Pi^\alpha_{\bar\mu}+{ \frac{i}{2}}\,
\bar\partial^\alpha_{\bar\mu}h^{\beta\gamma}_{\nu\bar \lambda}\,
\chi_\beta^\nu\bar\chi_{\gamma}^{\bar \lambda}+{ \frac{i}{2}}\,\bar\partial^\alpha_{\bar\mu}h^{bc}_{k\bar l}\,
\psi_b^k\bar\psi_{c}^{\bar l}
- { \frac{i}{2}}\,\epsilon_{\bar\mu\bar\nu}
\epsilon_{\bar m\bar k}  \, \partial^\alpha_{\nu} h^{ab}_{k\bar n}\, \bar\psi_a^{\bar m}\bar\psi_{b}^{\bar n}\right) \,,
\end{array}
\end{equation}
where  the momenta
$\Pi^a_{m}$, $\Pi^\alpha_{\mu}$ are given by the variations of the Lagrangian with respect to  $\dot{v}_a^n$ and
 $\dot{w}_\alpha^\mu$, calculated  while
keeping $\psi^n_a, \chi^\mu_\alpha$ fixed. The fermion charge $F$ is not conserved here due to the presence of
holomorphic terms $\sim DD$ and $\sim \bar D \bar D$ in \p{lagr4-2}. The supercharge S involves the usual terms with $F=1$ and also the holomorphic terms $\sim \psi \chi \chi$ and $\sim \chi \psi \psi$ with $F=3$. Likewise, $\bar S$ involves the terms with $F= -1$ and $F = -3$.

The expressions for the supercharges can be rendered more compact if introducing the  notation
\begin{equation} \lb{ferm-B-c}
\Psi^{\cal M} =\left\{\psi_a^m,\chi_\alpha^\mu\right\},  \quad \bar\Psi^{\bar {\cal M}} =\left\{\bar\psi_a^{\bar m},\bar\chi_\alpha^{\bar \mu}\right\}\, ,
\qquad
\Pi_{\cal M} =\left\{\Pi^a_m,\Pi^\alpha_\mu\right\},\quad \partial_{\cal M} =\left\{\partial^a_m,\partial^\alpha_\mu\right\}.
\end{equation}
   In addition, we introduce the metric tensor
\begin{equation} \lb{metr-tens-B-c}
h_{{\cal M}\bar {\cal N}}=\left(
\begin{array}{cc}
h^{ab}_{m\bar n} & 0 \\
0 & h^{\alpha\beta}_{\mu\bar\nu} \\
\end{array}
\right),
\end{equation}
and {\it two} hypercomplex  structure matrices corresponding to two triples of complex structures in \p{IandJ}
\begin{equation} \lb{comp-str-B-c}
{\cal I}_{{\cal M}}{}^{\bar {\cal N}}= \bar {\cal I}_{\bar {\cal M}}{}^{\cal N}=\left(
\begin{array}{cc}
\epsilon_{mn}\delta^{a}{}_{b} & 0 \\
0 & -\epsilon_{\mu\nu}\delta^{\alpha}{}_{\beta} \\
\end{array}
\right),
\qquad
{\cal J}_{\cal M}{}^{\bar {\cal N}}= \bar {\cal J}_{\bar {\cal M}}{}^{\cal N}=\left(
\begin{array}{cc}
\epsilon_{mn}\delta^{a}{}_{b} & 0 \\
0 & \epsilon_{\mu\nu}\delta^{\alpha}{}_{\beta} \\
\end{array}
\right).
  \end{equation}
The supercharges \p{S-CKT} acquire   the form
\begin{equation}\label{S-B-c}
\begin{array}{rcl}
S&=&\sqrt{2}\,\Psi^{\cal M} \Big(\Pi_{\cal M} \,-\, { \frac{i}{2}}\,\partial_{\cal M}
 h_{{\cal K}\bar {\cal L}}\Psi^{\cal K}\bar\Psi^{\cal L}
\,+\, { \frac{i}{2}}\,\partial_{\cal M} {\cal C}_{{\cal K}{\cal L}}\, \Psi^{\cal K} \Psi^{\cal L}\Big)
\,,
\\ [14pt]
\bar S&=&\sqrt{2}\,\bar\Psi^{\bar {\cal M}} \Big(\bar\Pi_{\bar {\cal M}} \,+\,
 { \frac{i}{2}}\,\bar\partial_{\bar {\cal M}} h_{{\cal K}\bar {\cal L}}\Psi^{\cal K}\bar\Psi^{\cal L}
\,-\, { \frac{i}{2}}\,\bar\partial_{\bar {\cal M}}
\bar{\cal C}_{\bar {\cal K}\bar {\cal L}}\, \bar\Psi^{\bar {\cal K}}
\bar\Psi^{\bar {\cal L}}\Big)
\,,
\end{array}
   \end{equation}
where
   \begin{equation}\label{B-calB-c}
{\cal C}_{{\cal K}{\cal L}}= -{\cal I}_{[{\cal K}}{}^{\bar{\cal P}} {\cal J}_{{\cal L}]}^
{\bar {\cal R}}\,\partial_{\bar {\cal P}}\partial_{\bar {\cal R}}\,{\cal L}\,.
   \end{equation}

To find the second pair of the component supercharges corresponding to the transformations \p{SUSYtran-N=2hid-VW}, we note that the terms \p{lagr4-b-app}, \p{lagr4-2f-app}, and \p{lagr4-4f-app} in the
component Lagrangian are all invariant with respect to
the following $Z_2 \times Z_2$  transformations,
  \begin{equation} \label{dis-tr}
\begin{array}{c}
\chi^{\mu}_{\alpha}\rightarrow \epsilon^{\mu \nu}\bar\chi^{\bar \nu}_{\alpha}\,,\qquad
\bar\chi^{\bar \mu}_{\alpha}\rightarrow \epsilon^{\bar \mu\bar \nu}\chi^{\nu}_{\alpha}\,; \\ [7pt]
v^{m}_{a}\rightarrow \epsilon^{mn}\bar v^{\bar n}_{a}\,,\qquad
\bar v^{\bar m}_{a}\rightarrow \epsilon^{\bar m \bar n} v^{n}_{a}\,;\qquad
\partial_{m}^{a}\rightarrow - \epsilon_{mn}\bar \partial_{\bar n}^{a}\,,\qquad
\bar \partial_{\bar m}^{a}\rightarrow - \epsilon_{\bar m \bar n} \partial_{n}^{a} \, ,
\end{array}
\end{equation}
(the fields $w^{\mu}_{\alpha}$ and $\psi^{m}_{a}$ are not transformed), and
  \begin{equation} \label{dis-tr-1}
\begin{array}{c}
\psi^{m}_{a}\rightarrow \epsilon^{mn}\bar\psi^{\bar n}_{a}\,,\qquad
\bar\psi^{\bar m}_{a}\rightarrow \epsilon^{\bar m\bar n}\psi^{n}_{a}\,; \\ [7pt]
w^{\mu}_{\alpha}\rightarrow \epsilon^{\mu\nu}\bar w^{\bar \nu}_{\alpha}\,,\qquad
\bar w^{\bar \mu}_{\alpha}\rightarrow \epsilon^{\bar \mu\bar \nu} w^{\nu}_{\alpha}\,;\qquad
\partial_{\mu}^{\alpha}\rightarrow - \epsilon_{\mu\nu}\bar \partial_{\bar \nu}^{\alpha}\,,\qquad
\bar \partial_{\bar \mu}^{\alpha}\rightarrow - \epsilon_{\bar \mu\bar \nu} \partial_{\nu}^{\alpha}
\end{array}
\end{equation}
(the fields $v^{m}_{a}$ and $\chi^{\mu}_{\alpha}$ are not transformed).

Such discrete symmetries is a common feature of extended supersymmetric models.
They appear as discrete subgroups of continuous $R$-symmetries related to nontrivial automorphisms of supersymmetry algebra. In our case, the symmetry \p{dis-tr} is related to the automorphic $SU_L(2)$ symmetry that rotates the index $i$ of $\theta^{ik'}$. As is clear from the component expansions of the ordinary and the mirror multiplets (see Eqs. (2.3) and (2.41) in Ref. \cite{FIS2} ), the latter symmetry transforms, indeed, the fields $v^m$ and $\chi^\mu$   leaving  $w^\mu$  and  $\psi^m$  intact.
 Analogously, the symmetry \p{dis-tr-1} is related to the automorphic $SU_R(2)$  that
 acts on  the index $k'$ of  $\theta^{ik'}$. It rotates $w^\mu$  and  $\psi^m$   leaving  $v^m$ and $\chi^\mu$ intact.

The second pair of the supercharges can be obtained by acting on \p{S-CKT}
with the discrete transformations \p{dis-tr} supplemented by
$\Pi_{m}^{a}\rightarrow - \epsilon_{mn}\bar \Pi_{\bar n}^{a}$ or
with \p{dis-tr-1}
supplemented by $\Pi_{\mu}^{\alpha}\rightarrow - \epsilon_{\mu\nu}\bar \Pi_{\bar \nu}^{\alpha}$.
(In the first case, $S$  goes over to  $R$ and  $\bar S$  to  $\bar R$, and in the second,  $S \to \bar R, \bar S \to R$).

We obtain
\begin{equation}\label{R-CKT}
\begin{array}{rcl}
R&=&\sqrt{2}\,
\psi_a^{n}\,\epsilon_{nm}\left(\bar\Pi^a_{\bar m} - { \frac{i}{2}}\,\bar\partial^a_{\bar m}h^{b c}_{k\bar l}\, \psi_b^k\bar\psi_{ c}^{\bar l}
+ { \frac{i}{2}}\,\bar\partial^a_{\bar m}h^{\beta\gamma}_{\nu\bar \lambda}\,
\chi_\beta^\nu\bar\chi_{\gamma}^{\bar \lambda}
- { \frac{i}{2}}\,\epsilon_{\bar m\bar k} \epsilon_{\bar\mu\bar\lambda}  \, \partial^a_{k}h^{\alpha\beta}_{\lambda\bar \nu}\,
\bar\chi_\alpha^{\bar\mu}\bar\chi_{\beta}^{\bar\nu}\right)\\[9pt]
&&- \sqrt{2}\,
\bar\chi_\alpha^{\bar\nu}\,\epsilon_{\bar\nu\bar\mu}\left(\Pi^\alpha_\mu  +
{ \frac{i}{2}}\,\partial^\alpha_{\mu}h^{\beta\gamma}_{\nu\bar \lambda}\,
\chi_\beta^\nu\bar\chi_{\gamma}^{\bar \lambda}-{ \frac{i}{2}}\,\partial^\alpha_{\mu}h^{bc}_{k\bar l}\, \psi_b^k\bar\psi_{c}^{\bar l}
 + { \frac{i}{2}}\,\epsilon_{\mu\lambda}
\epsilon_{nk}  \, \bar\partial^\alpha_{\bar\lambda} h^{ab}_{m\bar k}\, \psi_a^m\psi_{b}^{n}\right)
\,,\\ [12pt]
\bar R&=&\sqrt{2}\,
\bar\psi_a^{\bar n}\,\epsilon_{\bar n\bar m}\left(\Pi^a_m+{ \frac{i}{2}}\,
\partial^a_{m}h^{bc}_{k\bar l}\, \psi_b^k\bar\psi_{c}^{\bar l}
-{  \frac{i}{2}}\,\partial^a_{m}h^{\beta\gamma}_{\nu\bar \lambda}\,
\chi_\beta^\nu\bar\chi_{\gamma}^{\bar \lambda}
+ { \frac{i}{2}}\,\epsilon_{mk} \epsilon_{\nu\lambda} \, \bar\partial^a_{\bar k}h^{\alpha\beta}_{\mu\bar \lambda}\,
\chi_\alpha^\mu\chi_{\beta}^{\nu}\right)
\\[9pt]
&&- \sqrt{2}\,
\chi_\alpha^\nu\,\epsilon_{\nu\mu}\left(\bar\Pi^\alpha_{\bar\mu}-{ \frac{i}{2}}\,
\bar\partial^\alpha_{\bar\mu}h^{\beta\gamma}_{\nu\bar \lambda}\,
\chi_\beta^\nu\bar\chi_{\gamma}^{\bar \lambda}+{ \frac{i}{2}}\,\bar\partial^\alpha_{\bar\mu}h^{bc}_{k\bar l}\,
\psi_b^k\bar\psi_{c}^{\bar l}
- { \frac{i}{2}}\,\epsilon_{\bar\mu\bar\lambda}
\epsilon_{\bar m\bar k}  \, \partial^\alpha_{\lambda} h^{ab}_{k\bar n}\, \bar\psi_a^{\bar m}\bar\psi_{b}^{\bar n}\right) \, .
\end{array}
\end{equation}

If suppressing in \p{S-CKT} and \p{R-CKT} the terms involving $\chi^\mu_\alpha$ {\it or} the terms involving $\psi^n_a$, we are 
would reproduce the expressions \p{SR-HKT-s} for the HKT supercharges. There are, however, also nontrivial mixed terms.

    Similar to $S$ and  $\bar S$,  one can derive more  compact transparent expressions for   $R, \bar R$, if introducing
the twisted fermion, momenta and derivative multiplets obtained by the action of the $Z_2$ symmetry
\p{dis-tr}  on the multiplets \p{ferm-B-c},
\begin{equation} \lb{ferm-B-c-Phi}
\Psi^{* \cal M} = \left\{\epsilon^{mn} \bar \psi_a^{\bar n},  \chi^\mu_\alpha \right\}, \ \ \ \ \ \
\partial^*_{\cal M}  =   \left\{ \partial_m^a, -\epsilon_{\mu\nu} \bar \partial_{\bar \nu}^\alpha \right \}, \ \ \ \ \ \
\Pi^*_{\cal M}  =  \left\{ \Pi_m^a, -\epsilon_{\mu\nu} \bar \Pi_{\bar \nu}^\alpha \right\},
\end{equation}

Then $R, \bar R$ are expressed into \p{ferm-B-c-Phi} exactly in the same way as $S, \bar S$ are expressed into
\p{ferm-B-c}.
When written in such a form, both pairs of the supercharges depend on two hypercomplex structures ${\cal I}$ and
${\cal J}$ in the same symmetric way.

Note the following. As was shown in \cite{taming}, in many cases a set of quantum complex supercharges
in a nontrivial SQM model, can be obtained from the set of complex supercharges of a free model by two operations:
 {\it i)} a similarity transformation (that needs not to be a unitary transformation) and
{\it ii)} Hamiltonian reduction. This applies in particular to the quantum counterparts of the supercharges \p{SR-HKT-s} of the
HKT model, one can represent them as
\be
\label{HKT-simil}
\begin{array}{lcl}
 \hat{S}^{\rm HKT} & =& \sqrt{2} \exp \left\{ \omega_{B   C} \psi^B \bar \psi^{ C} \right\}
\psi^A P_A \exp \left\{- \omega_{B   C} \psi^B \bar \psi^{ C} \right\} ,
\\ [8pt]
\hat{R}^{\rm HKT} & =& \sqrt{2} \exp \left\{\omega_{B  C} \psi^B \bar \psi^{C} \right\}
\psi^A {\cal I}_A{}^{D} \bar P_{ D} \exp \left\{- \omega_{B  C} \psi^B \bar \psi^{ C} \right\}
\, ,
\end{array}
 \ee
where $\psi^B$ and $\bar\psi^{ C}$ are the tangent space canonic fermion variables,
$\{\psi^B, \bar \psi^{ B} \}_{\rm P.B.} = -i \delta^{AB}$,  $\hat{P}_A = -i \partial_A$ are the flat canonical momenta that commute
with  $\psi^B$ and $\bar \psi^{ C}$; $\omega_{B  C} $ is an arbitrary complex matrix.

Let us explain it in some more details (in \cite{taming}, this statement was made only for the simplest case of a 4-dimensional conformally flat manifold).
Let $U = \exp \left\{ \omega_{B  C} \psi^B \bar \psi^{ C} \right\}  $. Using the Hadamard formula, one can show that
 \be
\lb{psi-sim}
\begin{array}{lcl}
U\psi^C U^{-1}  &=&  \psi^A  (e^\omega)_{AC} \,, \\ [7pt]
 U \partial_C U^{-1} &=& \partial_C +  (e^\omega)_{AD}  (\partial_C e^\omega)_{DB} \psi^A \bar \psi^{ B} \, .
\end{array}
  \ee

The matrices $e^{\pm \omega}$,  $e^{\pm \omega^\dagger}$ can be interpreted as the complex vielbeins,
\be
\lb{vielbein}
(e^\omega)_{AC} \ \to \ e^{\cal J}_A, \quad (e^{-\omega})_{CA} \ \to \ e^A_{\cal J} , \quad (e^{\omega^\dagger})_{CA} \ \to \
\bar e^{\bar {\cal J}}_{ A}, \quad       (e^{-\omega^\dagger})_{AC} \ \to \
\bar e_{\bar {\cal J}}^{A}\, .
 \ee
We obtain,
\be
\lb{S-cherez-P}
\hat{S}^{\rm HKT} \ =\ \sqrt{2} \, \psi^{\cal M} \left[ \hat{P}_{\cal M} - i ({\bar e}^{B}_{\bar {\cal K}}\,
\partial_{\cal M} e^B_{\cal J} ) \psi^{\cal J} \bar\psi^{\bar{\cal K}} \right] \, .
\ee
The same expression describes the classical complex supercharge, with the canonical momenta $P_{\cal M}$  being
the variation of the Lagrangian with respect to $\dot{z}^{\cal M}$ with fixed $\psi^A$, $\bar\psi^A$. To make contact with
\p{SR-HKT-s}, one should go over to the momenta $\Pi_{\cal M}$ representing the variations, calculated while keeping fixed the fermions with the world indices.  $P_{\cal M}$ is  related to $\Pi_{\cal M}$ as follows \cite{quasi},
 \be
 \lb{Pi-P}
P_{\cal M} \ =\ \Pi_{\cal M} + \frac i2 \left[\bar e^{ B}_{\bar {\cal K}} \, (\partial_{\cal M} e^B_{\cal J})
-   (\partial_{\cal M}  \bar e^{ B}_{\bar {\cal K}} ) \, e^B_{\cal J}  \right] \psi^{\cal J} \bar\psi^{\bar {\cal K}}   \, .
  \ee
The first line in \p{SR-HKT-s} is thus reproduced.

By the same token, the second line in \p{HKT-simil} coincides with the third line in \p{SR-HKT-s}, the  matrix
$ {\cal I}_A{}^{ D} $  going over to  ${\cal I}_{\cal M}{}^{\bar {\cal N}}$.
  In this case, the supercharges $S$ and $R$ are conjugated by {\it the same} operator.

On the other hand, for a bi-HKT model, such a universal similarity transformation does not exist.
One can transform the flat supercharge $S_0$ to the expression in \p{S-CKT} and the flat supercharge $R_0$ to the
expression in \p{R-CKT}, but the corresponding operators $U$ are {\it different}. For example, the terms $\sim\!\psi\chi\chi$ and
$\sim\!\chi\psi\psi$ in $ S$ written in \p{S-CKT} are obtained from $S_0$ by a similarity transformation with the operator
$\exp\{ \sim\!\psi \chi \}$, whereas the corresponding terms $\sim\!\psi \bar\chi \bar\chi$ and $\sim\!\bar\chi \psi\psi$ in
$R$ written in \p{R-CKT} are obtained from $R_0$ by a similarity transformation with the operator
$\exp\{ \sim\!\psi \bar\chi \}$.

\subsection{Real supercharges and geometry.}

We will prove here the following theorem
(it represents a generalization of a similar theorem proved 
in \cite{FIS2} for  the simplest bi-HKT model with one ordinary and one mirror multiplet):
\begin{thm}
In the system, described by the superfield Lagrangian \p{act4},  the supercharges \p{S-CKT} and \p{R-CKT} are equivalent to the following four real supercharges,
 \be
\lb{Q_real}
\begin{array}{lcl}
Q &=&  \Psi^M \left(P_M - {\displaystyle \frac i2}\, \Omega_{MNL}  \Psi^N  \Psi^L +
{\displaystyle \frac i{12}}\, C_{MNL}   \Psi^N  \Psi^L \right)  \, , \\ [9pt]
Q^p & =&  \Psi^S (I^p)_S^{\ M}  \left( P_M  - {\displaystyle \frac i2}\,  \Omega_{MNL}  \Psi^N  \Psi^L
- {\displaystyle \frac i4}\,  B^p_{MNL}  \Psi^N  \Psi^L + {\displaystyle \frac i{12}}\,
H^p_{MNL}  \Psi^N  \Psi^L \right) \, ,
\end{array}
  \ee
where $\Psi^M = \{ \Psi^{\cal M}, \bar\Psi^{\bar{\cal M}} \}$; the canonical momenta $P_M =
\{P_{\cal M}, \bar P_{\bar {\cal M}}\}$ are calculated, while keeping $\Psi^A$ fixed, and are related
to $\Pi_M$ as in \p{Pi-P}; the complex structures $I^p$ are given in \p{IandJ}, \p{IJKcanon};
  \be
\lb{Omega}
\Omega_{MNL} = e^A_N e^B_L  \Omega_{M, AB} \ =\
e^A_N e^B_L  e_{AS}  \left( \partial_M e^S_B + \Gamma^S_{MK} e^K_B \right)
  \ee
are the spin connections; $B^p_{MNL}$ are the Bismut torsions \p{CBismut}
for the complex structure $I^p$; the form associated with the totally antisymmetric $H^p_{MNL}$ is the sum
of the exact holomorphic (3,0) and
antiholomorphic (0,3) forms with respect to $I^p$, such that the sums
$ B^p_{MNL} + H^p_{MNL}$ give for all $p=1,2,3$ one and the same
  full  torsion   tensor $C_{MNL}$ entering the first line in \p{Q_real}.
The supercharges \p{Q_real} satisfy the standard supersymmetry
algebra \p{SUSY-alg}.
\end{thm}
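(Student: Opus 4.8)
The plan is to prove the theorem in two stages: first to show, by an explicit change of variables, that the complex supercharges \p{S-CKT} and \p{R-CKT} are particular real linear combinations of the four charges \p{Q_real}; and then to establish the algebra \p{SUSY-alg}. The natural first step is to trade the holomorphic fermions $\psi^{\cal M},\bar\psi^{\bar{\cal M}}$ and the canonical momenta $\Pi_{\cal M}$ for the real tangent-space fermions $\Psi^A$ and the momenta $P_M$ related to $\Pi_M$ through \p{Pi-P}. The dictionary is fixed once and for all by the canonical forms of the complex structures \p{IandJ}, \p{IJKcanon} and by the $\sigma$-matrix decomposition \p{Vvect}, \p{v12}, \p{w12}, which map each complex pair $\{v^m_a\}$, $\{w^\mu_\alpha\}$ to a real $4$-vector per multiplet. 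Under this map I would organize the four real combinations so that $\{S,\bar S\}$ reproduces the de~Rham-type charge $Q$ together with $Q^1$, the charge carrying the complex structure $I^1$ that defines holomorphicity via \p{sign-z}, \p{Icompl}, while $\{R,\bar R\}$ --- the images of $\{S,\bar S\}$ under the discrete rotation \p{dis-tr} --- reproduces $Q^2$ and $Q^3$. Schematically, $Q\sim S+\bar S$, $Q^1\sim i(S-\bar S)$, $Q^2\sim R+\bar R$, $Q^3\sim i(R-\bar R)$.

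The core of this first stage is the reorganization of the fermionic terms. Each complex supercharge carries a two-fermion piece $-\frac i2\,\partial_{\cal M}h_{{\cal K}\bar{\cal L}}\,\Psi^{\cal K}\bar\Psi^{\bar{\cal L}}$ and a purely holomorphic piece $+\frac i2\,\partial_{\cal M}{\cal C}_{{\cal K}{\cal L}}\,\Psi^{\cal K}\Psi^{\cal L}$, with ${\cal C}$ given by \p{B-calB-c}. In the real basis the derivative of the Hermitian metric assembles into the spin connection \p{Omega} --- this is precisely the vielbein rewriting already exhibited in \p{S-cherez-P}, \p{Pi-P} --- and it must come out \emph{universal}, identical in $Q$ and in all three $Q^p$, reflecting that the metric part of the map does not feel the complex structure inserted in the leading factor $\Psi^S(I^p)_S{}^M$. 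The remaining totally antisymmetric three-fermion coefficient then has to reproduce $\frac i{12}C_{MNL}$ in $Q$ and $-\frac i4 B^p_{MNL}+\frac i{12}H^p_{MNL}$ in $Q^p$.

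Establishing this last split is the \textbf{main obstacle}. I would show that the single holomorphic tensor ${\cal C}_{{\cal K}{\cal L}}$ of \p{B-calB-c}, written through the two hypercomplex structures ${\cal I},{\cal J}$ of \p{comp-str-B-c}, decomposes for each $p$ into the Bismut torsion $B^p$ of $I^p$ \p{CBismut} plus an exact $(3,0)+(0,3)$ remainder $H^p$, and --- crucially --- that the full combination $B^p+H^p=C$ is \emph{one and the same} tensor for $p=1,2,3$. This is exactly the defining content of a CKT manifold (\textbf{Definition 6}), now to be read off from the prepotential rather than assumed. The identification rests on the Hermiticity of $h_{{\cal M}\bar{\cal N}}$ with respect to every $I^p$, guaranteed by the canonical forms \p{IandJ}, and on the symmetry relations \p{prop-metr} satisfied by the second derivatives of ${\cal L}$, which encode the Clifford structure \p{Clifford} at the level of the metric. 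Because $S$ and $R$ arise from the flat charges by \emph{different} similarity operators (as emphasized after \p{Pi-P}), the two sectors have to be treated separately and then matched, which is where the bookkeeping is heaviest.

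For the second stage I would invoke the manifest structure of the problem: the superfield action \p{act4} is off-shell ${\cal N}=4$ supersymmetric, with one ${\cal N}=2$ realized by the $\theta$-translations and the second by \p{SUSYtran-N=2hid-VW}. Hence the four N\"other charges $S,\bar S,R,\bar R$, and with them their real combinations $Q,Q^p$, automatically close the algebra \p{SUSY-alg}, exactly as in \textbf{Proposition 3} for the pure HKT case \p{algQR}. Alternatively, and more instructively, one can verify the brackets directly on \p{Q_real} using \p{Poisferm}: the relations $\{Q,Q^p\}=0$ and $\{Q^p,Q^q\}=\delta^{pq}H$ then reduce to (i) the vanishing of the Nijenhuis concomitants for every pair of structures, which holds by the \textbf{Corollary} above, (ii) the covariant constancy of each $I^p$ for the connection with the common torsion $C$, and (iii) the exactness of the holomorphic part $H^p$. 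This direct computation is where the three ingredients singled out in the theorem --- the universal $\Omega$, the Bismut $B^p$, and the exact $H^p$ --- conspire, and it simultaneously serves as a consistency check on the split obtained in the first stage.
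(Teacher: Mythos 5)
Your proposal is correct in outline and, for the first pair of supercharges, follows essentially the paper's own route: the paper likewise splits $S,\bar S$ into a truncated (Dolbeault) piece and the holomorphic $F=\pm 3$ piece, quotes the known real rewriting \p{real-Dolb} of the truncated part (universal spin connection $\Omega$ plus Bismut torsion), and identifies the holomorphic piece with an exact form $H^1$ via \p{S-hol}, \p{B-tors}, thereby obtaining $Q$ and, up to sign, $Q^1$. Where you genuinely diverge is in the second pair: you plan to grind through $\{R,\bar R\}$ by the same explicit change of variables (and you correctly flag that this is where the bookkeeping is heaviest, since $S$ and $R$ descend from the flat charges by \emph{different} similarity operators), whereas the paper avoids that computation altogether by an automorphism argument. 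It decomposes the pseudoreal supersymmetry parameter $\varepsilon^{ik'}$ into isosinglet and isotriplet parts \p{iso}, observes that $S,\bar S$ are the N\"other charges for $\varepsilon_0,\varepsilon_3$ and $R,\bar R$ for $\varepsilon_{1,2}$, and concludes by symmetry that the latter must have the same structure as $Q^1$ but with complex structures $-I^3$ and $I^2$; the Clifford triple $(-I^3,I^2,I^1)$ is then isomorphic to $(I^1,I^2,I^3)$ because $(-\EuScript{K},\EuScript{J},\EuScript{I})$ obeys the same quaternion algebra as $(\EuScript{I},\EuScript{J},\EuScript{K})$. The symmetry route buys economy and explains why your naive labels $Q^2\sim R+\bar R$, $Q^3\sim i(R-\bar R)$ in fact come out twisted by a sign and a permutation; your direct route, if carried through, would in exchange verify the CKT split $B^p+H^p=C$ independently for each $p$, which the paper only establishes explicitly for $p=1$ and inherits for $p=2,3$ by symmetry. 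The closure of the algebra is in both treatments a consequence of the N\"other origin of the charges in the off-shell ${\cal N}=4$ action \p{act4}, as in {\bf Proposition 3}.

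Two soft spots deserve mention. First, your stage one glosses the single step the paper is most careful about: under the uniform convention \p{sign-z} the imaginary combination of the truncated charges carries the \emph{quaternionic} structure $J^1$, not the Clifford structure $I^1$; it is only because the ordinary and mirror sectors use opposite complex conventions, \p{v12} versus \p{w12}, that the ordinary-sector blocks flip sign and $J^1$ turns into $-I^1$. Your phrase that $I^1$ ``defines holomorphicity via \p{sign-z}'' inverts this logic --- with \p{sign-z} applied uniformly one lands instead on the isomorphic triple $(-J^3,I^2,J^1)$ ({\bf Remark 2}). Second, in your alternative direct verification of \p{SUSY-alg}, ingredient (ii) is misstated: each $I^p$ is covariantly constant with respect to its \emph{own} Bismut connection, with torsion $B^p$; by uniqueness of the Bismut connection it is not preserved by the connection with the common torsion $C=B^p+H^p$ (if it were, all $H^p$ would vanish and the geometry would be HKT rather than bi-HKT). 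Since your primary closure argument is the N\"other one, this does not sink the proposal, but the direct check as you sketched it would fail at that step and would have to use the genuine CKT data --- Bismut constancy with $B^p$ plus exactness of $H^p$ --- instead.
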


In other words, the Lagrangian \p{act4} describes a ${\cal N} = 4$  sigma model, living
 on a bi-HKT manifold, as defined in the beginning of Sect. 3.

\begin{proof}

\begin{itemize}

\item Consider first the supercharges \p{S-CKT}. Without the holomorphic terms (of fermion charge $F = 3$ in $S$ and of fermion charge $F = -3$ in $\bar S$), the supercharges would have the form
  \be
\lb{S-trunc}
S^{\rm truncated}  &=& \sqrt{2} \Psi^{\cal M} \left[ \Pi_{\cal M} - \frac i2
\partial_{\cal M} h_{{\cal N} \bar {\cal P}} \Psi^{\cal N} \bar\Psi^{\bar {\cal P}} \right] \, ,
   \ee
The expressions \p{S-trunc} coincide with the complex supercharges of the Dolbeault
model (see Eq.(4.2) of Ref.\cite{quasi}). As was shown in \cite{Braden,Mavra,QHKT},
the real and imaginary parts of the supercharges \p{S-trunc} can be presented in the form
\be
\lb{real-Dolb}
\begin{array}{lcl}
Q^{\rm trunc.} &=& {\displaystyle \frac {S^{\rm trunc.} + \bar S^{\rm trunc.}}{\sqrt{2}}}  =
 \Psi^M \left(P_M - {\displaystyle \frac i2}\, \Omega_{MNL}
\Psi^N  \Psi^L + {\displaystyle \frac i{12}}\, B_{MNL}   \Psi^N  \Psi^L \right)  \, , \\ [8pt]
\tilde{Q}^{\rm trunc.} &=& {\displaystyle \frac {i(\bar S^{\rm trunc.} - S^{\rm trunc.})}{\sqrt{2}}} =  \Psi^S I_S^{\ M}  \left( P_M  -
{\displaystyle \frac i2}\, \Omega_{MNL}  \Psi^N  \Psi^L
- {\displaystyle \frac i4}\, B_{MNL}  \Psi^N  \Psi^L  \right) \, ,
\end{array}
\ee
where $I_S^{\ M}$ is a complex structure $I =  -{\rm diag}(\epsilon, \ldots ,
\epsilon)$ and $B_{MNL}$ is its Bismut torsion. The complex structure $I$
coincides with the complex structure $J^1$ in \p{IandJ}.

Consider now the holomorphic terms. They can be presented as
   \be
\lb{S-hol}
\begin{array}{lcl}
S^{\rm hol} &=& {\displaystyle \frac {i\sqrt{2}}{12}}\, \Psi^{\cal M}  \Psi^{\cal N} \Psi^{\cal S}
H^1_{{\cal M} {\cal N} {\cal S}} \\ [8pt]
   \bar S^{\rm hol} &=& {\displaystyle \frac {i\sqrt{2}}{12}}\, \bar \Psi^{\bar {\cal M}} \bar \Psi^{\bar {\cal N}}
   \bar \Psi^{\bar {\cal S}}
H^1_{\bar {\cal M} \bar {\cal N} \bar {\cal S}}\, ,
\end{array}
   \ee
where
 \be
 H^1_{ma, \mu \alpha,  \nu\beta} \ =\  \epsilon_{mn} \epsilon_{\mu\nu} \,\bar \partial^a_{\bar n}
 h_{\mu\bar\lambda}^{\alpha\beta},  \quad \quad
  H^1_{\mu \alpha, ma , nb} \ =\  \epsilon_{\mu\nu} \epsilon_{nk} \,\bar \partial^\alpha_{\bar \nu}
 h_{m \bar k}^{ab} \,,
 \ee
 and the other nonzero components of $H^1_{{\cal M} {\cal N} {\cal S}}$ are restored by antisymmetry.

   The associated (3,0) form is exact, $H^1_{{\cal M} {\cal N} {\cal S}} = \partial_{[{\cal M}}
    {\cal B}_{{\cal N}{\cal S}]}$ with
   \be
   \lb{B-tors}
{\cal B}_{ma, \mu\alpha} = \epsilon_{mn} \epsilon_{\mu\nu} \bar \partial^a_{\bar n}
 \partial^\alpha_{\bar \nu} {\cal L}
  \ee
[this structure was displayed in the superfield Lagrangian  \p{lagr4-2}].

  Adding the real and imaginary parts of $S^{\rm hol}$ to \p{real-Dolb}, we arrive at the supercharge
$Q$  in \p{Q_real} for the real part, while the imaginary part is given by the expression similar to  $Q^1$
in \p{Q_real} , but with $J^1$ standing for $I^1$.

Recall now that the result \p{real-Dolb} was derived under the universal standard convention \p{sign-z} for the expression
of all complex variables via the real ones. However, in our case the conventions in the ordinary and the mirror sectors are different
[see Eqs. \p{v12}, \p{w12}]. In the mirror sector, \p{sign-z} coincides with \p{v12}, while, in the ordinary sector,
the holomorphic and antiholomorphic coordinates are interchanged. Thus, one should change the sign of the components of $I = J^1$ in the
ordinary sector, which gives $-I^1$, and we arrive finally at $-Q^1$ in \p{Q_real}. The overall sign can, of course, be reversed.

\item The superfield action \p{act4} is invariant under the supersymmetry transformations,
  \be
\lb{s-space-tran}
 \theta^{i k'} & \to &  \theta^{i k'}  + \varepsilon^{i k'} \nn
t & \to & t + i \theta^{i k'}  \varepsilon_{ik'} \, .
  \ee
The pseudoreal tensor $\varepsilon^{i k'}$ involves an isosinglet and isotriplet part,
  \be
\lb{iso}
\varepsilon^i_{k'} \ =\ i \varepsilon_0 \delta^i_{k'} + \varepsilon_p
(\sigma^p)^i{}_{k'}
  \ee
with real Grassmann $\varepsilon_0, \varepsilon_p$.
The supercharges $S, \bar S$ are the N\"other charges corresponding to the shift of $\theta$. Bearing in mind
\p{thetet-def} and \p{iso}, this means that the real supercharges $Q,  Q^1$ in Eq.\p{Q_real} are obtained by variating  the Lagrangian with the parameters
$\varepsilon_0, \varepsilon_3$. But there are also the supercharges
$R, \bar R$ corresponding to the shift of $\eta$, which involves in the language of Eq.\p{iso} the parameters
$\varepsilon_1$ and $\varepsilon_2$. The latter represent the real and imaginary parts of the complex parameter $\varepsilon_\eta $ that enters  the transformation
law \p{SUSYtran-N=2hid-VW}. By symmetry, the supercharges associated with the variations $\propto \varepsilon_{2,3}$ should have exactly the same structure
as the supercharge $Q^1$ in \p{Q_real}, but involve  now the complex structures $I^{2,3}$ and coincide (up to a possible extra sign) with $Q^{2,3}$. A precise calculation shows that, while the
supercharge associated with $\varepsilon_3$ involves the complex structure $I^1$, the     supercharge associated with $\varepsilon_1$ involves the complex structure $-I^3$
and the supercharge associated with $\varepsilon_2$ --- the  structure $I^2$. The triple of the Clifford complex structures $(-I^3, I^2, I^1)$ is isomorphic to the triple
$(I^1, I^2, I^3)$ in the full algebra ${\cal H}_+ + {\cal H}_-$. That follows from the fact that the triple $(-\EuScript{K}, \EuScript{J}, \EuScript{I})$ forms the same quaternionic algebra as
$(\EuScript{I}, \EuScript{J}, \EuScript{K})$.

\end{itemize}
\end{proof}

{\bf Remark 2}. The fact that {\it both} $J^1$ and $I^1$ are legitimate complex structures in our model is related to the fact that the metric \p{metr} has a block diagonal
form. Clearly, it is invariant under the interchange $v \leftrightarrow \bar v$ and/or $w \leftrightarrow \bar w$. Thus, using the conventions \p{v12} and \p{w12} is a convenient
option, but not an obligation. One could as well use the universal convention \p{sign-z}. In this case, we would obtain instead of the triple $(-I^3, I^2, I^1)$, the isomorphic
Clifford triple $(-J^3, I^2, J^1)$.

\setcounter{equation}0

\section{Quasicomplex  bi-K\"ahler manifolds.}

Consider a set of $n^*$ superfields representing chiral linear ({\bf 2}, {\bf 4}, {\bf 2}) multiplets,
(it would be interesting also to study the models based on the nonilinear ${\cal N} = 4$ chiral multiplets \cite{IKL,BelKriv}),
\begin{equation}\label{242-sf}
{\cal Z}^a(t; \, \theta,\bar\theta,\eta,\bar\eta)\,,\qquad a=1,\ldots,n^*\,,
\end{equation}
and a set of $m^*$ twisted (mirror) chiral ({\bf 2}, {\bf 4}, {\bf 2}) superfields,
\begin{equation}\label{242-sf-tw}
{\cal U}^\alpha(t; \theta,\bar\theta,\eta,\bar\eta)\,,\qquad \alpha=1,\ldots,m^*\,.
\end{equation}
These superfields satisfy the constraints
\begin{equation}\label{242-const}
\bar D_\theta\,{\cal Z}^a=0\,,\qquad \bar D_\eta\,{\cal Z}^a=0\,,
\end{equation}
\begin{equation}\label{242-const1}
\bar D_\theta\,{\cal U}^\alpha=0\,,\qquad D_\eta\,{\cal U}^\alpha=0\, .
\end{equation}

The superfields \eqref{242-sf}, \eqref{242-sf-tw} can be expressed via  ${\cal N} = 2$  superfields. There are two options.

\subsection{({\bf 2},\,{\bf 4},\,{\bf 2}) = ({\bf 2},\,{\bf 2},\,{\bf 0})$\oplus$({\bf 0},\,{\bf 2},\,{\bf 2})}

One can represent a ({\bf 2},\,{\bf 4},\,{\bf 2}) multiplet as two chiral multiplets of the types   ({\bf 2},\,{\bf 2},\,{\bf 0}) and
({\bf 0},\,{\bf 2},\,{\bf 2})\,  \cite{IS},
\begin{equation}\label{calZa}
{\cal Z}^a  = Z^a + \sqrt{2}\, \eta \, \Phi^a - i \eta \bar \eta\, \dot{Z}^a \, ,\qquad
{\cal U}^\alpha  = U^\alpha - \sqrt{2}\, \bar\eta \, \Psi^\alpha + i \eta \bar \eta\, \dot{U}^\alpha \,
\end{equation}
where
\be\label{Z-and-U}
\begin{array}{rll}
&Z^a = z^a + \sqrt{2} \theta \phi^a - i \theta \bar \theta \dot{z}^a \qquad & (\bar D_\theta Z^a = 0 )\, ,  \\ [6pt]
&U^\alpha= u^\alpha + \sqrt{2}\, \theta\,\rho^\alpha -i\theta\bar\theta\,\dot u^\alpha \qquad &(\bar D_\theta U^\alpha = 0)
\end{array}
\ee
are the usual chiral  superfields and
\be
\label{Phi-Psi}
\begin{array}{rll}
&\Phi^a =  \varphi^a + \sqrt{2}\, \theta\,A^a -i\theta\bar\theta\,\dot \varphi^a\,,\qquad &\bar D_\theta\,{\Phi}^a=0\, \\ [6pt]
&\Psi^\alpha =  \varrho^\alpha + \sqrt{2}\, \theta\,B^\alpha -i\theta\bar\theta\,\dot \varrho^\alpha\,,\qquad &\bar D_\theta\,{\Psi}^\alpha=0\,.
\end{array}
\ee
are the chiral superfields of  type ({\bf 0}, {\bf 2}, {\bf 2}).
   In \eqref{Z-and-U}, \eqref{Phi-Psi},  the dynamical fields $z^a$, $u^\alpha$ and the complex auxiliary fields $A^a$, $B^\alpha$ are bosonic
whereas $\phi^a$, $\varphi^a$, $\rho^\alpha$, $\varrho^\alpha$ are fermionic.

The standard action of the twisted K\"ahler sigma model derived in \cite{GHR} is described by the action
$\sim {\int}dt\, d\theta d\bar\theta\, d\eta d\bar\eta\, {\cal K}({\cal Z},\bar{\cal Z},{\cal U},\bar{\cal U})$.
One can add to this expression  holomorphic $F$-terms  as in \p{act242-HKT}  and write
\be\label{act242}
\begin{array}{rcl}
S^{\rm bi-K} &=&  {\displaystyle \frac14}\displaystyle{\int}dt\, d\theta d\bar\theta\,
d\eta d\bar\eta\, {\cal K}({\cal Z},\bar{\cal Z},{\cal U},\bar{\cal U}) \\ [7pt]
&& -  {\displaystyle \frac12} \left[  \displaystyle{\int}dt\, d\theta \, d\eta \,
{\cal A}_{a}({\cal Z}) \dot{\cal Z}^{a}  + {\rm c.c.} \right]
  +{\displaystyle \frac12} \left[ \displaystyle{\int}dt\, d\theta \, d\bar\eta \,
{\cal B}_{\alpha}({\cal U})\,\dot{\cal U}^{\alpha} +  {\rm c.c.} \right]
\end{array}
\ee
Bearing in mind \p{calZa}, we can  integrate it over $d\eta$ and  $d\bar\eta$ and express the action in terms
of the ${\cal N} = 2$ superfields
({\bf 2}, {\bf 2}, {\bf 0}) and ({\bf 2}, {\bf 2}, {\bf 0}).  We obtain
\be
\lb{act422-2}
\begin{array}{rcl}
S^{\rm bi-K} &=&{\displaystyle \frac14}\displaystyle{\int}dt\, d\theta d\bar\theta\, \left[  \kappa_{a\bar b}
\left( D{Z}^{a}\bar D{\bar Z}^{\bar b}-2{\Phi}^{a}{\bar\Phi}^{\bar b} \right)
+  \kappa_{\alpha\bar \beta} \left( D{U}^{\alpha}\bar D{\bar U}^{\bar\beta}-2{\Psi}^{\alpha}{\bar\Psi}^{\bar \beta} \right) \right]\\  [11pt]
&&
-{\displaystyle \frac12}\displaystyle{\int}dt\, d\theta d\bar\theta\, \left[(\partial_{a}\partial_{\beta}{\cal K})\,{\Phi}^{a}{\Psi}^{\beta}-
(\bar\partial_{\bar a}\bar \partial_{\bar\beta}{\cal K})\,\bar{\Phi}^{\bar a}{\bar\Psi}^{\bar\beta} \right]\\  [11pt]
&&
+{\displaystyle \frac{1}{\sqrt{2}}}\displaystyle{\int}dt\, d\theta \left(  {\cal F}_{ab} (Z) \dot{Z}^{a}{\Phi}^{b}+
 {\cal G}_{\alpha\beta} (U) \dot{U}^{\alpha}{\Psi}^{\beta} \right)\\  [11pt]
&&
+{\displaystyle \frac{1}{\sqrt{2}}}\displaystyle{\int}dt\, d\bar\theta \left(  \bar{\cal F}_{\bar a\bar b} (\bar Z)
\dot{\bar Z}^{\bar a}\bar{\Phi}^{\bar b}+
 \bar{\cal G}_{\bar\alpha\bar\beta}(\bar U)
\dot{\bar U}^{\bar\alpha}\bar{\Psi}^{\bar\beta}\right)\, ,
\end{array}
\ee
where
\be
\lb{kap-FG}
\begin{array}{c}
 \kappa_{a\bar b} =  \partial_{a}\bar\partial_{\bar b}{\cal K}\,,\qquad
\kappa_{\alpha\bar \beta} =  -\partial_{\alpha}\bar\partial_{\bar \beta}{\cal K}  \\ [7pt]
 {\cal F}_{ab}  \ =\ \partial_a {\cal A}_b -  \partial_b {\cal A}_a, \qquad
{\cal G}_{\alpha\beta}  \ =\ \partial_\alpha {\cal B}_\beta -  \partial_\beta {\cal B}_\alpha \, .
\end{array}
  \ee

The bosonic part of the corresponding component Lagrangian  reads
\be
\lb{lagr242-b-Ka-2}
\begin{array}{rcl}
L^{\rm bi-K}_{bos} &=& \kappa_{a\bar b}\left(\dot z^{a} \dot {\bar z}^{\bar b} \ +\
A^{a} {\bar A}^{\bar b}\right)+ \kappa_{\alpha\bar \beta}\left(\dot u^{\alpha} \dot {\bar u}^{\bar\beta} \ +\
B^{\alpha} {\bar B}^{\bar\beta}\right)
\\ [11pt]
&& +\,{\cal F}_{ab}(z)\,\dot{z}^{a} A^{b}+
\bar{\cal F}_{\bar a\bar b}(\bar z)\,\dot{\bar z}^{\bar a}\bar A^{\bar b}
+ {\cal G}_{\alpha\beta}(u)\,\dot{u}^{\alpha} B^{\beta}+ \bar{\cal G}_{\bar\alpha\bar\beta}(\bar u)\,\dot{\bar u}^{\bar\alpha}\bar B^{\bar\beta}\, .
\end{array}
\ee
 The fermion terms are written in \p{lagr242-2f-add}, \p{lagr242-4f-add}.

Let us make now the following remark. Omitting the terms $\propto u, B$, we obtain the quasicomplex K\"ahler model with
the bosonic Lagrangian
  \be
L_{bos} = \kappa_{a\bar b}\left(\dot z^{a} \dot {\bar z}^{\bar b} \ +\
A^{a} {\bar A}^{\bar b}\right)
 +\,{\cal F}_{ab}(z)\,\dot{z}^{a} A^{b}+
\bar{\cal F}_{\bar a\bar b}(\bar z)\,\dot{\bar z}^{\bar a}\bar A^{\bar b}\,.
  \ee
If excluding the auxiliary fields $A^a, \bar A^{\bar b}$, we obtain the Lagrangian
 \be
\lb{bez-A}
L_{bos}  \to \dot{z}^{a} \dot {\bar z}^{\bar b} \left[ \kappa_{a\bar b} - F_{af} (\kappa^{-1})^{\bar c f} F_{\bar{c} \bar{b}} \right].
 \ee
We observe now that the actual metric in \p{bez-A} is not represented as $\partial_a \bar \partial_{\bar b} Q$ and is thus {\it not} K\"ahler (cf.  footnote 4).

The action \p{act242} is a restricted choice. In the bi-K\"ahler case, a more general form of the extra terms is possible. One can write, instead of $\int dt\, d\theta \, d\eta \,{\cal A}_{a}({\cal Z}) \dot{\cal Z}^{a}$, a linear combination of the expressions
$
\int dt\, d\theta  d\eta d\bar\eta \,
{\cal A}_{a}^{(1)}({\cal Z}, {\cal U}) D_\eta {\cal Z}^{a}
$
and
$
 \int dt\, d\theta  d\bar\theta d\eta \,
{\cal A}_{a}^{(2)}({\cal Z}, \bar{\cal U}) D_\theta {\cal Z}^{a}
$
with two different "vector potentials" depending also on the mirror superfields.
By the same token, the structure ${\cal B}_\alpha(U)$ is splitted in two:
${\cal B}^{(1)}_\alpha(U, {\cal Z})$ and ${\cal B}^{(2)}_\alpha(U, \bar {\cal Z})$.

To be more precise, the action
\be
\label{moreKah2}
\begin{array}{rcl}
{\tilde S}^{\rm bi-K} &=&  {\displaystyle \frac14}\displaystyle{\int}dt\, d\theta d\bar\theta\,
d\eta d\bar\eta\, {\cal K}({\cal Z},\bar{\cal Z},{\cal U},\bar{\cal U}) \\ [8pt]
&& -  {\displaystyle \frac{i}{4}} \left[  \displaystyle{\int}dt\, d\theta \, d\eta d\bar\eta \,
{\cal A}_{a}^{(1)}({\cal Z}, {\cal U})\, D_\eta {\cal Z}^{a}  - \displaystyle{\int}dt\, d\theta d\bar\theta \, d\eta \,
{\cal A}_{a}^{(2)}({\cal Z}, \bar{\cal U}) \,D_\theta {\cal Z}^{a}\, +\,  {\rm c.c.} \right]\\ [9pt]
&& +  {\displaystyle \frac{i}{4}} \left[  \displaystyle{\int}dt\, d\theta \, d\eta d\bar\eta \,
{\cal B}_{\alpha}^{(1)}({\cal U}, {\cal Z})\, \bar D_\eta {\cal U}^{\alpha}  - \displaystyle{\int}dt\, d\theta d\bar\theta \, d\bar\eta \,
{\cal B}_{\alpha}^{(2)}({\cal U}, \bar{\cal Z}) \,D_\theta {\cal U}^{\alpha}\, +\,  {\rm c.c.} \right]
\end{array}
\ee
produces the bosonic  Lagrangian \p{lagr242-b-Ka-2} where the field strengths ${\cal F}_{ab}$ and ${\cal G}_{\alpha\beta}$
are defined by the same expressions \p{kap-FG} with
\be
\lb{gen-AB}
{\cal A}_a ={\cal A}_{a}^{(1)}(z, u)+{\cal A}_{a}^{(2)}(z, \bar u), \qquad
{\cal B}_\alpha = {\cal B}_{\alpha}^{(1)}(u, z)+{\cal B}_{\alpha}^{(2)}(u, \bar z)\, .
\ee

Eq.\p{moreKah2} defines a generic quasicomplex bi-K\"ahler model.

\vspace{2mm}

\centerline{\it Hamiltonian reduction.}

\vspace{2mm}

We will now show  how the bi-K\"ahler models \p{act242}, \p{moreKah2}
are obtained from the bi-HKT model \p{act4} by Hamiltonian reduction.

Consider first the pure HKT model. The bosonic part of its Lagrangian is given by
\be
\lb{lagr4-b-hkt}
L^{\rm HKT}_{bos} =
h^{ab}_{m\bar n}\, \dot v_a^m \dot {\bar v}_b^{\bar n} \, ,\qquad
h^{ab}_{m\bar n}=\left(  \frac{\partial^2 }{\partial v^m_a \partial \bar v^{\bar n}_b }   +
\epsilon_{m k} \epsilon_{\bar n \bar l} \frac{\partial^2 }{\partial \bar v^{\bar k}_a \partial v^l_b}\right){\cal L}(v,\bar v) \, .
\ee
It is convenient to introduce the notations $z^a = v^1_a, \ \xi^a = \bar v^2_a$.
Then
 \be
\lb{Lb-HKT-xi}
L^{\rm HKT}_{bos} &=& \kappa_{a\bar b}\left(\dot z^{a} \dot {\bar z}^{\bar b} +
\dot \xi^{a} \dot {\bar \xi}^{\bar b}\right)\ +\
 \,{\cal F}_{ab}\,\dot{z}^{a} \dot \xi^{b}+
\bar{\cal F}_{\bar a\bar b}\,\dot{\bar z}^{\bar a}\dot {\bar\xi}^{\bar b} \, ,
\ee
where
\be
\lb{h-red-hkt-kap}
\kappa_{a\bar b}=h^{a  b}_{1 \bar 1}=h^{b  a}_{2 \bar 2}=\left(  \frac{\partial^2 }{\partial z^a \partial \bar z^{\bar b} }   +
\frac{\partial^2 }{\partial \xi^a \partial \bar \xi^{\bar b} }\right){\cal L}\,,
\ee

\be
\lb{h-red-hkt-F}
{\cal F}_{ab} =  h^{ab}_{1 \bar 2}=\left( \frac{\partial^2 }{\partial z^{a} \partial \xi^{b } } - \frac{\partial^2 }{\partial z^{b} \partial \xi^{a } } \right) {\cal L}\,,\quad
\bar{\cal F}_{\bar a\bar b}=  h^{ab}_{2 \bar 1}=\left( \frac{\partial^2 }{\partial \bar z^{\bar a} \partial \bar \xi^{\bar b} } -
\frac{\partial^2 }{\partial \bar z^{\bar b} \partial \bar \xi^{\bar a} } \right) {\cal L}\,,
\ee

For the model to be reducible, $\kappa_{a \bar b}$ and $F_{ab}$ should not depend on a half of coordinates, which we choose here to be $\xi^a$. This imposes restrictions for the prepotential.
A generic {\it relevant}
expression for the prepotential
satisfying this  condition  reads
    \be
\lb{h-red-hkt2}
{\cal L}={\cal K}({\cal Z},\bar {\cal Z})+ {\cal A}_a({\cal Z})\,\Xi^a+ \bar{\cal A}_{\bar a}(\bar {\cal Z})\,\bar\Xi^{\bar a}\, ,
   \ee
 with real ${\cal K}({\cal Z}, \bar {\cal Z})$ ( ${\cal Z}^a ={\cal V}^1_a,  \, \Xi^a = {\cal V}^2_a$).
  Then
$\kappa_{a\bar b} = \partial_a \bar \partial_{\bar b} {\cal K}$ and
     ${\cal F}_{ab} = \partial_a {\cal A}_b - \partial_b {\cal A}_a $.

Saying ``relevant'', we meant the following. The problem has a gauge freedom. One can generalize \p{h-red-hkt2} by
 adding in the prepotential the terms like
$$
\Delta {\cal L} \ =\ {\cal B}_a({\cal Z}) \bar \Xi^a + {\rm c.c.} \ \ \ \ \ {\rm or} \ \ \ \ \
\Delta {\cal L} \ =\ Q({\cal Z}, \bar {\cal Z}) - (\partial_a \partial_{\bar b} Q) \Xi^a \bar\Xi^{\bar b} \, ,
  $$
which do not contribute, however, in the component Lagrangian. Also one can add to
${\cal A}_a(Z)$ the gradient of an arbitrary holomorphic function with the same effect (or rather its absence).
Such
 generalizations are thus {\it irrelevant}.

Note that the bosonic kinetic part \p{Lb-HKT-xi} of the Lagrangian involves, besides
the term $\kappa_{a\bar b} \dot z^a \dot {\bar z}^{\bar b}$, also the
term $\kappa_{a\bar b}  \dot \xi^a \dot {\bar \xi}^{\bar b}$ with the time derivatives of
$\xi^a$, even in the case when the prepotential
in \p{act1} depends only on ${\cal V}^1_a$. This is due to the fact that the component expansion of
${\cal V}^1_a$ involves also $\bar V^2_a$, as is clearly seen from
\p{V-4-2exp}.

 As was noticed in \cite{BerPash,GatesRana,BKMO} and  discussed in detail in \cite{commentHKT}, under reduction,
the time derivatives of the reduced coordinates go over into the auxiliary fields of the reduced model.
In our case, $\dot{\xi}^a  \to A^a$.
We thus reproduce the expression \p{bez-A} containing a half of the terms in \p{lagr242-b-Ka-2}
involving the fields from the ordinary multiplets.
The fermion terms are restored in both original and reduced model by ${\cal N} =4$ supersymmetry and also go one into another.

This proves {\bf Proposition 5}.

\vspace{.1cm}

We are ready now to prove the theorem:

\begin{thm}
 Consider  a bi-HKT model \p{act4} with the prepotential
\be
\lb{L-prep2}
\begin{array}{lcl}
{\cal L} &=& {\cal K}({\cal Z},\bar {\cal Z}; {\cal U}, \bar {\cal U}) + \Big[{\cal A}^{(1)}_a({\cal Z}, {\cal U}) +  {\cal A}^{(2)}_a({\cal Z}, \bar {\cal U}) \Big]  \Xi^a+
\Big[\bar {\cal A}^{(1)}_a(\bar {\cal Z}, \bar {\cal U}) +  \bar {\cal A}^{(2)}_a( \bar {\cal Z},  {\cal U}) \Big]  \bar\Xi^{\bar a} \\ [9pt]
&&\qquad\qquad\qquad\, - \Big[ {\cal B}^{1}_\alpha({\cal U}, {\cal Z})    + {\cal B}^{2}_\alpha({\cal U}, \bar {\cal Z})  \Big]
\Sigma^\alpha  -      \Big[ \bar {\cal B}^{1}_\alpha(\bar {\cal U}, \bar {\cal Z})    + \bar {\cal B}^{2}_\alpha( \bar {\cal U},  {\cal Z})  \Big]   \bar\Sigma^{\bar \alpha}
\end{array}
\ee
(${\cal Z}^a = {\cal V}^1_a, \Xi^a = {\cal V}^2_a, {\cal U}^\alpha = {\cal W}^1_\alpha, \Sigma^\alpha = {\cal W}^2_\alpha$).
It is reducible with respect to the coordinates $\xi^a, \sigma^\alpha$ and gives after reduction the
quasicomplex bi-K\"ahler model \p{moreKah2}.
\end{thm}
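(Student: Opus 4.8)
The plan is to reduce the statement to its bosonic content and then let ${\cal N}=4$ supersymmetry restore the fermions, following closely the strategy applied just above to the pure HKT model (\textbf{Proposition 5}). First I would write the bosonic component Lagrangian of the parent bi-HKT model \p{act4} from \p{lagr4-b-app}, \p{metr} and substitute the identifications ${\cal Z}^a={\cal V}^1_a$, $\Xi^a={\cal V}^2_a$, ${\cal U}^\alpha={\cal W}^1_\alpha$, $\Sigma^\alpha={\cal W}^2_\alpha$, with the component split $z^a=v^1_a$, $\xi^a=\bar v^2_a$ in the ordinary sector and its mirror analogue $u^\alpha=w^1_\alpha$, $\sigma^\alpha=\bar w^2_\alpha$ (the reality convention being flipped as in \p{w12} versus \p{v12}). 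The ordinary sector is then verbatim the HKT computation \p{Lb-HKT-xi}--\p{h-red-hkt-F}, so the genuinely new work is confined to (i) running the mirror sector in parallel and (ii) tracking the extra cross--sector dependence that the prepotential \p{L-prep2} carries through ${\cal A}^{(1,2)}$ and ${\cal B}^{(1,2)}$.

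Next I would establish reducibility. Since \p{L-prep2} is at most linear in $\Xi^a$ and in $\Sigma^\alpha$, the second derivatives \p{Del2a-VW}, \p{metr} that assemble the Hermitian metric and the holomorphic torsion differentiate these variables at most once and are therefore independent of $\xi^a,\sigma^\alpha$; the momenta conjugate to $\xi^a,\sigma^\alpha$ then commute with $H$ and the reduction killing them is admissible in the sense of \textbf{Definition 5}. Carrying it out, the time derivatives of the eliminated coordinates become the auxiliary fields of the daughter $({\bf 2},{\bf 4},{\bf 2})$ multiplets, $\dot\xi^a\to A^a$ and $\dot\sigma^\alpha\to B^\alpha$, exactly as recalled after \p{Lb-HKT-xi} for the HKT case \cite{BerPash,GatesRana,BKMO,commentHKT}. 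In the ordinary sector this regenerates the ${\cal Z}$-dependent half of \p{lagr242-b-Ka-2} with $\kappa_{a\bar b}=\partial_a\bar\partial_{\bar b}{\cal K}$ and ${\cal F}_{ab}=\partial_a{\cal A}_b-\partial_b{\cal A}_a$. In the mirror sector the same steps apply, but the relative minus sign in the metric \p{metr}, $h^{\alpha\beta}_{\mu\bar\nu}=-\Delta^{\alpha\beta}_{\mu\bar\nu}{\cal L}$, combined with the flipped reality convention, yields $\kappa_{\alpha\bar\beta}=-\partial_\alpha\bar\partial_{\bar\beta}{\cal K}$ and ${\cal G}_{\alpha\beta}=\partial_\alpha{\cal B}_\beta-\partial_\beta{\cal B}_\alpha$, i.e. precisely the ${\cal U}$-dependent half of \p{lagr242-b-Ka-2}.

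The step I expect to be the genuine obstacle is matching the $F$-term structure in full generality. In \p{L-prep2} the vector potentials split as ${\cal A}_a={\cal A}^{(1)}_a({\cal Z},{\cal U})+{\cal A}^{(2)}_a({\cal Z},\bar{\cal U})$ and ${\cal B}_\alpha={\cal B}^{(1)}_\alpha({\cal U},{\cal Z})+{\cal B}^{(2)}_\alpha({\cal U},\bar{\cal Z})$, which is exactly the combination \p{gen-AB} that distinguishes the general bi-K\"ahler action \p{moreKah2} from the restricted one \p{act242}. I would therefore have to verify that, after integrating the parent density over the $\eta,\bar\eta$ coordinates removed by the reduction, the pieces $\sim{\cal A}^{(1)}_a({\cal Z},{\cal U})\,\Xi^a$ descend to the density $\int dt\,d\theta\,d\eta\,d\bar\eta\,{\cal A}^{(1)}_a\,D_\eta{\cal Z}^a$ of \p{moreKah2}, while the pieces $\sim{\cal A}^{(2)}_a({\cal Z},\bar{\cal U})\,\Xi^a$ that are antiholomorphic in ${\cal U}$ descend to the $\int dt\,d\theta\,d\bar\theta\,d\eta$ density, with the mirror analogues for ${\cal B}^{(1,2)}_\alpha$. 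This is where the signs, the $\eta\leftrightarrow\bar\eta$ asymmetry of the mirror expansion \p{W-mu-expan}, and the gauge ambiguities noted after \p{h-red-hkt2} must all be reconciled, so that only the gauge--invariant field strengths \p{kap-FG} survive in the component Lagrangian.

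Finally, once the bosonic Lagrangian is shown to equal \p{lagr242-b-Ka-2}, the fermions require no independent computation: both the parent model \p{act4} and the daughter model \p{moreKah2} carry full ${\cal N}=4$ supersymmetry with identical bosonic content, so their two-- and four--fermion terms are fixed uniquely and are carried into one another by the reduction, which completes the argument.
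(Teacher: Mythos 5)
Your proposal is correct and follows essentially the same route as the paper's proof: write the bosonic component Lagrangian of \p{act4} with the prepotential \p{L-prep2}, note that the metric and field strengths \p{kap-FG-1} are independent of $\xi^a,\sigma^\alpha$ so the reduction is admissible, let $\dot\xi^a\to A^a$, $\dot\sigma^\alpha\to B^\alpha$ to land on \p{lagr242-b-Ka-2} with ${\cal F}\to{\cal F}^{(1)}+{\cal F}^{(2)}$, ${\cal G}\to{\cal G}^{(1)}+{\cal G}^{(2)}$ as in \p{gen-AB}, and restore the fermions by ${\cal N}=4$ supersymmetry (which the paper, like you, verifies explicitly only for the restricted Ansatz \p{L-prep2-Ans}). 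The one loose phrase is ``integrating the parent density over the $\eta,\bar\eta$ coordinates removed by the reduction'': the Hamiltonian reduction eliminates the \emph{bosonic} coordinates $\xi^a,\sigma^\alpha$, not any Grassmann coordinates, and the paper avoids your proposed superspace-density matching altogether by comparing component bosonic Lagrangians on both sides, which is exactly what your $\kappa$, ${\cal F}$, ${\cal G}$ computation already accomplishes.
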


\begin{proof}
It follows closely the proof of {\bf Proposition 5}, one should only take into consideration the mirror sector. The bosonic Lagrangian  \p{lagr4-b-app} now reads
\be
\lb{lagr4-b-HKT2}
\begin{array}{rcl}
L^{bi-HKT}_{b} &=& \kappa_{a\bar b}\left(\dot z^{a} \dot {\bar z}^{\bar b} \ +\
\dot \xi^{a} \dot {\bar \xi}^{\bar b}\right)+ \kappa_{\alpha\bar \beta}\left(\dot u^{\alpha} \dot {\bar u}^{\bar\beta} \ +\
\dot \sigma^{\alpha} \dot {\bar \sigma}^{\bar\beta}\right)
\\ [11pt]
&& +\,{\cal F}_{ab}\,\dot{z}^{a} \dot \xi^{b}+
\bar{\cal F}_{\bar a\bar b}\,\dot{\bar z}^{\bar a}\dot {\bar\xi}^{\bar b}
+ {\cal G}_{\alpha\beta}\,\dot{u}^{\alpha} \dot \sigma^{\beta}+ \bar{\cal G}_{\bar\alpha\bar\beta}\,\dot{\bar u}^{\bar\alpha}\dot {\bar \sigma}^{\bar\beta}\, ,
\end{array}
\ee
with
  \be
\lb{kap-FG-1}
\begin{array}{rcl}
\kappa_{a\bar b}  \ = \
{\displaystyle\frac {\partial^2}{\partial z^a \partial \bar z^{\bar b}}\,  {\cal K}}\, , &\quad &
{\displaystyle\kappa_{\alpha\bar\beta}  \ = \
-\frac {\partial^2}{\partial u^\alpha \partial \bar u^{\bar\beta} }\, {\cal K}\,,} \\ [10pt]
{\displaystyle  {\cal F}_{ab} \ =\ \frac {\partial}{\partial z^a} \left[ {\cal A}^{(1)}_b + {\cal A}^{(2)}_b\right]
- (a \leftrightarrow b)\,,} &\quad  &
{\displaystyle{\cal G}_{\alpha\beta} \ =\ \frac {\partial}{\partial u^\alpha} \left[ {\cal B}^{(1)}_\beta + {\cal B}^{(2)}_\beta\right]
- (\alpha \leftrightarrow \beta)\,.}
\end{array}
\ee

The metric \p{kap-FG-1} does not depend on $\xi^a, \sigma^\alpha$, and one can perform
the reduction with respect to these variables.
After that, the derivatives $\dot{\xi}^a, \dot{\sigma}^\alpha$ go over into auxiliary fields $A^a, B^\alpha$ and we reproduce the reduced bosonic Lagrangian \p{lagr242-b-Ka-2} with
${\cal F} \to  {\cal F}^{(1)} +  {\cal F}^{(2)} $ , ${\cal G} \to  {\cal G}^{(1)} +  {\cal G}^{(2)} $
of the quasicomplex bi-K\"ahler model \p{moreKah2}.
The fermion terms are restored by supersymmetry.
We amused ourselves to check it explicitly for the restricted Ansatz,
 \be
\lb{L-prep2-Ans}
{\cal L} = {\cal K}({\cal Z},\bar {\cal Z}; {\cal U}, \bar {\cal U}) + {\cal A}_a({\cal Z})\,  \Xi^a +
\bar {\cal A}_a(\bar {\cal Z})\, \bar\Xi^{\bar a}
-  {\cal B}_\alpha({\cal U})\,
\Sigma^\alpha  -       \bar {\cal B}_\alpha(\bar {\cal U})  \,  \bar\Sigma^{\bar \alpha} \, .
\ee
 In this
case,  the reduction of \p{lagr4-2f-app} and \p{lagr4-4f-app} gives \p{lagr242-2f-add} and \p{lagr242-4f-add}  under the identifications
 \be
\begin{array}{c}
\psi_a^1=\phi^a\,,\quad \bar\psi_a^1=\bar\phi^{\bar a}\,,\qquad
\psi_a^2= -i\bar\varphi^{\bar a}\,,\quad \bar\psi_a^2= i\varphi^a\,, \\[7pt]
\chi_\alpha^1=\rho^\alpha\,,\quad \bar\chi_\alpha^1=\bar\rho^{\bar \alpha}\,,\qquad
\chi_\alpha^2= -i\bar\varrho^{\bar \alpha}\,,\quad \bar\chi_\alpha^2= i\varrho^\alpha
\,.
\end{array}
\ee
\end{proof}

\subsection{({\bf 2}, {\bf 4}, {\bf 2}) = ({\bf 1}, {\bf 2}, {\bf 1})$\oplus$({\bf 1}, {\bf 2}, {\bf 1})}

Alternatively, one can represent a superfield \p{242-sf} as a couple of real  ({\bf 1}, {\bf 2}, {\bf 1}) ${\cal N} = 2$ superfields $X^m_a$ and a superfield \p{242-sf-tw} as a couple of real ({\bf 1}, {\bf 2}, {\bf 1}) superfields
$Y^\mu_\alpha$ and express the action in terms of  $X^m_a$ and $Y^\mu_\alpha$.
To find this expression, it is convenient to perform the Hamiltonian reduction of the original bi-HKT action in a different way, getting rid not of the complex variables $v^2_a$ and $w^2_\alpha$, but of the imaginary parts, Im$(v^m_a)$ and Im$(w^\mu_\alpha)$.

The result of such a reduction is, of course, the same as for the reduction considered above. To see that explicitly, one can write
 \be
\label{v-xlam}
v^m_a = x^m_a + i\lambda^m_a\,, \qquad w^\mu_\alpha = y^\mu_\alpha + i t^\mu_\alpha
 \ee
and introduce new complex variables,
  \be
\label{z-x12}
z^a = x^1_a + ix^2_a\,, \quad \xi^a = \lambda^1_a + i \lambda^2_a\,, \qquad u^\alpha = y^1_\alpha + iy^2_\alpha\,,
\quad \sigma^\alpha = t^1_\alpha + it^2_\alpha \, .
 \ee
We choose, again, the prepotential in the form \p{L-prep2}, perform the reduction with respect to $\xi^a, \sigma^\alpha$ and
obtain \p{moreKah2}.

Consider now  the equivalent form \p{act4-2}, \p{lagr4-2} of the bi-HKT action when the latter is expressed via ${\cal N} = 2$
chiral superfields. As was mentioned, this action belongs to the class of Dolbeault SQM models modified by the inclusion of
holomorphic torsions [the second line in \p{lagr4-2}]. The Hamiltonian reduction of the first line was studied in
\cite{quasi}. It gives a quasicomplex de Rham model with the Lagrangian
  \be
\label{1-line}
{\cal L}^{\rm first\ line} \ =\  \frac 12 \left[g^{ab}_{mn} + i b^{ab}_{mn} \right] D X^m_a \bar D X^n_b +
\frac 12 \left[g^{\alpha \beta}_{\mu\nu}  + i b^{\alpha\beta}_{\mu\nu} \right]
D Y^\mu_\alpha \bar D Y^\nu_\beta \, ,
 \ee
where $X^m_a$ and $Y^m_\alpha$ are real ${\cal N} = 2$ superfields, and $g^{ab}_{mn}, g^{\alpha \beta}_{\mu\nu}$ and
$b^{ab}_{mn}, b^{\alpha\beta}_{\mu\nu} $ are, correspondingly, the
real symmetric and imaginary antisymmetric parts of the Hermitian metrics $h^{ab}_{m\bar n}, h^{\alpha  \beta}_{\mu \bar\nu}$ entering  \p{lagr4-2}, multiplied by $1/2$, as dictated by \p{metr-red} . 
If choosing the restricted ansatz \p{L-prep2-Ans} and expressing the complex superfields via the real ones,
we obtain the following explicit expressions for the metric:
  \be
\lb{g-sym}
g^{ab}_{mn}  \ =\  \frac 12 \Big(\partial^a_m  \partial_{n}^{b}    +
\epsilon_{m k} \epsilon_{n l}  \partial^a_{k}  \partial^b_l \Big)  \, {\cal K}(Z, \bar Z, U, \bar U)
\ee
and
\be
\lb{b-antisym}
b^{ab}_{mn} = - { {\cal F}}^{ab}_{mn} \ =\  \partial^b_n   { {\cal A}}{}^{a}_m     -
\partial^a_m   { {\cal A}}{}^{b}_n \, ,
\ee
with
\be
\lb{A-A}
{ {\cal A}}{}^{a}_1(X) = {\cal A}_{a}(Z)+ \bar {\cal A}_{a}(\bar Z)  \,,\qquad
{{\cal A}}{}^{a}_2(X)  =  i\left[{\cal A}_{a}(Z)-\bar{\cal A}_{a}(\bar Z)\right]
\ee
being the real 2-dimensional "vector-potentials".
The derivatives $\partial^a_m$, $\partial^\alpha_\mu$ stand now for 
$\partial/\partial X_a^m$, $\partial/\partial Y_\alpha^\mu$.

Similarly,
\be
\lb{g-sym-mir}
g^{\alpha \beta}_{\mu\nu}  \ =\  -\frac 12 \Big(\partial^\alpha_\mu  \partial_\nu^\beta    +
\epsilon_{\mu \kappa} \epsilon_{\nu \lambda}  \partial^\alpha_\kappa  \partial^\beta_\lambda \Big)  \, {\cal K} (Z, \bar Z, U, \bar U)
\ee
and
\be
\lb{b-antisym-mir}
b^{\alpha\beta}_{\mu\nu}  = - { {\cal G}}^{\alpha\beta}_{\mu\nu}  \ =\  \partial^\beta_\nu   { {\cal B}}{}^{\alpha}_\mu     - \partial^\alpha_\mu   { {\cal B}}{}^{\beta}_\nu
\ee
with
\be
\lb{B-B}
{ {\cal B}}{}^{\alpha}_1(Y) = {\cal B}_{\alpha}(U)+ \bar {\cal B}_{\alpha}(\bar U) \,, \qquad
{ {\cal B}}{}^{\alpha}_2(Y) = i \left[ {\cal B}_{\alpha}(U) -  \bar {\cal B}_{\alpha}(\bar U) \right] \,.
\ee

The reduction of the second line in \p{lagr4-2} is also easily performed, using the results of  Ref.\cite{FIS1}
and bearing in mind that the terms linear in $\Xi^a$ and
$\Sigma^\alpha$ in the Ansatz \p{L-prep2-Ans} do not contribute there.

The explicit component expression for the 2-fermion term $\sim \dot{z} \psi \psi$ in the Lagrangian
derived from the generic
holomorphic contribution to the Dolbeault action is
\be
 S^{\rm hol}_{\rm Dolb}  = \frac 14 \int d^2\theta \, {\cal B}_{MN} (Z, \bar Z) \, D Z^M D Z^N + {\rm c.c.}  \nn
\Downarrow \qquad\qquad\qquad\qquad\qquad \nn
\left[-3i \partial_{[M}  {\cal B}_{NP]} \dot{z}^M \psi^N \bar \psi^{\bar P}  + {\rm c.c.} \right]
 +  {\rm other\ terms}. \lb{hol-Dol}
 \ee
On the other hand,    the explicit component expression for the 2-fermion term $\sim \dot{x} \psi \psi$ in the Lagrangian
derived from the generic holomorphic contribution to the de Rham action is
\be
 S^{\rm hol}_{\rm Rham}  =  \int d^2\theta \, {\cal B}_{MN} (X) \, D X^M D X^N + {\rm c.c.}
\nn
\Downarrow \qquad\qquad\qquad\qquad\qquad \nn
\left[-3i \partial_{[M}  {\cal B}_{NP]} \dot{x}^M \psi^N \bar \psi^{ P}  + {\rm c.c.} \right]
 + {\rm other\ terms}. \lb{hol-Rham}
\ee
Under reduction, \p{hol-Dol} goes over to \p{hol-Rham}.
The factor $  {\frac 14}$ is compensated
roughly by the same mechanism as for the kinetic term:
$  {\frac 14} \int d^2\theta DZ \bar D \bar Z = \dot{z} \dot{\bar z} + \cdots$ gives
 $ \int d^2\theta DX \bar D X = \dot{x} \dot{x} + \cdots$ after reduction.

  We finally obtain the action
  \be
\lb{N=2_XY}
S = \frac 12  \int dt\,  d\theta d\bar\theta \,\Big[ \left(g^{ab}_{mn} + i b^{ab}_{mn} \right) D X^m_a \bar D X^n_b +
\left(g^{\alpha \beta}_{\mu\nu} + i b^{\alpha\beta}_{\mu\nu} \right)
D Y^\mu_\alpha \bar D Y^\nu_\beta   \nn
+ \,\frac 12\, \epsilon_{mn} \epsilon_{\mu\nu} (\partial^a_n \partial^\alpha_\nu {\cal K} )
\left( \bar D X^m_a \bar D Y^\mu_\alpha - D X^m_a  D Y^\mu_\alpha \right) \Big] \,.
 \ee
The extra additional ${\cal N} =2$ supersymmetry present in the action \p{N=2_XY} is realized as
\be
\lb{extra-XY}
\begin{array}{ll}
&\delta X_a^m = \ -\varepsilon_\eta \epsilon^{mn} \bar D X_a^n + \bar \varepsilon_\eta \epsilon^{mn} DX_a^n \, , \\ [8pt]
&\delta Y_\alpha^\mu = \ - \bar \varepsilon_\eta \epsilon^{\mu\nu} \bar D Y_\alpha^\nu + \varepsilon_\eta \epsilon^{\mu\nu} DY_\alpha^\nu \, .
\end{array}
\ee

When checking explicitly the invariance of the action, it is convenient to transform the  expressions  \p{g-sym}, \p{g-sym-mir} to the form
\begin{equation}
g^{ab}_{mn}   =  \frac 12 \Big(\partial^a_k  \partial_{k}^{b}{\cal K} \Big)\, \delta_{mn}
+\frac 12 \Big(\epsilon_{kl}  \partial^a_{k}  \partial^b_l {\cal K}\Big)\,  \epsilon_{mn}\,,
\qquad
g^{\alpha \beta}_{\mu\nu}   =  -\frac 12 \Big(\partial^\alpha_\lambda  \partial_\lambda^\beta{\cal K}\Big)\,\delta_{\mu\nu}
- \frac 12 \Big(\epsilon_{\lambda\rho}  \partial^\alpha_\lambda  \partial^\beta_\rho  {\cal K}\Big)\,   \epsilon_{\mu\nu}
\end{equation}
and use the d'Alembert-Euler analyticity conditions
\begin{equation}
\partial^{a}_m{\cal A}{}^{b}_m = 0  \,,\qquad
\epsilon_{mn}\partial^{a}_m{\cal A}{}^{b}_n = 0\,,
\qquad\qquad
\partial^{\alpha}_\mu {\cal B}{}^{\beta}_\mu = 0 \,, \qquad
\epsilon_{\mu\nu}\partial^{\alpha}_\mu {\cal B}{}^{\beta}_\nu = 0
\end{equation}
for the real potentials \p{A-A}, \p{B-B}.

Transformations \p{extra-XY} have the same form as in \p{SUSYtran-N=2hid-VW} by  replacing $V^m$, $\bar V^m \to X^m$ and
$W^\mu$, $\bar W^\mu \to Y^\mu$.

Introducing now $$\varepsilon = \varepsilon_+ + i \varepsilon_-\, , \ \ \ \ \ \ \ \ \ \ \
D = D_+ + i D_-$$ with real $\varepsilon_\pm, D_\pm$, one can rewrite \p{extra-XY} in the form
  \be
\delta \left( \begin{array}{c} X \\ Y \end{array} \right)^M \ =\ 2i\varepsilon_+ \, {J}^M{}_N   D_-    \left( \begin{array}{c} X \\ Y \end{array} \right)^N - 2i\varepsilon_-  \, {I}^M{}_N D_+    \left( \begin{array}{c} X \\ Y \end{array} \right)^N \, ,
  \ee
where the matrices $I, J$ coincide with the matrices ${\cal I}, {\cal J}$  written in \p{comp-str-B-c}. Only now they have the meaning of two different {\it complex} structures rather than the hypercomplex structures (as was the case for the bi-HKT manifolds).

{\bf Remark 3}. As we have seen, the  quasicomplex structures $\propto b_{mn}^{ab}, \ b_{\mu\nu}^{\alpha\beta}$ in the ${\cal N} = 2$ action \p{N=2_XY} appear only in the models involving extra holomorphic terms $\propto {\cal A}, {\cal B}$ in \p{moreKah2}. These terms are specific for SQM, they cannot be obtained from a Lorentz-invariant
2-dimensional field theory by dimensional reduction. However, they can be derived starting from certain Lorentz-noninvariant $2d$ sigma models \cite{CKT,quasi}.

\subsection{Supercharges}

The simplest way to derive the supercharges in the bi-K\"ahler model is to use the  expressions \p{S-CKT}, \p{R-CKT}, \p{S-B-c} for the  ({\bf 4}, {\bf 4}, {\bf 0})  supercharges, derived above, and perform the Hamiltonian reduction.
The expressions thus obtained have exactly the same form as \p{S-CKT}, \p{R-CKT}, \p{S-B-c} simplified by the fact that
the derivatives $\partial_{\cal M}, \bar \partial_{\bar {\cal M}}$ and the momenta $\Pi_{\cal M}, \bar \Pi_{\bar {\cal M}}$ are not distinguished anymore and go to the real momenta and derivatives. The substitution rules are the following,
   \be
\partial_{\cal M}, \bar \partial_{\bar {\cal M}} \to \ \frac 12\, \partial_M\,,  \qquad \Pi_{\cal M}, \bar \Pi_{\bar {\cal M}} \to \ \frac 12\, \Pi_M\,,
\qquad \psi^{\cal M} \to \sqrt{2}\, \psi^M\,.
 \ee
 (The last rule follows from the presence of the factor $\sqrt{2}$ in the component expansion \p{chir-comp} of the chiral
multiplet and its absence in the conventionally defined component expansion for the ({\bf 1}, {\bf 2}, {\bf 1}) multiplet,
$X^M = x^M + \theta \psi^M + \bar \psi^M \bar \theta + F^M \theta \bar \theta$.)

We derive,
\begin{equation}\label{SR-HKT-red}
\begin{array}{rcl}
S&=&\psi_a^m\left(\Pi^a_m- \frac i2 \,\partial^a_{m}h^{bc}_{k\bar l}\, \psi_b^k\bar\psi_{c}^{l}
- \frac i2 \,\partial^a_{m}h^{\beta\gamma}_{\nu\bar \lambda}\,
\chi_\beta^\nu\bar\chi_{\gamma}^{\lambda} +  \frac i2 \,\epsilon_{mn}\epsilon_{\nu\lambda} \, \partial^a_{n}h^{\alpha\beta}_{\mu\bar \lambda}\,
\chi_\alpha^\mu\chi_{\beta}^{\nu}\right)
\\[9pt]
&&+\chi_\alpha^\mu\left(\Pi^\alpha_\mu- \frac i2 \,\partial^\alpha_{\mu}h^{\beta\gamma}_{\nu\bar \lambda}\,
\chi_\beta^\nu\bar\chi_{\gamma}^{\lambda}- \frac i2 \,\partial^\alpha_{\mu}h^{bc}_{k\bar l}\, \psi_b^k\bar\psi_{c}^{l}
 + \frac i2 \,\epsilon_{\mu\nu}
\epsilon_{nk}  \, \partial^\alpha_{\nu} h^{ab}_{m\bar k}\, \psi_a^m\psi_{b}^{n}\right),\\ [12pt]
\bar S&=&\bar\psi_a^{m}\left(\Pi^a_{m}+  \frac i2 \,\partial^a_{m}h^{bc}_{k\bar l}\, \psi_b^k\bar\psi_{c}^{l}
+ \frac i2 \,\partial^a_{m}h^{\beta\gamma}_{\nu\bar \lambda}\,
\chi_\beta^\nu\bar\chi_{\gamma}^{\lambda}
- \frac i2 \,\epsilon_{mn}\epsilon_{\mu\lambda} \,\partial^a_{n} h^{\alpha\beta}_{\lambda\bar \nu}\,
\bar\chi_\alpha^{\mu}\bar\chi_{\beta}^{\nu}\right)\\[9pt]
&&+\bar\chi_\alpha^{\mu}\left(\Pi^\alpha_{\mu}+  \frac i2 \,\partial^\alpha_{\mu}h^{\beta\gamma}_{\nu\bar \lambda}\,
\chi_\beta^\nu\bar\chi_{\gamma}^{\lambda}+ \frac i2 \,\partial^\alpha_{\mu}h^{bc}_{k\bar l}\, \psi_b^k\bar\psi_{c}^{l}
- \frac i2 \,\epsilon_{\mu\nu}
\epsilon_{mk}  \, \partial^\alpha_{\nu} h^{ab}_{k\bar n}\, \bar\psi_a^{m}\bar\psi_{b}^{n}\right) \,,
\end{array}
\end{equation}

\begin{equation}\label{SR-HKT-red-1}
\begin{array}{rcl}
R&=&
\psi_a^{n}\,\epsilon_{nm}\left(\Pi^a_{m}- \frac i2 \,\partial^a_{m}h^{b c}_{k\bar l}\, \psi_b^k\bar\psi_{ c}^{l}
+ \frac i2 \,\partial^a_{m}h^{\beta\gamma}_{\nu\bar \lambda}\,
\chi_\beta^\nu\bar\chi_{\gamma}^{\lambda}
- \frac i2 \,\epsilon_{mk}\epsilon_{\mu\lambda} \, \partial^a_{k} h^{\alpha\beta}_{\lambda\bar \nu}\,
\bar\chi_\alpha^{\mu}\bar\chi_{\beta}^{\nu}\right)\\[9pt]
&&-
\bar\chi_\alpha^{\nu}\,\epsilon_{\nu\mu}\left(\Pi^\alpha_\mu+  \frac i2\,\partial^\alpha_{\mu}h^{\beta\gamma}_{\nu\bar \lambda}\,
\chi_\beta^\nu\bar\chi_{\gamma}^{\lambda} -  \frac i2 \,\partial^\alpha_{\mu}h^{bc}_{k\bar l}\, \psi_b^k\bar\psi_{c}^{l}
 + \frac i2 \,\epsilon_{\mu\lambda}
\epsilon_{nk}  \, \partial^\alpha_{\lambda} h^{ab}_{m\bar k}\, \psi_a^m\psi_{b}^{n}\right)
\,,\\ [12pt]
\bar R&=&
\bar\psi_a^{n}\,\epsilon_{nm}\left(\Pi^a_m + \frac i2 \,\partial^a_{m}h^{bc}_{k\bar l}\, \psi_b^k\bar\psi_{c}^{l}
- \frac i2 \,\partial^a_{m}h^{\beta\gamma}_{\nu\bar \lambda}\,
\chi_\beta^\nu\bar\chi_{\gamma}^{\lambda}
 + \frac i2 \,\epsilon_{mk} \epsilon_{\nu\lambda}  \, \partial^a_{k} h^{\alpha\beta}_{\mu\bar \lambda}\,
\chi_\alpha^\mu\chi_{\beta}^{\nu}\right)
\\[9pt]
&& -
\chi_\alpha^\nu\,\epsilon_{\nu\mu}\left(\Pi^\alpha_{\mu} - \frac i2\,\partial^\alpha_{\mu}h^{\beta\gamma}_{\nu\bar \lambda}\,
\chi_\beta^\nu\bar\chi_{\gamma}^{\lambda} + \frac i2 \,\partial^\alpha_{\mu}h^{bc}_{k\bar l}\, \psi_b^k\bar\psi_{c}^{l}
- \frac i2 \,\epsilon_{\mu\lambda}
\epsilon_{mk}   \, \partial^\alpha_{\lambda} h^{ab}_{k\bar n}\, \bar\psi_a^{m}\bar\psi_{b}^{n}\right)
\,.
\end{array}
\end{equation}

More compact expressions can be obtained by the reduction of \p{S-B-c} and the similar expressions for $R, \bar R$.

\subsection{Geometry}

 Consider first an {\it ordinary} bi-K\"ahler manifold without extra gauge potentials in Eqs. \p{moreKah2}.
The complex structures $I,J$, given by \p{comp-str-B-c},  trivially satisfy  the condition \p{Nijen} and are integrable.
They are not covariantly constant with the ordinary Levi-Civita connection \p{Lev-Civ}, but both $I$ and $J$ are covariantly constant with torsionful connections. It was noticed \cite{GHR} that the following property holds,

\begin{prop} The Bismut torsions for the complex structures $I,J$ coincide modulo  sign,
    \be
\lb{C=-C}
 C^{(B)}_{MNK} (I) \ =\ -C^{(B)}_{MNK} (J) \, .
  \ee
\end{prop}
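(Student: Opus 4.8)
The plan is to reduce \p{C=-C} to a transparent sign count built on the explicit block structure of the data, thereby reproducing in our setting the observation of \cite{GHR}. First I would recast the Bismut torsion \p{CBismut} in a form adapted to differential forms. Because the Levi-Civita connection \p{Lev-Civ} is torsion-free and the lowered complex structure $I_{QR}$ is antisymmetric, the cyclic sum $\nabla_P I_{QR}+\nabla_Q I_{RP}+\nabla_R I_{PQ}$ coincides with the totally antisymmetrised ordinary derivative $3\,\partial_{[P}I_{QR]}$, i.e. with the components of the exterior derivative $d\Omega_I$ of the fundamental two-form $\Omega_I=I_{MN}\,dx^M\wedge dx^N$. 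Hence, up to an overall universal constant, $C^{(B)}_{MNK}(I)=I_M{}^P I_N{}^Q I_K{}^R\,(d\Omega_I)_{PQR}$, and likewise for $J$; the whole task becomes the comparison of $I\!\cdot\!d\Omega_I$ with $J\!\cdot\!d\Omega_J$.

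Next I would feed in the block structure. For the ordinary bi-K\"ahler model (no gauge potentials, so ${\cal F}={\cal G}=0$) the metric \p{kap-FG} is real, symmetric and block diagonal, $g=\mathrm{diag}(\kappa_{a\bar b},\kappa_{\alpha\bar\beta})$ with $\kappa_{a\bar b}=\partial_a\bar\partial_{\bar b}{\cal K}$ and $\kappa_{\alpha\bar\beta}=-\partial_\alpha\bar\partial_{\bar\beta}{\cal K}$, while the structures $I,J$ of \p{comp-str-B-c} are block diagonal as well, coinciding on the ordinary block and differing by an overall sign on the mirror block. I would accordingly split the two fundamental forms as $\Omega_I=\omega_o-\omega_m$ and $\Omega_J=\omega_o+\omega_m$, where $\omega_o$ and $\omega_m$ are the fundamental forms of the ordinary and the mirror blocks, and classify the totally antisymmetric components of $C^{(B)}$ by the number $s\in\{0,1,2,3\}$ of mirror indices they carry. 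Since the block-diagonal prefactor $I_M{}^P\cdots$ preserves the block of every index, the internal indices $P,Q,R$ carry the same $s$ as the external ones.

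The core is then the sign bookkeeping at fixed $s$. Every mirror index contributes $-\epsilon$ to the prefactor for $I$ but $+\epsilon$ for $J$, so passing from $J$ to $I$ multiplies the prefactor by $(-1)^s$. For the differential factor, $\omega_o$ (two ordinary legs) can only supply components with $s\le1$ and enters $\Omega_I,\Omega_J$ with the same sign, whereas $\omega_m$ (two mirror legs) can only supply components with $s\ge2$ and enters with opposite signs; thus the $d\Omega$ factor has ratio $+1$ for $s=1$ and $-1$ for $s=2$. The net ratio $C^{(B)}(I)/C^{(B)}(J)$ is therefore $(-1)^1(+1)=-1$ at $s=1$ and $(-1)^2(-1)=-1$ at $s=2$, exactly the asserted sign. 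It remains to dispose of the pure components $s=0$ and $s=3$, for which the naive ratio would be $+1$: these vanish identically, because with the complementary block's coordinates frozen each block is genuinely K\"ahler (potential ${\cal K}$, respectively $-{\cal K}$), so the all-ordinary part of $d\omega_o$ and the all-mirror part of $d\omega_m$ are zero. Concretely, $\omega_o=i\,\partial_a\bar\partial_{\bar b}{\cal K}\,dz^a\wedge d\bar z^{\bar b}$ is $\partial\bar\partial$-exact in the ordinary variables, so its ordinary exterior derivative vanishes by the symmetry of $\partial_c\partial_a{\cal K}$, and symmetrically for $\omega_m$. With $s=0,3$ killed and $s=1,2$ flipping sign, \p{C=-C} follows.

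I expect the one genuinely non-mechanical point to be precisely this vanishing of the pure-sector components, and I would flag it as the crux. The Bismut torsion is insensitive to a \emph{global} reversal $I\to-I$ (three prefactors and one $\Omega_I$ each change sign, giving $(-1)^4=+1$); since $I$ and $J$ here differ by a sign only on the mirror block, one would at first expect the two torsions to \emph{agree} on the pure-sector components rather than to be opposite throughout. The resolution is that those potentially-equal components carry no torsion at all --- each block, regarded with the other block's coordinates as spectators, is K\"ahler --- so only the honestly mixed ($s=1,2$) components survive, and on them the sign does flip. I would therefore verify the blockwise-K\"ahler vanishing explicitly rather than treat it as obvious, since it is exactly what turns a partial sign flip into the clean identity $C^{(B)}_{MNK}(I)=-C^{(B)}_{MNK}(J)$.
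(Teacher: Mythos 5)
Your proof is correct, but it takes a genuinely different route from the paper's. The paper works in holomorphic coordinates adapted to $J$, invokes the complex-coordinate expression \p{C-Bis-compl} of the Bismut torsion in terms of the Hermitian metric, observes that the block-diagonal potential structure \p{block} kills the pure-sector components $C_{ab\bar c},\,C_{\alpha\beta\bar\gamma}$ (and conjugates), computes the surviving mixed components \p{C-mixed} explicitly, and then notes that passing from $J$ to $I$ amounts to interchanging $u^\alpha\leftrightarrow\bar u^{\bar\alpha}$, which by total antisymmetry of $C$ flips the overall sign. You instead stay in real coordinates: you reduce \p{CBismut} to $C^{(B)}(I)\propto I_M{}^P I_N{}^Q I_K{}^R\,(d\Omega_I)_{PQR}$ (the Christoffel cancellation you use is valid, since the Levi-Civita symbols are symmetric and $I_{QR}$ is antisymmetric), split $\Omega_I=\omega_o-\omega_m$, $\Omega_J=\omega_o+\omega_m$, and grade components by the number $s$ of mirror indices, getting the net ratio $(-1)^s\cdot(\pm1)=-1$ at $s=1,2$. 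Both arguments ultimately rest on the same two facts --- each block is K\"ahler with potential $\pm{\cal K}$, so the $s=0,3$ components vanish, and only the mixed components flip --- and you correctly identified the vanishing of the pure-sector components as the crux (it appears in the paper as the one-line remark that $C_{ab\bar c}$, etc., vanish), supplying the same $\partial\bar\partial$-exactness argument the paper's formula \p{C-Bis-compl} encodes. What the two approaches buy: the paper's component computation produces the explicit mixed torsions \p{C-mixed}, which are reused immediately in the converse statement ({\bf Proposition 8}); your version is more invariant, isolates the sign bookkeeping cleanly, and your closing remark --- that a global flip $I\to-I$ leaves $C^{(B)}$ unchanged, so the identity is not a mere relabeling but hinges on the pure components carrying no torsion --- is a genuine clarification that the paper's brief ``using the antisymmetry of $C$'' glosses over.
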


\begin{proof}
As the complex structures are integrable, the holomorphic coordinates can be chosen. Let us do it for the complex
structure $J$. The Bismut totally antisymmetric torsions \p{CBismut} can be expressed in these terms as \cite{IS}
\be
\lb{C-Bis-compl}
\begin{array}{rcl}
C_{{\cal M} {\cal N} \bar {\cal K}}(J) &=& \partial_{\cal N} {\cal H}_{{\cal M} \bar {\cal K}} -   \partial_{\cal M} {\cal H}_{{\cal N} \bar {\cal K}},
\\[9pt]
C_{\bar {\cal M} \bar {\cal N}  {\cal K}}(J) &=& \left[C_{{\cal M} {\cal N} \bar {\cal K}} (J) \right]^* \ =\ \partial_{\bar {\cal N}} {\cal H}_{{\cal K} \bar  {\cal M}} -
\partial_{\bar {\cal M}} {\cal H}_{{\cal K} \bar {\cal N} },
\end{array}
\ee
and those obtained from them by the permutation of the indices. The other components of $C_{MNK}$ vanish.

${\cal H}_{{\cal M} \bar {\cal N}}$ is the  Hermitian metric  tensor. For a bi-K\"ahler manifold, it has a block
diagonal form,
  \be
\lb{block}
  {\cal H}_{{\cal M} \bar {\cal N}} \ =\ {\rm diag}  \left( {\kappa }_{a\bar b}, {\kappa}_{\alpha\bar\beta} \right) \, .
   \ee
with
$\kappa_{a \bar b} = \partial_a \partial_{\bar b} {\cal K}$ and
$\kappa_{\alpha \bar \beta}  = -\partial_\alpha \partial_{\bar\beta} {\cal K}$.

 Then the torsion components $C_{ab\bar c}, C_{\bar a \bar b c}, C_{\alpha \beta \bar\gamma},
C_{\bar\alpha \bar\beta \gamma}$, associated with a single sector, vanish. There are nonzero  mixed components,
\be
\lb{C-mixed}
C_{a\alpha \bar b }(J) = \partial_\alpha \kappa_{a \bar b}, \ \ \ \ C_{\bar a \bar \alpha b}(J) = \partial_{\bar \alpha} \kappa_{b \bar a},
\ \ \ \ C_{ a  \alpha \bar \beta}(J) = -\partial_a \kappa_{\alpha  \bar \beta}, \ \ \ \ \
C_{ \bar a  \bar \alpha  \beta}(J) = -\partial_{\bar a} \kappa_{\beta  \bar \alpha},
  \ee
and those obtained by permutation.

Now, the complex structure $I$ differs from $J$ by a sign in the mirror sector. This simply means that the definitions of holomorphic and antiholomorphic coordinates are now interchanged. In other words, the Bismut torsions for the complex structure $I$ are obtained from \p{C-mixed} by interchanging $\alpha \leftrightarrow \bar\alpha,  \beta \leftrightarrow \bar\beta$. Using the antisymmetry of $C$, it is not difficult to see that this amounts to changing the sign, and the relation
\p{C=-C} holds.

\end{proof}

It does not quite work in the opposite direction. What one can prove is the following,

\begin{prop}
Consider a manifold having two commuting complex structures $I,J$  with opposite Bismut torsions, $C(I)= -C(J)$.
 Then the metric can be brought to the block diagonal form,
 \be
\lb{K-12}
 {\cal H}_{{\cal M} \bar {\cal N}} \ =\ {\rm diag}  \left( \partial_a \bar \partial_{\bar b} {\cal K}_1,
\partial_\alpha \bar \partial_{\bar\beta} {\cal K}_2 \right) \, .
\ee
\end{prop}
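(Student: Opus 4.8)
The plan is to prove this partial converse of the previous proposition by working in holomorphic coordinates adapted to $J$ and reading off the two desired properties — block-diagonality and the Kähler form of each block — from compatibility of the metric with both complex structures together with the sign relation $C(I)=-C(J)$. \emph{Step 1 (adapted coordinates).} Since $I$ and $J$ commute and both square to $-1$, the product $IJ$ is an involution, $(IJ)^2=1$, whose $\pm1$ eigenbundles split the tangent space into an ``ordinary'' part where $I=J$ and a ``mirror'' part where $I=-J$. On the $J$-holomorphic tangent bundle $T^{1,0}_J$ the tensor $I$ commutes with $J$ and hence acts as $\pm i$, so $T^{1,0}_J=V_+\oplus V_-$ with $I=+i$ on $V_+$ and $I=-i$ on $V_-$. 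Integrability of $I$ and $J$ (vanishing of the Nijenhuis tensor \p{Nijen}) makes $T^{1,0}_I$ and $T^{1,0}_J$ involutive; since an intersection of involutive distributions is involutive, $V_+=T^{1,0}_I\cap T^{1,0}_J$ and $V_-=T^{0,1}_I\cap T^{1,0}_J$ are involutive sub-bundles, and the holomorphic Frobenius theorem supplies $J$-holomorphic coordinates $(z^a,z^\alpha)$ with $V_+=\mathrm{span}(\partial_{z^a})$ and $V_-=\mathrm{span}(\partial_{z^\alpha})$. In these coordinates the $I$-holomorphic coordinates are $(z^a,\bar z^\alpha)$.

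\emph{Step 2 (block-diagonality).} Both $I$ and $J$ are orthogonal with respect to the metric — this is needed for the two Bismut connections in the hypothesis to exist — so $g$ is of type $(1,1)$ with respect to each. Writing $g=2h_{\mathcal M\bar{\mathcal N}}\,dz^{\mathcal M}\,d\bar z^{\bar{\mathcal N}}$ as in \p{def-h} and classifying each term by its $I$-type, using that $dz^a,d\bar z^\alpha$ are $I$-holomorphic while $d\bar z^a,dz^\alpha$ are $I$-antiholomorphic, I find that the mixed blocks $h_{a\bar\beta}\,dz^a\,d\bar z^\beta$ and $h_{\alpha\bar b}\,dz^\alpha\,d\bar z^b$ are precisely the $(2,0)_I$ and $(0,2)_I$ parts of $g$. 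Hermiticity with respect to $I$ forces them to vanish, $h_{a\bar\beta}=h_{\alpha\bar b}=0$, so $h$ is already block-diagonal, $h=\mathrm{diag}(h_{a\bar b},h_{\alpha\bar\beta})$. On the ordinary sector $I$ and $J$ coincide, so from \p{C-Bis-compl} one has $C_{ab\bar c}(I)=C_{ab\bar c}(J)=\partial_b h_{a\bar c}-\partial_a h_{b\bar c}$, and the hypothesis $C(I)=-C(J)$ then forces $\partial_b h_{a\bar c}=\partial_a h_{b\bar c}$, the Kähler condition for the ordinary block.

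\emph{Step 3 (mirror sector and potentials).} For the mirror block I apply \p{C-Bis-compl} in the $I$-holomorphic coordinate $\zeta^\alpha=\bar z^\alpha$, in which the $I$-Hermitian metric is $G_{\beta\bar\alpha}=h_{\alpha\bar\beta}$ and $\partial_{\zeta^\gamma}=\bar\partial_{\bar\gamma}$. Translating the resulting component back into the $J$-frame gives $[C(I)]_{\bar\beta\bar\gamma\alpha}=\bar\partial_{\bar\gamma}h_{\alpha\bar\beta}-\bar\partial_{\bar\beta}h_{\alpha\bar\gamma}$, which is \emph{equal} to $[C(J)]_{\bar\beta\bar\gamma\alpha}$ obtained from the $(1,2)$ part of \p{C-Bis-compl}; hence $C(I)=-C(J)$ again kills the pure-mirror components, i.e. the mirror block is Kähler. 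A short check shows that the remaining mixed-sector components of the two torsions automatically satisfy the sign relation once $h$ is block-diagonal, so they impose no further constraint. The vanishing of the pure-sector torsions is exactly closedness of each block as a Kähler form, so the Poincaré lemma supplies local potentials with $h_{a\bar b}=\partial_a\bar\partial_{\bar b}\mathcal K_1$ and $h_{\alpha\bar\beta}=\partial_\alpha\bar\partial_{\bar\beta}\mathcal K_2$, which is \p{K-12}.

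The routine part is the torsion bookkeeping. The delicate points are establishing the adapted coordinate splitting of Step 1 rigorously (involutivity of the $IJ$-eigenbundles, where I rely on integrability of both structures) and, above all, keeping the conjugation and antisymmetrization signs straight in Step 3 when re-expressing $C(I)$ — computed in the $I$-frame $(z^a,\bar z^\alpha)$ — as a tensor component in the $J$-frame. The decisive fact is that the pure-mirror components of $C(I)$ and $C(J)$ come out \emph{equal} rather than opposite, so that it is the sign hypothesis, and not the mere coincidence of types, that forces them to vanish; verifying this equality is where I would concentrate the care. One should also note that the conclusion is local and that $\mathcal K_1,\mathcal K_2$ may still depend on the coordinates of the other sector as parameters, which is consistent with the block-diagonal form \p{block} appearing in the previous proposition.
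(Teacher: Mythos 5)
Your proof is correct and takes essentially the same route as the paper's: block-diagonality of the metric follows from Hermiticity with respect to both $I$ and $J$ (the mixed blocks being of $I$-type $(2,0)$ and $(0,2)$), and then the pure-sector Bismut torsion components of $I$ and $J$ computed from \p{C-Bis-compl} come out \emph{equal} rather than opposite, so the hypothesis $C(I)=-C(J)$ forces them to vanish and yields the local K\"ahler potentials $\mathcal{K}_1,\mathcal{K}_2$ for the two blocks. Your Frobenius argument in Step 1 and the explicit mirror-sector sign check in Step 3 simply make rigorous what the paper asserts without detail (``it is not difficult to show...''), and your closing remark that $\mathcal{K}_1,\mathcal{K}_2$ may depend on the other sector's coordinates as parameters is consistent with the paper's usage.
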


\begin{proof}
It is not difficult to show that a couple of  commuting complex structures can be brought to the canonical form
\p{comp-str-B-c}.
The holomorphic coordinates associated with the complex structure $I$ are obtained from those associated with the complex structure $J$ by  conjugating the coordinates in the mirror sector,
$u_J^\alpha \to \bar u_I^{\bar \alpha}$. The metric should be Hermitian both in terms of
$(z^a,  u_J^\alpha)$ and in terms of
$(z^a,  u_I^\alpha)$. That means that it is bound to have a block diagonal form,
      \be
\lb{metr-J}
 ds^2 \ =\ 2\, {\mathrm h}_{a \bar b} \,dz^a d\bar z^{\bar b} + 2 \, {\mathrm h}_{\alpha \bar \beta}\,
d u_J^\alpha d \bar u_J^{\bar\beta} \, .
 \ee
The terms $\sim dz d\bar u$ would give $\sim dz d u$ after going from $J$ to $I$ and are not allowed.

Further, the torsion components
\be
\lb{C-abc}
C_{ab\bar c} = \partial_b {\mathrm h}_{a \bar c} - \partial_a {\mathrm h}_{b \bar c}, \ \ \ \ \ \ \
C_{\bar a \bar b c} = \bar \partial_b {\mathrm h}_{c \bar a} - \bar\partial_a {\mathrm h}_{c \bar b}\, ,
\ee
which, in contrast to the mixed components \p{C-mixed}, do {\it not} change sign when going from $J$ to $I$, should vanish. The same is true for the mirror sector. And this implies
 \be
{\mathrm h}_{a \bar b}  \ =\ \partial_a \bar \partial_{\bar b} {\cal K}_1, \ \ \ \ \ \ \ \ \ \
 {\mathrm h}_{\alpha \bar \beta}  \ =\ \partial_\alpha \bar \partial_{\bar\beta} {\cal K}_2 \, .
 \ee
\end{proof}

One can suggest now the following geometric definition,

\begin{defi} An ordinary bi-K\"ahler manifold is a manifold with two commuting complex structures whose Bismut torsions are opposite and whose block diagonal metric \p{K-12} satisfies the additional constraint ${\cal K}_2 = - {\cal K}_1$.
 \end{defi}

If ${\cal K}_2 \neq -{\cal K}_1$, there is no reason to expect that the model admits two pairs of complex supercharges
satisfying the ${\cal N} = 4$ supersymmetry algebra.

In a generic quasicomplex bi-Kahler model \p{moreKah2} with the bosonic Lagrangian \p{lagr242-b-Ka-2}, the integration over the auxiliary fields modifies the metric as in \p{bez-A}. The Bismut torsions of the modified metric {\it have} the components like in \p{C-abc}, and
the property $C(I) = -C(J)$ does not hold anymore.

\section{Summary and outlook.}
We list again here the most essential original observations made in our paper.

\begin{enumerate}

\item We considered a generic ${\cal N} = 4$ Lagrangian involving a certain number of ordinary
({\bf 4}, {\bf 4}, {\bf 0}) multiplets \p{V-4-2exp} and a certain number of mirror multiplets \p{W-mu-expan} and  showed that it describes the {\it Clifford} KT supersymmetric quantum mechanical sigma models introduced in
\cite{CKT,Hull}.  Such models are characterized by the HKT geometry in both the ordinary and the mirror sector and can thus be called {\it bi-HKT} models.
They involve two different hypercomplex structures \p{comp-str-B-c}.

We presented explicit expressions for the superfields actions [Eqs.\p{act4} - \p{Del2a-VW}], the component
Lagrangian [Eqs.\p{lagr4-b-app}, \p{lagr4-2f-app},  \p{lagr4-4f-app}], the complex supercharges [ Eqs.\p{S-CKT}, \p{R-CKT}] and the real
supercharges \p{Q_real}. The latter allowed us to unravel the geometric structure of bi-HKT models (see {\bf Definition 6} and its justification in {\bf Theorem 1}).

\item  We performed the Hamiltonian reduction of a bi-HKT model with respect to the imaginary parts of all complex coordinates. The reduced model is described  via a certain number of ordinary and mirror   ({\bf 2}, {\bf 4}, {\bf 2}) multiplets [Eqs.\p{242-sf} - \p{242-const1}].  The model represents a generalization of the twisted K\"ahler model of Ref.\cite{GHR},
but involves extra holomorphic terms in the superfield action [see Eq. \p{moreKah2}].

The model can be described in terms of real ${\cal N} = 2$ superfields. It belongs to the class of quasicomplex
de Rham models with Hermitian (rather than just real) superfield metric \cite{quasi} and involves also {\it holomorphic torsions} such that the fermion charge is not conserved  \cite{torsion,FIS1}.  The model involves two different complex structures.  It is natural to call such models {\it bi-K\"ahler} models. Two complex
structures of a bi-K\"ahler model trace back their origin to two different hypercomplex structures of the parent bi-HKT model.

For a restricted class of these models given by \p{act242}, we derived the component Lagrangian [Eqs. \p{lagr242-b-Ka-2}, \p{lagr242-2f-add}, \p{lagr242-4f-add}]
and the supercharges [Eqs. \p{SR-HKT-red}, \p{SR-HKT-red-1}].

\end{enumerate}

In our paper, we studied the most general models described by a set of ordinary and mirror  linear ({\bf 4}, {\bf 4}, {\bf 0}) multiplets
or  ({\bf 2}, {\bf 4}, {\bf 2}) multiplets. But one could also consider  the models described by nonlinear multiplets.
We conjectured that such models describe again bi-HKT and bi-K\"ahler manifolds, but with nonzero Obata connections.

 Our second conjecture was that any hypercomplex manifold (a manifold possesing three quaternionic complex structures) is either HKT or bi-HKT.
It would be interesting to prove (or disprove) these conjectures.

 Another perspective direction of research is to include the {\it gauge} ${\cal N} = 4$ multiplets \cite{DI-gauge}
and semi-dynamical spinning multiplets \cite{FIL-spin} and study
 the models involving their interaction  with ({\bf 4}, {\bf 4}, {\bf 0}) or  ({\bf 2}, {\bf 4}, {\bf 2})  ``matter''
multiplets.   As was shown in \cite{Maxim}, one can obtain in this case the models involving gauge fields living on the manifold. Only the simplest cases with the flat metric or the conformally flat 4-dimensional metric were studied so far.
A generalization of this study to generic HKT  and bi-HKT manifolds would represent an interest.

\section*{Acknowledgements}

\noindent
We acknowledge the  support from the collaboration grant 03-58 of the  IN2P3-JINR program.
S.F.  acknowledge support from the RFBR grant 16-52-12012.
S.F.  thanks SUBATECH,
Universit\'{e} de Nantes,  and A.S. thanks JINR, Dubna for  warm hospitality
in the course of this study. We are indebted to E. Ivanov and M. Verbitsky for useful discussions.

\section*{Appendix A. Component Lagrangians.}
\setcounter{equation}0
\def\theequation{A.\arabic{equation}}

\subsection*{ A.1 \ bi-HKT}

The Lagrangian of a bi-HKT model involves besides the bosonic term
\p{lagr4-b-app} also the terms \p{lagr4-2f-app} \p{lagr4-4f-app}
\begin{eqnarray}
   L^{\rm bi-HKT}_{2f} &=& { \frac{i}{2}}\,h^{ab}_{m\bar n}
\left(\psi_a^m\dot{\bar\psi}_b^{\bar n}-\dot\psi_a^m\bar\psi_b^{\bar n} \right)+
{ \frac{i}{2}}\,h^{\alpha\beta}_{\mu\bar \nu}
\left(\chi_\alpha^\mu\dot{\bar\chi}_\beta^{\bar\nu}-\dot\chi_\alpha^\mu\bar\chi_\beta^{\bar\nu} \right)
\lb{lagr4-2f-app}\\ [11pt]
&& + i\left(\partial^c_k h_{m\bar n}^{ab}\right)\dot v_a^m\bar\psi_b^{\bar n}\psi_c^k
-i\left(\bar\partial^{c}_{\bar k} h_{m\bar n}^{ab}\right)\bar\psi_c^{\bar k}\psi_a^m \dot {\bar v}_b^{\bar n}
 + i\left(\partial^\gamma_\lambda h_{\mu\bar \nu}^{\alpha\beta}\right)\dot w_\alpha^\mu\bar\chi_\beta^{\bar \nu}\chi_\gamma^\lambda
-i\left(\bar\partial^{\gamma}_{\bar \lambda} h_{\mu\bar \nu}^{\alpha\beta}\right)\bar\chi_\gamma^{\bar \lambda}\chi_\alpha^\mu \dot {\bar w}_\beta^{\bar \nu}
\nonumber\\ [11pt]
&& +{ \frac{i}{2}}\left(\dot v_c^k\partial^{c}_k -\dot{\bar v}_c^{\bar k}\bar\partial^{c}_{\bar k}
+\dot w_\gamma^\lambda\partial^{\gamma}_\lambda -\dot{\bar w}_\gamma^{\bar \lambda}\bar\partial^{\gamma}_{\bar \lambda}\right)
\left( h^{ab}_{m\bar n}\,\psi_a^m\bar\psi_b^{\bar n} +h^{\alpha\beta}_{\mu\bar \nu}\,\chi_\alpha^\mu\bar\chi_\beta^{\bar \nu}\right)
\nonumber\\ [11pt]
&& + i\left(\partial^a_m h_{\mu\bar \nu}^{\alpha\beta}\right)\dot w_\alpha^\mu\bar\chi_\beta^{\bar \nu}\psi_a^m
-i\left(\bar\partial^{\alpha}_{\bar \mu} h_{m\bar n}^{ab}\right)\bar\chi_\alpha^{\bar \mu}\psi_a^m \dot {\bar v}_b^{\bar n}
+ i\left(\partial^\alpha_\mu h_{m\bar n}^{ab}\right)\dot v_a^m\bar\psi_b^{\bar n}\chi_\alpha^\mu
-i\left(\bar\partial^{a}_{\bar m} h_{\mu\bar \nu}^{\alpha\beta}\right)\bar\psi_a^{\bar m}\chi_\alpha^\mu \dot {\bar w}_\beta^{\bar \nu}
\nonumber\\ [11pt]
&& - { \frac{i}{2}}\,\dot w_\alpha^\mu\epsilon_{\mu\nu}\bar\partial^{\alpha}_{\bar \nu}\,
\epsilon_{nk}h^{ab}_{m\bar k}\,\psi_a^m\psi_b^{n}
+{ \frac{i}{2}} \, \dot{\bar w}_\alpha^{\bar \mu}\epsilon_{\bar \mu\bar \nu}\partial^{\alpha}_\nu \,
\epsilon_{\bar m\bar k} h^{ab}_{k\bar n}\,\bar\psi_a^{\bar m}\bar\psi_b^{\bar n}
\nonumber\\ [11pt]
&& -{ \frac{i}{2}}\,\dot v_a^m\epsilon_{mn}\bar\partial^{a}_{\bar n}\,
\epsilon_{\nu\lambda}h^{\alpha\beta}_{\mu\bar \lambda}\,\chi_\alpha^\mu\chi_\beta^{\nu}
+{ \frac{i}{2}} \, \dot{\bar v}_a^{\bar m}\epsilon_{\bar m\bar n}\partial^{a}_n\,
\epsilon_{\bar \mu\bar \lambda} h^{\alpha\beta}_{\lambda\bar \nu}\,\bar\chi_\alpha^{\bar \mu}\bar\chi_\beta^{\bar \nu}
\nonumber\\ [11pt]
&& +i\,\chi_\alpha^\mu\epsilon_{\mu\nu}\bar\partial^{\alpha}_{\bar \nu}\,
\epsilon_{nk}h^{ab}_{m\bar k}\,\dot v_a^m\psi_b^{n}
+i \, \psi_a^{m}\epsilon_{mn}\bar\partial^{a}_{\bar n}\,
\epsilon_{\nu\lambda} h^{\alpha\beta}_{\mu\bar \lambda}\,\dot{w}_\alpha^{\mu}\chi_\beta^{\nu}
\nonumber\\ [11pt]
&& -i\,\bar\chi_\alpha^{\bar\mu}\epsilon_{\bar\mu\bar\nu}\partial^{\alpha}_{\nu}\,
\epsilon_{\bar m\bar k}h^{ab}_{k\bar n}\,\dot {\bar v}_a^{\bar m}\bar\psi_b^{\bar n}
-i \, \bar\psi_a^{\bar m}\epsilon_{\bar m\bar n}\partial^{a}_{n}\,
\epsilon_{\bar\mu\bar\lambda} h^{\alpha\beta}_{\lambda\bar \nu}\,\dot{\bar w}_\alpha^{\bar\mu}\bar\chi_\beta^{\bar\nu} \,,
\nonumber
\end{eqnarray}
\begin{eqnarray}
\lb{lagr4-4f-app}
L^{\rm bi-HKT}_{4f} &=& -{ \frac{1}{4}}\,\epsilon_{np}\,\epsilon_{\bar k \bar r}
\left(\Delta^{cd}_{r\bar l}h^{ab}_{ m\bar p}
\right)\psi_a^{m}\psi_b^{n}\bar\psi_c^{\bar k}\bar\psi_d^{\bar l}
+{ \frac{1}{4}}\,\epsilon_{\nu\sigma}\,\epsilon_{\bar \lambda\bar \kappa}
\left(\Delta^{\gamma\delta}_{\kappa\bar \rho}h^{\alpha\beta}_{\mu\bar \sigma}
\right)\chi_\alpha^{\mu}\chi_\beta^{\nu}\bar\chi_\gamma^{\bar \lambda}\bar\chi_\delta^{\bar \rho}\\ [11pt]
&& +{ \frac{1}{2}}\,\bar\chi_\alpha^{\bar \mu}\bar\partial^{\alpha}_{\bar \mu}\,\bar\psi_c^{\bar k}\epsilon_{\bar k \bar l}\partial^{c}_{l}\,
\epsilon_{np}h^{ab}_{ m\bar p}\psi_a^{m}\psi_b^{n}
-{ \frac{1}{2}}\,\chi_\alpha^{\mu}\partial^{\alpha}_{\mu}\,\psi_c^{k}\epsilon_{k l}\bar\partial^{c}_{\bar l}\,
\epsilon_{\bar m\bar p}h^{ab}_{ p\bar n}\bar \psi_a^{\bar m}\bar \psi_b^{\bar n}
\nonumber\\ [11pt]
&& +{ \frac{1}{2}}\,\bar\psi_a^{\bar m}\bar\partial^{a}_{\bar m}\,\bar\chi_\gamma^{\bar \lambda}\epsilon_{\bar \lambda \bar \rho}\partial^{\gamma}_{\rho}\,
\epsilon_{\nu \sigma}h^{\alpha\beta}_{ \mu\bar \sigma}\chi_\alpha^{\mu}\chi_\beta^{\nu}
-{ \frac{1}{2}}\,\psi_a^{m}\partial^{a}_{m}\,\chi_\gamma^{\lambda}\epsilon_{\lambda\rho}\bar\partial^{\gamma}_{\bar\rho}\,
\epsilon_{\bar\mu \bar\sigma}h^{\alpha\beta}_{ \sigma\bar \nu}\bar\chi_\alpha^{\bar\mu}\bar\chi_\beta^{\bar\nu}
\nonumber\\ [11pt]
&&
-{ \frac{1}{2}}\,\chi_\alpha^{\mu}\bar\chi_\beta^{\bar \nu}
\left(\partial^{\alpha}_{\mu}\bar\partial^{\beta}_{\bar\nu}-
\epsilon_{\mu\lambda}\epsilon_{\bar\nu\bar\rho}\bar\partial^{\alpha}_{\bar\lambda}\partial^{\beta}_{\rho}\right)
h^{ab}_{ m\bar n}\psi_a^{m}\bar \psi_b^{\bar n}
-{ \frac{1}{2}}\,\psi_a^{m}\bar\psi_b^{\bar n}
\left(\partial^{a}_{m}\bar\partial^{b}_{\bar n}-
\epsilon_{mk}\epsilon_{\bar n\bar l}\bar\partial^{a}_{\bar k}\partial^{b}_{l}\right)
h^{\alpha\beta}_{ \mu\bar \nu}\chi_\alpha^{\mu}\bar \chi_\beta^{\bar \nu}
\nonumber\\ [11pt]
&& -{ \frac{1}{2}}\left(\bar\psi_c^{\bar l}\bar\partial^{c}_{\bar l}+\bar\chi_\gamma^{\bar \lambda}\bar\partial^{\gamma}_{\bar\lambda}\right)
\left(\chi_\alpha^{\mu}\epsilon_{\mu\nu}\bar\partial^{\alpha}_{\bar\nu}\,
\epsilon_{np}h^{ab}_{ m\bar p}\psi_a^{m}\psi_b^{n}
+\psi_a^{m}\epsilon_{mn}\bar\partial^{a}_{\bar n}\,
\epsilon_{\nu\sigma}h^{\alpha\beta}_{\mu\bar\sigma}\chi_\alpha^{\mu}\chi_\beta^{\nu}\right)
\nonumber\\ [11pt]
&&
+{ \frac{1}{2}}\left(\psi_c^{l}\partial^{c}_{l}+\chi_\gamma^{\lambda}\partial^{\gamma}_{\lambda}\right)
\left(\bar\chi_\alpha^{\bar\mu}\epsilon_{\bar\mu\bar\nu}\partial^{\alpha}_{\nu}\,
\epsilon_{\bar m\bar p}h^{ab}_{p\bar n}\bar\psi_a^{\bar m}\bar\psi_b^{\bar n}+
\bar\psi_a^{\bar m}\epsilon_{\bar m\bar n}\partial^{a}_{n}\,
\epsilon_{\bar\mu\bar\sigma}h^{\alpha\beta}_{\sigma\bar\nu}\bar\chi_\alpha^{\bar\mu}\bar\chi_\beta^{\bar\nu}
\right),
\nonumber
  \end{eqnarray}

\subsection*{A.2\ bi-K\"ahler}

We present here the full component expression for the bi-K\"ahler Lagrangian chosen
in the restricted form \p{act242} and expressed via the chiral ${\cal N} = 2$ superfields as in \p{act422-2}.  Its bosonic part was written in \p{lagr242-b-Ka-2}.
There are also 2-fermion and 4-fermion terms,
\be
\lb{lagr242-2f-add}
\begin{array}{rcl}
 L^{\rm bi-K}_{2f} &=& {\displaystyle\frac{i}{2}}\,\kappa_{a\bar b}
\left(\phi^a\dot{\bar\phi}^{\bar b}-\dot\phi^a\bar\phi^{\bar b} +\varphi^a\dot{\bar\varphi}^{\bar b}
-\dot\varphi^a\bar\varphi^{\bar b} \right)+{\displaystyle\frac{i}{2}}\,\kappa_{\alpha\bar \beta}
\left(\rho^\alpha\dot{\bar\rho}^{\bar \beta}-\dot\rho^\alpha\bar\rho^{\bar \beta}+
\varrho^\alpha\dot{\bar\varrho}^{\bar \beta}-\dot\varrho^\alpha\bar\varrho^{\bar \beta}\right)
\\ [11pt]
&& +{\displaystyle \frac12}
\,{\cal F}_{[ab]}\left(\varphi^{a}\dot\phi^{b}-\phi^{a}\dot\varphi^{b} \right)-
{\displaystyle \frac12}\,
\bar{\cal F}_{[\bar a\bar b]}
\left(\bar\varphi^{\bar a}\dot{\bar\phi}^{\bar b}-\bar\phi^{\bar a}\dot{\bar\varphi}^{\bar b} \right)
\\ [11pt]
&& +{\displaystyle \frac12}\,
{\cal G}_{[\alpha\beta]} \left(\varrho^{\alpha}\dot\rho^{\beta}-\rho^{\alpha}\dot\varrho^{\beta} \right)-
{\displaystyle \frac12}\,
\bar{\cal G}_{[\bar\alpha\bar\beta]}
\left(\bar\varrho^{\bar\alpha}\dot{\bar\rho}^{\bar\beta}-\bar\rho^{\bar\alpha}\dot{\bar\varrho}^{\bar\beta} \right)
\\ [11pt]
&& - {\displaystyle\frac{i}{2}}\left(\dot z^c\partial_{c}-\dot {\bar z}^{\bar c}\bar\partial_{\bar c}\right)
\kappa_{a\bar b}\left(\phi^a\bar\phi^{\bar b}+ \varphi^a\bar\varphi^{\bar b}\right)
-{\displaystyle\frac{i}{2}}\left(\dot u^\gamma\partial_{\gamma}-\dot {\bar u}^{\bar\gamma}\bar\partial_{\bar\gamma}\right)
\kappa_{\alpha\bar \beta}\left(\rho^\alpha\bar\rho^{\bar \beta}+ \varrho^\alpha\bar\varrho^{\bar \beta}\right)
\\ [11pt]
&& + {\displaystyle\frac{i}{2}}\left(\dot u^\gamma\partial_{\gamma}-\dot {\bar u}^{\bar\gamma}\bar\partial_{\bar\gamma}\right)
\kappa_{a\bar b}\left(\phi^a\bar\phi^{\bar b}- \varphi^a\bar\varphi^{\bar b}\right)
+ {\displaystyle\frac{i}{2}}\left(\dot z^c\partial_{c}-\dot {\bar z}^{\bar c}\bar\partial_{\bar c}\right)
\kappa_{\alpha\bar \beta}\left(\rho^\alpha\bar\rho^{\bar \beta}- \varrho^\alpha\bar\varrho^{\bar \beta}\right)
\\ [11pt]
&& -i\left(\partial_{\gamma}\kappa_{a\bar b}\right)
\rho^\gamma\dot z^{a}\bar\phi^{\bar b}
- i\left(\bar\partial_{\bar\gamma}\kappa_{a\bar b}\right)
\bar\rho^{\bar\gamma}\phi^{a}\dot{\bar z}^{\bar b}
-i\left(\bar\partial_{\bar\gamma}\kappa_{a\bar b}\right)
\bar\varrho^{\bar\gamma}\dot z^{a}\bar\varphi^{\bar b}
- i\left(\partial_{\gamma}\kappa_{a\bar b}\right)
\varrho^{\gamma}\varphi^{a}\dot{\bar z}^{\bar b}
\\ [11pt]
&& -i\left(\partial_{c}\kappa_{\alpha\bar \beta}\right)
\phi^c\dot u^{\alpha}\bar\rho^{\bar \beta}
- i\left(\bar\partial_{\bar c}\kappa_{\alpha\bar \beta}\right)
\bar\phi^{\bar c}\rho^{\alpha}\dot{\bar u}^{\bar \beta}
-i\left(\bar\partial_{\bar c}\kappa_{\alpha\bar \beta}\right)
\bar\varphi^{\bar c}\dot u^{\alpha}\bar\varrho^{\bar \beta}
- i\left(\partial_{c}\kappa_{\alpha\bar \beta}\right)
\varphi^{c}\varrho^{\alpha}\dot{\bar u}^{\bar \beta}
\\ [11pt]
&& -\left(\bar\partial_{\bar c}\kappa_{a\bar b}\right)
\bar\phi^{\bar c}A^a\bar\varphi^{\bar b}
+ \left(\partial_{c}\kappa_{a\bar b}\right)
\phi^{c}\varphi^a{\bar A}^{\bar b}
-\left(\bar\partial_{\bar\gamma}\kappa_{\alpha\bar \beta}\right)
\bar\rho^{\bar \gamma}B^\alpha\bar\varrho^{\bar \beta}
+ \left(\partial_{\gamma}\kappa_{\alpha\bar \beta}\right)
\rho^{\gamma}\varrho^\alpha{\bar B}^{\bar \beta}
\\ [11pt]
&& -\left(\bar\partial_{\bar \gamma}\kappa_{a\bar b}\right)
\bar\rho^{\bar \gamma}A^a\bar\varphi^{\bar b}
+ \left(\partial_{\gamma}\kappa_{a\bar b}\right)
\rho^{\gamma}\varphi^a{\bar A}^{\bar b}
-\left(\bar\partial_{\bar c}\kappa_{\alpha\bar \beta}\right)
\bar\phi^{\bar c}B^\alpha\bar\varrho^{\bar \beta}
+ \left(\partial_{c}\kappa_{\alpha\bar \beta}\right)
\phi^{c}\varrho^\alpha{\bar B}^{\bar \beta}
\\ [11pt]
&& +\left(\partial_{\gamma}\kappa_{a\bar b}\right)
\varrho^{\gamma}A^a\bar\phi^{\bar b}
- \left(\bar\partial_{\bar\gamma}\kappa_{a\bar b}\right)
\bar\varrho^{\bar\gamma}\phi^a{\bar A}^{\bar b}
+\left(\partial_{c}\kappa_{\alpha\bar \beta}\right)
\varphi^{c}B^\alpha\bar\rho^{\bar \beta}
- \left(\bar\partial_{\bar c}\kappa_{\alpha\bar \beta}\right)
\bar\varphi^{\bar c}\rho^\alpha{\bar B}^{\bar \beta}
\\ [11pt]
&& -A^c\left(\partial_{c}\kappa_{\alpha\bar \beta}\right)
\varrho^{\alpha}\bar\rho^{\bar \beta}
- {\bar A}^{\bar c}\left(\bar\partial_{\bar c}\kappa_{\alpha\bar \beta}\right)
\rho^\alpha\bar\varrho^{\bar\beta}
-B^\gamma\left(\partial_{\gamma}\kappa_{a\bar b}\right)
\varphi^{a}\bar\phi^{\bar b}
- {\bar B}^{\bar \gamma}\left(\bar\partial_{\bar \gamma}\kappa_{a\bar b}\right)
\phi^a\bar\varphi^{\bar b}
\\ [11pt]
&& +{\displaystyle \frac12}\,\phi^{c}(\partial_{c}{\cal F}_{[ab]})\dot z^{a}\varphi^{b}
-{\displaystyle \frac12}\,\varphi^{c}(\partial_{c}{\cal F}_{[ab]})\dot z^{a}\phi^{b}
-{\displaystyle \frac12}\,\bar\phi^{\bar c}(\bar\partial_{\bar c}\bar{\cal F}_{[\bar a\bar b]})\dot {\bar z}^{\bar a}\bar\varphi^{\bar b}
+{\displaystyle \frac12}\,\bar\varphi^{\bar c}(\bar\partial_{\bar c}\bar{\cal F}_{[\bar a\bar b]})\dot {\bar z}^{\bar a}\bar\phi^{\bar b}
\\ [11pt]
&& +{\displaystyle \frac12}\,\rho^{\gamma}(\partial_{\gamma}{\cal G}_{[\alpha\beta]})\dot u^{\alpha}\varrho^{\beta}
-{\displaystyle \frac12}\,\varrho^{\gamma}(\partial_{\gamma}{\cal G}_{[\alpha\beta]})\dot u^{\alpha}\rho^{\beta}
-{\displaystyle \frac12}\,\bar\rho^{\bar\gamma}(\bar\partial_{\bar\gamma}
\bar{\cal G}_{[\bar\alpha\bar\beta]})\dot {\bar u}^{\bar\alpha}\bar\varrho^{\bar\beta}
+{\displaystyle \frac12}\,\bar\varrho^{\bar\gamma}(\bar\partial_{\bar\gamma}
\bar{\cal G}_{[\bar\alpha\bar\beta]})\dot {\bar u}^{\bar\alpha}\bar\rho^{\bar\beta}\,,
\end{array}
\ee
\be
\lb{lagr242-4f-add}
\begin{array}{rcl}
L^{\rm bi-K}_{4f} &=& \left(\partial_{c}\bar\partial_{\bar d}\kappa_{a\bar b} \right)
\phi^c\bar\phi^{\bar d}\varphi^a\bar\varphi^{\bar b}+
\left(\partial_{\gamma}\bar\partial_{\bar\delta}\kappa_{\alpha\bar \beta} \right)
\rho^\gamma\bar\rho^{\bar\delta}\varrho^\alpha\bar\varrho^{\bar\beta}
\\ [11pt]
&&+\left(\partial_{\gamma}\bar\partial_{\bar\delta}\kappa_{a\bar b} \right)
\rho^\gamma\bar\rho^{\bar\delta}\varphi^a\bar\varphi^{\bar b}-\left(\partial_{\beta}\partial_{\gamma}\kappa_{a\bar b} \right)
\rho^\beta\varrho^{\gamma}\varphi^a\bar\phi^{\bar b}-
\left(\bar\partial_{\bar \beta}\bar\partial_{\bar \gamma}\kappa_{a\bar b} \right)
\bar\varrho^{\bar \gamma}\bar\rho^{\bar \beta}\phi^a\bar\varphi^{\bar b}
\\ [11pt]
&&+
\left(\partial_{c}\bar\partial_{\bar d}\kappa_{\alpha\bar \beta} \right)
\phi^c\bar\phi^{\bar d}\varrho^\alpha\bar\varrho^{\bar\beta}
-\left(\partial_{b}\partial_{c}\kappa_{\alpha\bar \beta} \right)
\phi^{b}\varphi^{c}\varrho^\alpha\bar\rho^{\bar\beta}-
\left(\bar\partial_{\bar b}\bar\partial_{\bar c}\kappa_{\alpha\bar \beta} \right)
\bar\varphi^{\bar c}\bar\phi^{\bar b}\rho^\alpha\bar\varrho^{\bar\beta}
\\ [11pt]
&&+\left(\partial_{c}\bar\partial_{\bar\gamma}\kappa_{a\bar b}\right)
\phi^c\bar\rho^{\bar\gamma}\varphi^a\bar\varphi^{\bar b}+
\left(\partial_{\gamma}\bar\partial_{\bar c}\kappa_{a\bar b} \right)
\rho^\gamma\bar\phi^{\bar c}\varphi^a\bar\varphi^{\bar b}
\\ [11pt]
&&+\left(\partial_{c}\partial_{\gamma}\kappa_{a\bar b} \right)
\varphi^c\varrho^{\gamma}\phi^a\bar\phi^{\bar b}+
\left(\bar\partial_{\bar c}\bar\partial_{\bar \gamma}\kappa_{a\bar b} \right)
\bar\varrho^{\bar \gamma}\bar\varphi^{\bar c}\phi^a\bar\phi^{\bar b}
\\ [11pt]
&&+\left(\partial_{c}\bar\partial_{\bar\gamma}\kappa_{\alpha\bar \beta} \right)
\phi^c\bar\rho^{\bar\gamma}\varrho^\alpha\bar\varrho^{\bar\beta}+
\left(\partial_{\gamma}\bar\partial_{\bar c}\kappa_{\alpha\bar \beta} \right)
\rho^\gamma\bar\phi^{\bar c}\varrho^\alpha\bar\varrho^{\bar\beta}
\\ [11pt]
&&-\left(\partial_{c}\partial_{\gamma}\kappa_{\alpha\bar \beta} \right)
\varphi^{c}\varrho^{\gamma}\rho^\alpha\bar\rho^{\bar\beta}-
\left(\bar\partial_{\bar c}\bar\partial_{\bar\gamma}\kappa_{\alpha\bar \beta} \right)
\bar\varrho^{\bar\gamma}\bar\varphi^{\bar c}\rho^\alpha\bar\rho^{\bar\beta}
\, ,
\end{array}
\ee
where $\kappa_{a\bar b}, {\cal F}_{ab}$ and  $\kappa_{\alpha \bar \beta}, {\cal G}_{\alpha\beta}$
were defined in \p{kap-FG}.

\end{document}